\documentclass[12pt]{article}
\linespread{1.2}     
\pdfoutput=1 
\usepackage[english]{babel}
\usepackage[utf8x]{inputenc}
\usepackage[T1]{fontenc}
\usepackage{authblk}

\usepackage[a4paper,top=3cm,bottom=2cm,left=3cm,right=3cm,marginparwidth=1.75cm]{geometry}

\usepackage{amsmath,amsthm,amssymb} 
\usepackage{natbib}
\usepackage{enumitem}
\usepackage{graphicx}
\usepackage{caption}
\usepackage{subfig}
\usepackage{color}
\usepackage[colorinlistoftodos]{todonotes}
\usepackage[colorlinks=true, allcolors=blue]{hyperref}
\usepackage{mathrsfs}
\newtheorem{theorem}{Theorem}[section]
\newtheorem{definition}{Definition}[section]
\newtheorem{lemma}[theorem]{Lemma}

\newcommand\norm[1]{\left\lVert#1\right\rVert}
\newcommand{\by}{\mathbf{y}}
\newcommand{\bY}{\mathbf{Y}}
\newcommand{\bb}{\mathbf{b}}
\newcommand{\bR}{\mathbf{R}}
\newcommand{\bD}{\mathbf{D}}
\newcommand{\bv}{\mathbf{v}}
\newcommand{\bS}{\mathbf{S}}
\newcommand{\bW}{\mathbf{W}}
\newcommand{\bM}{\mathbf{M}}
\newcommand{\bOmega}{\mathbf{\Omega}}

\newcommand{\bB}{\mathbf{B}}
\newcommand{\bQ}{\mathbf{Q}}
\newcommand{\bbeta}{\boldsymbol{\beta}}
\newcommand{\bSigma}{\mathbf{\Sigma}}
\newcommand{\bomega}{\boldsymbol{\omega}}
\newcommand{\bepsilon}{\boldsymbol{\epsilon}}
\newcommand{\bO}{\mathbf{O}}
\newcommand{\bK}{\mathbf{K}}
\newcommand{\bV}{\mathbf{V}}
\newcommand{\bP}{\mathbf{P}}
\newcommand{\bTheta}{\mathbf{\Theta}}
\newcommand{\btheta}{\boldsymbol{\theta}}
\newcommand{\bGamma}{\mathbf{\Gamma}}
\usepackage{amssymb}
\usepackage{dcolumn}
\usepackage{hyperref}
\usepackage{bm}
\usepackage{silence}
\WarningFilter{latex}{Text page 39 contains only floats}

\title{On Posterior Consistency of Bayesian Factor Models in High Dimensions}
\author[1]{Yucong Ma}
\author[1]{Jun S. Liu}
\affil[1]{Department of Statistics, Harvard University}

\begin{document}
\maketitle
\begin{abstract}
 As a principled dimension reduction technique, factor models have been widely adopted in social science, economics, bioinformatics, and many other fields. 
 However, in high-dimensional settings, conducting a `correct'  Bayesian factor analysis can be subtle since it requires both a careful prescription of the prior distribution and a suitable computational strategy.  
 In particular, we analyze the issues related to the attempt of being ``noninformative" for 
 elements of the factor loading matrix, especially 
 for sparse Bayesian factor models in high dimensions, and propose solutions to them. 
 We show here why adopting the $\sqrt{n}$-orthonormal factor assumption is appropriate and
 can result in a consistent posterior inference of the loading matrix conditional on the true idiosyncratic variance and the allocation of nonzero elements in the true loading matrix. We also provide an efficient Gibbs sampler to conduct the full posterior inference based on the prior setup from \cite{rovckova2016fast} and a uniform $\sqrt{n}$-orthonormal factor assumption on the factor matrix. 
\end{abstract}
\section{Introduction}

Factor models have been widely adopted in social science, economics, bioinformatics, and many other fields that need interpretable dimension reduction for their data. They serve as a formal way to encode 
high-dimensional observations as a linear combination of a few latent factors  plus idiosyncratic errors, which accommodate some intuitive interpretations and can sometimes be further validated by   additional  knowledge. In this article, we consider the following standard parametric formulation:  each  $G$-dimensional vector observation $\by_i$ (e.g., daily returns of $\sim$3000 U.S. stocks) is assumed to be linearly related to a  $K$-dimensional vector of latent factors $\bomega_i$ (e.g., 20 market factors) through a skinny tall factor loading matrix $\bB$:
  \begin{equation} \label{FactorModel}
  \by_i\mid \bomega_i,\bB,\bSigma\stackrel{i.i.d.}{\sim} \mathcal{N}_G(\bB\bomega_i,\bSigma), \ i=1,\ldots, n,
  \end{equation}
 and the idiosyncratic variance matrix $\bSigma$ is assumed to be diagonal as in the literature. In matrix form, we denote  the observations as $\bY=(\by_1,\cdots,\by_n)$, which is a $G\times n$ matrix, and the factors as a $K\times n$ matrix
 $\bOmega=(\bomega_1,\ldots,\bomega_n)$.
 The  factors are usually assumed to be independently and normally distributed: $\bomega_i\sim \mathcal{N}_K(\mathbf{0},\mathbf{I}_K)$. 
 
People are often interested in estimating the $G\times K$ loading matrix $\bB$ in order to gain insight on the correlation structure of the observations. 
Marginalizing out $\bomega_i$, we obtain  that  $[\by_i\mid \bB, \bSigma] \sim \mathcal{N}_G(\mathbf{0},\bB\bB^T+\bSigma),$ implying that the loading matrix $\bB$ is only identifiable up to  a right orthogonal transformation (rotationally invariant). It is thus rather difficult to pinpoint the factor loading matrix consistently, to determine the dimensionality of the  latent factors, or to design efficient algorithms to conduct a proper full Bayesian analysis of the model.


In recent years, researchers begin to investigate the effects of  sparsity assumptions on factor loadings, since a sparse loading matrix has a better interpretability and is easier to be identified. Considerable progresses have been made in the realm of sparse Bayesian factor analysis,  such as  \cite{fruehwirth2018sparse} and \cite{rovckova2016fast},
which are two representatives of the approaches using hierarchical continuous or discrete  spike-and-slab (SpSL) priors (i.e., a mixture of a concentrated distribution, which can be either continuous with a small variance or a point mass,  and a diffuse distribution) to represent the sparsity of the factor loading matrix.
The identifiability issues of sparse factor models are formally discussed in \cite{fruehwirth2018sparse}, who also designed an efficient Markov chain Monte Carlo (MCMC) procedure to simulate from the posterior distribution of an over-parameterized sparse factor model under the discrete SpSL prior. 
\cite{rovckova2016fast} proposed a sparse Bayesian factor analysis framework assuming independent (conditioned on the feature allocation) continuous SpSL priors on loading matrix's elements, under which a fast posterior mode-detecting strategy is proposed. 

Our work originates from a peculiar phenomena we observed when implementing a full Bayesian inference procedure for the factor model in (\ref{FactorModel}) under the SpSL prior from \cite{rovckova2016fast}.
Although the simulation studies of \cite{rovckova2016fast}  show a good consistency (up to trivial rotations) of the {\it maximum a posteriori} (MAP) estimation of the loading matrix in various large $G$  and large $n$ scenarios,  
we found that the corresponding Wald type consistency for the posterior distribution requires $n$ diverging at a faster rate than $s$ besides other numerical conditions on the true loading matrix that are generally required for justifying the posterior contraction   \citep{pati2014posterior}.
Here $s$ is the average number of nonzero elements of each column of the loading matrix $\bB$ and is usually much smaller than $G$.

When $s\geq n$ but is still much smaller than $G$, we observed from simulations a `magnitude inflation' phenomenon. 
That is, posterior samples of the loading matrix are inflated in the matrix norm compared to the data-generating loading matrix, and the extent of inflation is affected by the variance of the slab part of the SpSL prior ---the more diffuse the slab prior we use the more inflation we observe. This $s\geq n$ setting is not unusual in practice.  For example, the gene expression dataset analyzed in Section 7 contains measures of mRNA expression levels of $G=8932$ genes in 10 mice in four age periods ($n=40$). Each factor may correspond to a pathway and $s$ would be the average number of genes in each pathway, which can be much larger than $n$.


 The reason for this inflation phenomena is not immediately obvious since  the total number of observed quantities is $n\times G$, corresponding to $n$ observed $G$-dimensional vectors $\by_i$, $i=1,\ldots, n$, which is often much larger than $s\times K$, the number of nonzero elements in the loading matrix. Consider a special case with $K=1$, $G=s$, and $\bSigma=\mathbf{I}_G$ is known. Then 
 $\omega_i$ for $i=1,\ldots, n$ is a scalar, and $\bB=(b_1,\ldots,b_G)^T $ is a $G$-dimensional vector. Thus, each component $y_{ij}$ of $\by_i$ can be written as
 \[ y_{ij}= \omega_i b_j +\epsilon_{ij}, \ \epsilon_{ij}\stackrel{i.i.d.}{\sim} \mathcal{N}(0,1).\]
 Although the total number of unknown parameters in the model is $G+n$,  the number of independent scalar observations $y_{ij}$ is  $n\times G$, much larger than $G+n$. The model is unidentifiable because
 $\omega_i \times b_j =(\omega_i /c) \times (b_j c)$ for any $c\neq 0$. Requiring that the $\omega_i \stackrel{i.i.d.}{\sim} \mathcal{N}(0,1), i=1,\ldots, n$, can indeed alleviate the identifiability issue, but is not enough to ``tie down'' the $b_j$'s in the posterior distribution if there are too many of them, which manifests itself in the inflation phenomena. But how many is ``too many''? In this simple example,  there are ``too many'' if $G\geq n$ (Section 4).  Our later theoretical analysis shows that,  if  $s$, the column average number of nonzero elements of $\bB$, is no smaller than $n$,  the inflation will provably happen, although we observed empirically that the inflation occurs when $s \sim n$.
  An apparent remedy revealed from the above intuition and our later analysis is to further restrict the $\omega_i$'s, such as requiring that $\sum_{i=1}^n \omega_i^2 =n$. 
 
More generally speaking, due to a nearly non-identifiable structure of model (\ref{FactorModel}), an overdose of independent diffuse priors on loading matrix elements dilutes the signal  from the data. Problems with the use of diffuse priors in Bayesian inference when observation sample sizes are small relative to the number of parameters being estimated 
have been  studied  in the literature \citep{efron1973discussion,kass1996selection,natarajan1998gibbs}.
This problem for Bayesian factor analysis was also noted in \cite{ghosh2009default} and a practical solution was proposed without further theoretical investigations. 

The Ghosh-Dunson model allows each factor to have an unknown variance that follows an inverse Gamma prior and imposes the standard Gaussian prior on the loading matrix's elements. If one reallocates the variance of factors to the loading matrix side, this model is equivalent to reformatting the loading matrix as $\bB=\bQ\times\bD$ with $\bD$ being a diagonal matrix, and assuming that {\it a priori} elements in $\bQ$ are {\it i.i.d.} standard Gaussian and  diagonal elements of $\bD$ follow an inverse-Gamma distribution. When assigning non-informative priors to diagonal elements of $\bD$, elements of $\bB$ can also marginally have non-informative priors. Consequently, this hierarchical prior construction resolves the magnitude inflation problem by reducing the number of diffuse parameters, which is achieved by imposing a dependency between the magnitudes of $\bB$'s elements within the same column through $\bD$.

Some later work (e.g. \cite{Bhattacharya2011Sparse} and \cite{legramanti2020bayesian}) all follows this  loading matrix decomposition idea to induce dependencies among the magnitudes. However, informative priors are usually applied to $\bD$. In the fixed $p$ and $n\to \infty$ scenario, they develop posterior consistency results. But in the ``Large s, Small n'' scenario, these informative priors on $\bD$ can be influential for the magnitude of the loading matrix sampled from its posterior distribution as we verified in simulations.  



In this article, we  study asymptotic behaviors of the posterior distributions when an independent SpSL prior is employed for elements of the loading matrix and  a right-rotational invariant distribution is assumed on the factor matrix $\bOmega$ (i.e.,   $\bOmega$ and $\bOmega \bR$ follows the same distribution for all $n\times n$ orthogonal matrix $\bR$; this is different from the left-rotational invariance that makes $\bB$ nonidentifiable), to thoroughly understand the inflation phenomena. All consistency and convergence concepts in our work are in the frequentist (repeated-sampling) sense.
Take the loading matrix for example. If for any open neighborhood ${\cal N} $ of an entry of the true loading matrix (the magnitude of entries are at the constant order), the probability for a random draw from the posterior
distribution of that entry to fall in $\cal{N}$, as a function of the data in the repeated sampling sense,  converges to 1 almost surely as $n$ and $G$ go to infinity, we say that the posterior inference of the loading matrix is consistent, or simply that  {\it ``the posterior sample of the loading matrix converges to the truth.''}

We theoretically show that the observed inflation phenomena of the posterior distribution is due to the fact that the control of $\bOmega \bOmega^T/n$, or more specifically, the singular values of $\bOmega \bOmega^T/n$, is too weak  under only the normality assumption on the factor matrix $\bOmega$. 
Our analysis also suggests that employing a stronger control over $\bOmega \bOmega^T/n$ can result in consistent posterior distribution for the loading matrix in \cite{rovckova2016fast}'s framework under high dimensions.
Consequently, we consider the $\sqrt{n}$-orthonormal factor model: let $\bOmega/\sqrt{n}$ be uniform on the Stiefel manifold $St(K,n)$, which is the set of all orthonormal $K$-frames in ${\displaystyle \mathbb {R} ^{n}}$, or, equivalently,
the first $K$ rows of a $n\times n$ Haar-distributed random orthogonal matrix (there exists a unique right and left invariant Haar measure on the set of orthogonal matrices, see \cite{meckes2014concentration}).

From the modelling perspective, whenever the data is generated from the normal factor model~(\ref{FactorModel}) where $\bY=\bB\bOmega+\Delta$, it can also be viewed as generated by an $\sqrt{n}$-orthonormal factor model $\bY=(\bB\bK(\bOmega)/\sqrt{n})\times (\sqrt{n}\cdot\bV(\bOmega))+\Delta$ with loading matrix being $(\bB\bK(\bOmega)/\sqrt{n})$. Here $\bK(\bOmega)$ and $\bV(\bOmega)$ are from the LQ decomposition $\bOmega=\bK(\bOmega)\bV(\bOmega)$. The new loading matrix $(\bB\bK(\bOmega)/\sqrt{n})$ inherits the same generalized lower triangular structure \citep{fruehwirth2018sparse} from $\bB$ (if it posses any) and they are identical in the asymptotic sense as $n \to \infty$. Beside having the same model interpretability, we reveal in our work that the $\sqrt{n}$-orthonormal factor model enjoys two major advantages: 
\begin{enumerate}[label=(\alph*)]
    \item The posterior distribution is more robust against the choice of the  prior distribution for elements of the loading matrix in the ``Large s, Small n'' scenario. 
    The posterior consistency can hold for a broader set of prior choices including the one from \cite{rovckova2016fast}.
    \item Gibbs samplers for the normal factor model can be easily adapted to handle $\sqrt{n}$-orthonormal factors by only modifying the conditional sampling step for $\bOmega$. This modification requires negligible computational cost, but leads to a significant efficiency gain in MCMC sampling.
\end{enumerate}
For these reasons, in the high-dimensional ``Large s, Small n'' scenario, we propose to use the $\sqrt{n}$-orthonormal factor model in place of the normal factor model when doing full Bayesian inference on the population covariance matrix. Our proposed Gibbs sampler provides encouraging results in both simulations and a real data example.


The article is structured as follows. Section 2 introduces the Bayesian factor model of \cite{rovckova2016fast} 
and a corresponding basic Gibbs sampler. Under their framework, Section 3 illustrates by a synthetic example the `magnitude inflation' phenomenon of the posterior samples of the loading matrix and its dependence upon the slab prior. Section 4 provides theoretical explanations for the phenomenon. Section 5 reveals the connection between the phenomenon and the factor modeling assumption, and proposes the $\sqrt{n}$-orthonormal factor model whose posterior consistency can be guaranteed. 
By revisiting the synthetic example, Section 6 numerically verifies the consistency and robustness against prior,
 and provides a comparison between our method and alternative approaches from \cite{ghosh2009default} and \cite{Bhattacharya2011Sparse}.
Section 7 presents a real-data application. Section 8 concludes with a short discussion.

\section{Bayesian sparse factor model and inference}
\subsection{Prior settings for loading coefficient selection}\label{prior_setup}

 In order to enhance model identifiability and interpretability,  one often imposes a sparsity assumption for the loading matrix. Traditional approaches considered post-hoc rotations as well as regularization methods, see, e.g. 
 \cite{kaiser1958varimax} and \cite{carvalho2008high}. By integrating these two paradigms, \cite{rovckova2016fast} proposed a sparse Bayesian factor model framework along with a fast mode-identifying PXL-EM algorithm. In their framework, the sparsity assumption on factor loading matrix is encoded through a hierarchical SpSL prior, and we mostly follow their framework in this article.

Let $\beta_{jk}$ denote the $(j,k)^{th}$ element of the loading matrix $\bB$. We assume that {\it a priori} the $\beta_{jk}$'s follow a SpSL prior and are mutually independent given the hyper-parameters.  We introduce for each element a binary indicator  $\gamma_{jk}$ such that
 \begin{equation}\label{SSL prior}
p(\beta_{jk}|\gamma_{jk},\lambda_{0},\lambda_1)=(1-\gamma_{jk})\psi(\beta_{jk}|\lambda_{0})+\gamma_{jk}\psi(\beta_{jk}|\lambda_1), \ \ \lambda_{0}\gg \lambda_1
 \end{equation}
 where $\psi(\beta|\lambda)=\frac{\lambda}{2}\exp(-\lambda|\beta|)$ is a Laplace distribution, and 
\begin{equation}\label{IBP}
\gamma_{jk}|\theta_{k} \stackrel{ind}{\sim}  \text{Bernoulli}(\theta_{k}) \ \mbox{ and } \  \ \theta_{k}=\prod^k_{l=1} \nu_l,\  \ \nu_l \stackrel{i.i.d.}{\sim}  \text{Beta}(\alpha,1).
\end{equation} 
We note that $\theta_k$ decreases with respect to $k$.
We call $\bTheta=(\theta_1,\ldots,\theta_K)$ the {\it feature sparsity vector} and $\bGamma=(\gamma_{jk})_{G\times K}$ the {\it feature allocation} matrix.
The idiosyncratic variance matrix $\bSigma$ is assumed to be diagonal with elements $\sigma_j^2$ endowed with a conjugate prior: $\sigma_1^2,\cdots,\sigma_G^2\stackrel{i.i.d.}{\sim}  \text{Inverse-Gamma}(\eta/2,\eta \varepsilon/2)$. When $K=\infty$, the foregoing 
setup leads to an infinite factor model, for which some weak consistency results of the posterior distribution are established  for the fixed-$p$ scenario \citep{rovckova2016fast}. In simulations, they adopted a truncated approximation to the infinite factor model by setting $K$ to a pre-specified value larger than the true $K$ in data generation. Throughout the paper, we assume that $K$  is a pre-specified finite value.

\cite{rovckova2016fast} showed in simulations that the PXL-EM converges dramatically faster than the EM algorithm in finding the {\it maximum  a posteriori} (MAP) estimator (i.e., $\hat{\bB},\hat{\bSigma},\hat{\bTheta}$ that maximizes $\pi(\bB,\bSigma,\bTheta\mid\bY)$) and also demonstrated the consistency of MAP estimator in estimating the loading matrix under the ``Large s, Small n'' setting.
However, converting their method into a full Bayesian inference procedure turns out to be more subtle and challenging. 

\subsection{A standard Gibbs sampling procedure}\label{standard-Gibbs}
The full posterior distribution of the parameters, $(\bB,\bOmega,\bSigma,\bGamma,\bTheta)$, in a Bayes factor model can be written generically as
\begin{equation}\label{full_posterior}
\pi(\bB,\bOmega,\bSigma,\bGamma,\bTheta\mid \bY)\propto f(\bY|\bB,\bOmega,\bSigma)f(\bOmega) p(\bB|\bGamma)p(\bGamma|\bTheta)p(\bTheta)p(\bSigma),
\end{equation}
where $f$ denotes the likelihood, $p$ denotes prior,
$\bOmega$ denotes the $K \times n$ matrix with columns given by  $\bomega_i$, $\bGamma$ denotes the $G \times K$ matrix  with entries given by $\gamma_{jk}$ and $\bTheta$ denotes the $K$-dimensional feature sparsity vector formed by the $\theta_{k}$'s. Here observation $\bY$ represents a $G \times n$ matrix with columns $\by_i$.  

A standard 
Gibbs sampler \citep{gelfand1990sampling, liu2008monte, tanner1987calculation} for sampling from the full posterior distribution (\ref{full_posterior})
 iteratively update each component according to the following conditional distributions:

\begin{itemize}
\item Update $\bB$ iteratively as
\[
\pi(\beta_{jk}|\bbeta_{-jk},\bOmega,\bGamma,\bSigma)\propto \exp(-a_{jk}\beta_{jk}^2+b_{jk} \beta_{jk} - c_{jk}|\beta_{jk}|), \ \mbox{all } j, k;
\]
where $a_{jk}=\sum_{i=1}^n{\omega_{ik}^2}/2\sigma_j^2,b_{jk}=\sum_{i=1}^n{\omega_{ik}(y_{ij}-\sum_{l\neq k} {\beta_{jl}\omega_{il}})}/\sigma_j^2,c_{jk}=\lambda_1 \gamma_{jk}+\lambda_{0} (1-\gamma_{jk})$. 

This conditional density can be written as a mixture of two truncated normal density, and thus can be sampled efficiently.
\item Update $\bOmega$  component by component independently:
\[\bomega_i|\bB,\bSigma \sim \mathcal{N}_K((\mathbf{I}_K+\bB^T\bSigma^{-1}\bB)^{-1}\bB^T\bSigma^{-1}\by_i,(\mathbf{I}_K+\bB^T\bSigma^{-1}\bB)^{-1}), \ i=1,\ldots, n.\]
\item Update  $\bGamma$ component by component independently:
\[ \gamma_{jk} \mid \bB,\bTheta \sim \text{Bern}\left(\frac{{\lambda_1}\exp(-\lambda_1|\beta_{jk}|)\theta_{k}}{\lambda_{0}\exp(-\lambda_{0}|\beta_{jk}|)(1-\theta_{k})+{\lambda_1}\exp(-\lambda_1|\beta_{jk}|)\theta_{k}} \right), \]
for $j=1,\ldots, G; k=1,\ldots, K.$
\item Update $\bTheta$ iteratively: 
\[\theta_{k}|\bGamma,\btheta_{-k}\sim \text{Trunc-Beta}(\theta_{k+1},\theta_{k-1};\tilde{\alpha}_k,\tilde{\beta}_k)\]
where $\theta_{0}=1,\theta_{K+1}=0$ and
\begin{eqnarray*}
\tilde{\alpha}_k &=&
\begin{cases}
\#\{\gamma_{jk}=1,j=1,\cdots,G\}, \ k<K\\
\#\{\gamma_{jk}=1,j=1,\cdots,G\}+\alpha, \ k=K
\end{cases}, \\
\tilde{\beta}_k &=&
\#\{\gamma_{jk}=0,j=1,\cdots,G\}+1.
\end{eqnarray*}
Here Trunc-Beta$(a,b;{\alpha},{\beta})$ is the density proportional to $f_{Beta}(x;\alpha,\beta)I_{\{x\in [a,b]\}}$. 

\item Update $\bSigma$ along its diagonal:
\[\sigma_j^2|\bB,\bOmega \sim \text{Inverse-Gamma}\left(\frac{1}{2}(\eta+n),\frac{1}{2}(\eta \varepsilon +\sum_{i=1}^n {(y_{ij}-\bB_{j\cdot}^T\bomega_i)^2})\right)\]
where $\bB_{j\cdot}^T$ represents the $j$-th row vector of $\bB$.
\end{itemize}

Due to multimodality of the posterior distribution caused by the invariance of the likelihood function under matrix rotations (therefore only the sparsity prior can provide information to differentiate different modes) and the strong ties between the factor loading and common factors (thus making gaps among different modes very deep),  the  performance of this basic Gibbs sampler is very sticky and can only explore the neighborhood of the initial values. By initializing the sampler at some estimated mode such as the MAP estimator from the PXL-EM algorithm, however, this sampler appears to be a reasonable tool for exploring the local posterior behavior around the MAP. Indeed, more dramatic global MCMC transition moves are needed in order to have a fully functional MCMC sampler (see Appendix A).

\section{The magnitude inflation phenomenon}
\subsection{A synthetic example}\label{sec:synthetic}

To illustrate the magnitude inflation phenomenon in high dimensional sparse factor models, we 
generate a dataset from model (\ref{FactorModel})  similar to that of \cite{rovckova2016fast}, which consists of $n=100$ observations, $G=1956$ responses, and $K=5$ factors drawn from $\mathcal{N}(\mathbf{0}, \mathbf{I}_5)$. The true loading matrix is a block diagonal matrix as shown in the leftmost sub-figure of Figure~\ref{example1}, where black entries correspond to 1 and blank entries correspond to 0 (thus $s=500>n$). $\bSigma_{ture}$ is selected to be the identity matrix. With the synthetic dataset, we use the basic Gibbs sampler from section 2.2 with $\alpha=1/G,\eta=\epsilon=1, \lambda_{0}=20, \lambda_1 \in \{0.001,0.1\}$ and $K=8$, to explore the posterior distribution.  

Ten snapshots of heat-maps of $|\bB|$ in a Gibbs trajectory of 100 iterations initialized at the true value is displayed in Figure~\ref{example1}, from which we can conclude that the direction of each column vector in the loading matrix is well preserved during Gibbs iterations, whereas the absolute value of every  non-zero element increases over the iteration time and eventually stabilizes around a much larger value than the true one (about 4000 in our test setting with $\lambda_1=0.001$). As a demonstration of the inflation, Figure~\ref{example2}(a) displays the trace plot of $\log(|\beta_{1,1}|)$ with $\lambda=0.001$ and  0.1, respectively,
which also indicates the slow convergence of the basic Gibbs sampler using a small $\lambda_1$. 
The degree of inflation is influenced by the  ratio of the number of observations $n$ over  the average number  of nonzero elements of each column in the true factor loading matrix, $s$, as well as the choice of independent slab priors.
For example, when $n$ is increased from 100 to 1000, the posterior samples of the loading matrix stabilize around somewhere much closer to the true loading matrix.

\begin{figure}[!htb] 
\centering
\includegraphics[width=0.9\textwidth]{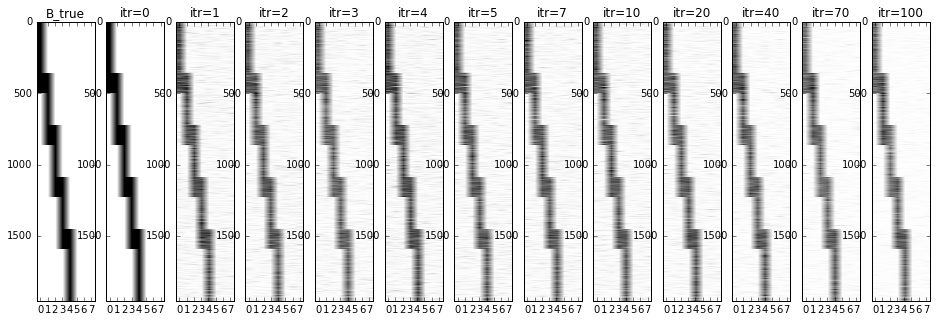} 
\caption{Heat-maps of $|\bB|$ in 100 iterations from the basic Gibbs sampler. The black entries correspond to 1 and blank entries correspond to 0. The directions of the columns of the loading matrix are well preserved throughout the Gibbs iterations.}\label{example1}
\end{figure}

\begin{figure}[!htb]
\centering
\subfloat[$n=100$]{\includegraphics[width=0.44\textwidth]{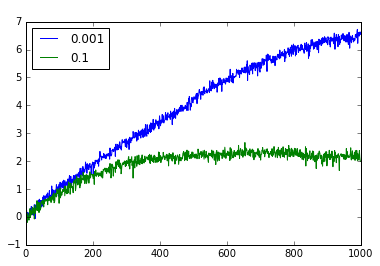}}
\subfloat[$n=1000$]{\includegraphics[width=0.45\textwidth]{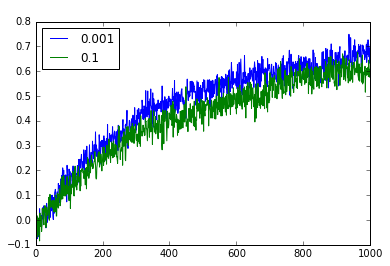}} 
\caption{Trace plot of $\log(|\beta_{1,1}|)$ from Gibbs sampler with $n=$ 100, 1000, and $\lambda_1=$  0.001, 0.1. The sampler of $\beta_{1,1}$ stabilizes around a much larger value than the truth, 1. The inflation of samples is more severe when $n$ is smaller or the variance of slab priors is larger.}\label{example2}
\end{figure}

By adding some scaling group moves \citep{Liu1999Parameter, Liu2000Generalised} to the basic Gibbs sampler (details can be found in Appendix A), which takes negligible computing time, we can greatly improve the convergence rate of the sampler, as demonstrated by contrasting Figure~\ref{example2} with Figure~\ref{example3}, of which the latter shows the trace plot for $\log(| \beta_{1,1}|)$  of the modified Gibbs sampler 
under various slab priors, for the case with $n=100$. 
Figure~\ref{example3} shows that as
$\lambda_1$ decreases from 0.5 to 0.001 so that the slab part becomes more and more diffused, the posterior mean of $|\beta_{1,1}|$ increases from around 2.5 to around 4000. Heat-maps of the factor loading are similar to Figure~\ref{example1} in all cases with $\lambda_1 \in \{0.001,0.01,0.1,0.5\}$, which means that the direction of each column vector in the loading matrix remains roughly the same throughout Gibbs iterations. 

\begin{figure}[!htb]
\centering
\includegraphics[width=0.5\textwidth]{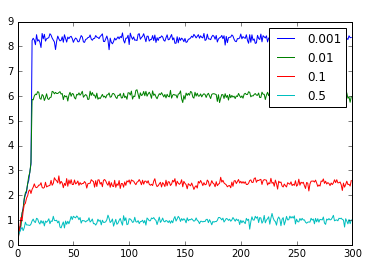}
\caption{\label{}Trace plot of $\log(|\beta_{1,1}|)$ from the modified Gibbs sampler with $\lambda_1$=0.001, 0.01, 0.1, 0.5 for the case with $n=100$. The modified Gibbs sampler has a much shorter burn-in process.}\label{example3}
\end{figure}

\subsection{Magnitude inflation and direction consistency}
Our numerical results revealed some perplexing consequences of using independent SpSL priors for a Bayesian factor model when $s \geq n$, which can be summarized as ``magnitude inflation'' and ``direction consistency''. While the former means that the posterior draws of the loading matrix are inflated entry-wise compared with the true loading matrix with the inflation magnitude dependent on how diffuse the slab prior is, the latter says that the direction of columns of posterior samples of the loading matrix somehow still converges to the true direction  as $ n, s \rightarrow \infty$. Intuitively, when the number of independent slab priors employed grows at a faster rate than the number of  observations, these priors will overwhelm the signal from data. The interesting observation is that the overdose of independent slab priors only dilutes the signal for the magnitude part in the loading matrix but has little impact on the identification of the column space. It is also worth mentioning, regardless of the occurrence of ``magnitude inflation'', the posterior distribution of the idiosyncratic variance matrix $\bSigma$ still has a nice concentration around the truth.


The inflation problem is quite a concern in practice when people try to use these posterior samples of the loading matrix for estimating the observation covariance structure. The low rank part ($\bB\bB^T$) in the estimated covariance matrix is usually exaggerated to some extent depending on the selected slab prior. Traditional literature tends to ignore the inflation problem by treating it as a consequence of the lack of  enough observations  (i.e., $n$ is too small compared to $s$) to guarantee posterior sample consistency. But this argument is inaccurate as we will show in next sections. Furthermore, we notice that, with the same amount of observations, the MAP estimator is rather precise in estimating the true loading matrix and directions of columns of the loading matrix are well captured by the posterior samples, provided that
the structure of the true feature allocation matrix is known, as in the synthetic example. 
This suggests that  the data provide sufficient information for recovering the true loading with the aid of knowing true feature allocation matrix. Thus, the magnitude inflation phenomena may be caused by some modeling issues. In the next two sections, we will provide some theoretical verifications for the magnitude inflation as well as a simple and provable remedy. 



\section{Posterior dependence on the slab prior}
It is generally recognized that in a  Bayesian factor model using an improper flat prior on elements of the loading matrix can be dangerous, and will lead to an improper posterior distribution when $G\geq n$. This is in fact not very intuitive, so we illustrate this point with a very simple example with $K=1$ factor, $n=2$ observations, and independent noises. Let the two vector observations be $\by_1$ and $\by_2$, each of $G$-dimensional. We can therefore write $\by_1= \bv_1+ \bepsilon_1$, and $\by_2 =\bv_2+\bepsilon_2$, with $\bepsilon_i \sim \mathcal{N}(\mathbf{0}, \mathbf{I}_G)$, which is very much like the canonical Normal means problem, with only one additional requirement: $\bv_1= \omega_1 \bb$ and $\bv_2 =\omega_2 \bb$. Here, the model assumes that the factor $\omega_j \sim \mathcal{N}(0,1)$, and $\bb$ is a $G$-dimensional loading matrix (vector). Thus, marginally we have $\by_i \sim N(\mathbf{0}, \mathbf{I}_G+\bb \bb^T), i=1,2$.

A peculiar thing is that in the canonical Normal means problem, if we assign flat priors to $\bv_1$ and $\bv_2$, their posterior distributions are simply $\mathcal{N}(\by_1, \mathbf{I}_G)$ and $\mathcal{N}(\by_2, \mathbf{I}_G)$, respectively, which are still proper although they yield inadmissible estimators for $\bv_1$ and $\bv_2$ when $G\geq 3$. However, with the factor model assumptions, which effectively reduce the number of parameters from $2G$ to $G$, the posterior distribution for $\bb$ becomes improper if  we assign $\bb$ a flat prior and $G\geq 2$.

Mathematically equivalent phenomena occur even in the simple univariate Gaussian mean estimation: let $y\sim \mathcal{N}(\alpha\beta, 1)$. 
If we assume that $\alpha\sim \mathcal{N}(0,1)$, then, when assuming a flat prior, the posterior distribution of $\beta$ is proportional to
$(\beta^2+1)^{-1/2}\exp\left\{-(2(\beta^2+1))^{-1}y^2\right\}$, which is a non-integrable function, thus  improper. But if we assume a proper prior on $\beta$, its posterior distribution becomes proper but its posterior variance relies heavily on its prior variance.  A simple fix of the problem is to realize that we cannot identify both parameters simultaneously and have to let $\alpha$ take a fixed value.
These phenomena also happen for the general factor models in certain settings, and our goal is to understand how these issues play out in high dimensional factor models and whether certain intuitive remedies work both theoretically and computationally for these more complex cases.  

For the general factor model, we 
can similarly marginalize out the factor variables and derive the posterior  distribution of the loading matrix under the  flat prior: 
$$\pi(\bB\mid \bY,\bSigma)\propto |\bB\bB^T+\bSigma|^{-n/2}\exp\left\{-\frac{1}{2}\text{tr}\left[(\bB\bB^T+\bSigma)^{-1}(\sum_{i=1}^n \by_i \by_i^T)\right]\right\},$$
where the exponential term is both upper and lower bounded by some functions of $\bY$ and $\bSigma$. Term $ |\bB\bB^T+\bSigma|^{-n/2}$ is lower bounded by $(||\bB||_F^2+\lambda_{max}(\bSigma))^{-\frac{n\times K}{2}}$, where $||\bB||_F$ represents the Frobenius norm of $\bB$, and $\lambda_{max}(\bSigma)$ denotes the largest eigenvalue of $\bSigma$. When the dimension of $\bB$, which is $G\times K$, is no smaller than $n\times K$, $\pi(\bB|\bY,\bSigma)$ will integrate to infinity in the complement region of any bounded set in $\mathcal{R}^{G\times K}$, 
leading to an improper posterior distribution. If we impose a proper but diffuse slab prior instead of the improper flat prior on elements of $\bB$, 
the posterior distribution can still be very sensitive to the variance of slab prior, as seen in Figure~\ref{example3}.

To formalize this intuition for general Bayesian factor models, we provide the following theorem on the divergence of the posterior distribution of the loading matrix if we use a sequence of increasingly diffused ``slab'' priors.  Note that for  theorems in Section 4, we do not require $\bSigma$ to be diagonal. To cover generic prior choices, we replace (\ref{SSL prior}) with  
\begin{equation}\label{general SSL}
p(\beta_{jk}|\gamma_{jk})=(1-\gamma_{jk})\psi(\beta_{jk})+\gamma_{jk}\phi(\beta_{jk})
 \end{equation}
 where $\psi$ denotes the spike prior density and $\phi$ denotes the slab prior density.
 

\begin{theorem}
Let $\{\phi_m\}_{m=1,\cdots}$ be a sequence of densities such that $lim_{m\rightarrow\infty} \phi_m(\beta)=0$ for every $\beta \in \mathcal{R}$ and there exists a constant $C \in (0,1)$ such that $\phi_m(\beta)>C \ max_{\beta}(\phi_m(\beta))$ holds for every $\beta$ in some non-decreasing Borel sets $S_m$ that converges to $\mathcal{R}$ as $m \rightarrow \infty$. If $s=||\bGamma||^2_F/K \geq n$, then for any fixed finite-measure Borel set $S$, $\lim_{m\rightarrow\infty} P(\bB \in S|\bY,\bSigma,\bGamma,m)=0$, where $[\bB\mid\bY,\bSigma,\bGamma,m]$ is based on the posterior distribution from model (\ref{FactorModel}) with normally distributed factors and $\phi_m$ as the slab part in the SpSL prior on loading matrix elements.
\end{theorem}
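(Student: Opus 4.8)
The plan is to bound the posterior mass of any fixed finite-measure Borel set $S$ by comparing the unnormalized posterior integrated over $S$ with the unnormalized posterior integrated over a well-chosen ``escaping'' region, and to show that the latter dominates the former as $m\to\infty$. Writing the unnormalized posterior of $\bB$ (after marginalizing out the normal factors $\bOmega$) as
\[
q_m(\bB)\;=\;|\bB\bB^T+\bSigma|^{-n/2}\exp\!\left\{-\tfrac12\operatorname{tr}\!\left[(\bB\bB^T+\bSigma)^{-1}\textstyle\sum_{i=1}^n\by_i\by_i^T\right]\right\}\prod_{j,k}\big[(1-\gamma_{jk})\psi(\beta_{jk})+\gamma_{jk}\phi_m(\beta_{jk})\big],
\]
I would first record the two deterministic envelopes for the Gaussian-likelihood factor that are already used in the text: the exponential term lies in a fixed interval $[\underline{e},\overline{e}]$ depending only on $\bY,\bSigma$ (the trace is nonnegative, giving the upper bound $1$, and is bounded above by $\lambda_{\max}(\bSigma^{-1})\operatorname{tr}(\sum\by_i\by_i^T)$, giving a positive lower bound), and the determinant term satisfies $|\bB\bB^T+\bSigma|^{-n/2}\ge(\|\bB\|_F^2+\lambda_{\max}(\bSigma))^{-nK/2}$ while also being $\le\lambda_{\min}(\bSigma)^{-nK/2}$. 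So $q_m(\bB)$ is sandwiched, up to $\bY,\bSigma$-constants, between $(\|\bB\|_F^2+c_1)^{-nK/2}\prod_{j,k}[\cdots]$ and $c_2^{-nK/2}\prod_{j,k}[\cdots]$.

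Next I would construct the escaping region. Let $N=\|\bGamma\|_F^2$ be the number of indices with $\gamma_{jk}=1$; by hypothesis $N/K=s\ge n$, so $N\ge nK$. Fix a radius $R>0$ and consider the region $A_R$ where every ``slab'' coordinate $\beta_{jk}$ (those with $\gamma_{jk}=1$) lies in $S_m\cap[-R,R]$ and every ``spike'' coordinate lies in a fixed bounded set of positive $\psi$-measure. On $A_R$ the product of priors is at least $C^{N}\big(\max_\beta\phi_m(\beta)\big)^{N}\cdot(\text{positive }\psi\text{-constant})$, and $\|\bB\|_F^2\le NR^2+(\text{spike part})=:\rho_R$, so the lower envelope gives $q_m(\bB)\ge \mathrm{const}\cdot(\rho_R+c_1)^{-nK/2}\,\big(\max_\beta\phi_m\big)^{N}\,C^N$ throughout $A_R$, whose Lebesgue volume is $\big(\mathrm{vol}(S_m\cap[-R,R])\big)^{N}\cdot(\text{const})\to (2R)^{N}\cdot(\text{const})$ as $m\to\infty$ since $S_m\uparrow\mathcal R$. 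Hence $\int_{A_R}q_m\ge c(R)\,\big(\max_\beta\phi_m\big)^{N}$ for all large $m$, where $c(R)>0$ does not depend on $m$. On the other hand, on the fixed set $S$ the upper envelope gives $q_m(\bB)\le c_2^{-nK/2}\prod_{j,k}[(1-\gamma_{jk})\psi(\beta_{jk})+\gamma_{jk}\phi_m(\beta_{jk})]$; bounding each spike factor by $\psi$ and each slab factor by $\max_\beta\phi_m$ and then integrating the remaining $\psi$-factors over $S$ (finite measure) yields $\int_S q_m\le c_3\,|S|\,\big(\max_\beta\phi_m\big)^{N}$. Therefore
\[
P(\bB\in S\mid\bY,\bSigma,\bGamma,m)\;\le\;\frac{\int_S q_m}{\int_{A_R}q_m}\;\le\;\frac{c_3\,|S|\,(\max_\beta\phi_m)^{N}}{c(R)\,(\max_\beta\phi_m)^{N}}\;=\;\frac{c_3\,|S|}{c(R)},
\]
and this holds for every $R$; since I will show $c(R)\to\infty$ as $R\to\infty$, letting $R\to\infty$ after $m\to\infty$ forces the posterior mass of $S$ to $0$.

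The step I expect to be the main obstacle is making the last claim rigorous, namely that $c(R)$ actually grows without bound — i.e. that the gain from the volume factor $(2R)^{N}$ (more precisely, a product of one-dimensional lengths over the $N\ge nK$ slab coordinates) strictly beats the loss from the determinant penalty $(\rho_R+c_1)^{-nK/2}$, which behaves like $R^{-nK}$ up to constants. Schematically one has $c(R)\sim R^{N}\cdot R^{-nK}=R^{\,N-nK}\ge R^{0}$, so the exponent is nonnegative exactly because $N\ge nK$ — but equality $N=nK$ only gives a bounded $c(R)$, not a divergent one, so the bare counting argument above must be sharpened. The fix is to be more careful with the determinant term: for $\bB$ with a slab-supported block one should use $|\bB\bB^T+\bSigma|^{-n/2}\asymp\prod_{\ell=1}^{K}(\text{singular value}_\ell(\bB)^2+O(1))^{-n/2}$, and choosing $A_R$ so that the slab coordinates are spread so as to keep all $K$ singular values comparable to $R$ gives a penalty of order $R^{-nK}$ against a volume of order $R^{N}$, and one needs $N>nK$, i.e. $s>n$, for strict divergence; for the boundary case $s=n$ one instead lets the slab entries drift to $\pm\infty$ along a direction where $\phi_m$ stays within its constant-fraction set $S_m$ and exploits that $(\rho_R+c_1)^{-nK/2}$ times the $m\to\infty$ volume of $S_m\cap[-R,R]^{N}$ still tends to infinity because $\mathrm{vol}(S_m\cap[-R,R])$ can be taken $\ge 2R(1-o_m(1))$ with $R$ chosen after $m$. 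I would therefore organize the argument as: (i) fix $m$ large enough that $\mathrm{vol}(S_m\cap[-R,R])\ge R$ for the eventual $R$; (ii) choose $A_R$ with controlled singular values; (iii) compute the ratio bound above; (iv) send $R\to\infty$. Handling the geometry in step (ii) — choosing an $R$-indexed family of loading matrices whose singular values, Frobenius norm, and the measure of the coordinate box are all simultaneously under control — is the part requiring genuine care; everything else is the routine envelope bookkeeping sketched above.
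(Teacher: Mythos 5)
Your overall route is the same as the paper's: bound $P(\bB\in S\mid\cdot)$ by the ratio of the unnormalized posterior integrated over $S$ (upper-bounded by a constant times $(\max_\beta\phi_m)^{N}$ with $N=\|\bGamma\|_F^2$) to the integral over an escaping region where the slab coordinates live in $S_m$ (lower-bounded using $\phi_m>C\max_\beta\phi_m$ and the determinant envelope $(\|\bB\|_F^2+\lambda_{\max}(\bSigma))^{-nK/2}$). The paper does exactly this. However, there is a genuine gap in your treatment of the boundary case $s=n$, i.e.\ $N=nK$, which the theorem explicitly includes. Because you bound the determinant penalty over the whole box $A_R$ by its worst value $(\rho_R+c_1)^{-nK/2}\asymp R^{-nK}$ and multiply by the volume $(2R)^{N}$, you get $c(R)\asymp R^{N-nK}$, which is merely bounded when $N=nK$ --- as you yourself notice. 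But your proposed repair does not work: the claim that ``$(\rho_R+c_1)^{-nK/2}$ times the volume of $S_m\cap[-R,R]^{N}$ still tends to infinity'' is false in that case, since the product is $O(1)$ no matter in which order you send $m$ and $R$ to infinity; and the digression about spreading slab entries to control all $K$ singular values is unnecessary (the Frobenius-norm envelope already holds pointwise everywhere).

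The correct fix, and what the paper actually does, is to not discretize by a worst-case-over-the-box bound at all: after invoking monotone convergence to replace $S_m^{N}$ by $\mathbb{R}^{N}$ in the limit $m\to\infty$, one integrates the envelope exactly,
\[
\int_{\bbeta_1\in\mathbb{R}^{N}}\bigl(\|\bbeta_1\|^2+R^2+\lambda_{\max}(\bSigma)\bigr)^{-nK/2}\,d\bbeta_1
\;=\;\mathrm{const}\cdot\int_0^{\infty} r^{N-1}\bigl(r^2+R^2+\lambda_{\max}(\bSigma)\bigr)^{-nK/2}\,dr ,
\]
whose integrand behaves like $r^{N-1-nK}$ at infinity and therefore diverges precisely when $N\geq nK$ --- logarithmically in the boundary case $N=nK$. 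This logarithmic divergence is exactly the factor your box-times-worst-case bound throws away. With that polar-coordinate computation in place of your step (ii)--(iv), the rest of your envelope bookkeeping matches the paper's argument and the proof goes through for all $s\geq n$.
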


Theorem 4.1 partially explains the magnitude inflation and the dependence of the inflation rate on the choice of the slab prior. Let $S$ be any fixed $G \times K$ dimensional ball. The theorem implies that the probability of a posterior sample $\bB$, conditional on $\bY,\bSigma,\bGamma,m$, having a matrix norm smaller than any constant goes to zero as we use a series of slab priors $\{\phi_m\}_{m=1,2,\cdots}$ that is  increasingly diffused. In a general sense, it can also be understood as the convergence in distribution of $\bB|\bY,\bSigma,\bGamma,m$ towards $\bB|\bY,\bSigma,\bGamma,\infty$ (conditional posterior of $B$ with flat slab prior), which is a point mass at infinity when $s\geq n$. For cases such that $\bB|\bY,\bSigma,\bGamma,\infty$ is indeed proper, e.g., when $s \ll n$ or the assumed distribution on the factors is changed, we strictly have the convergence of $\bB|\bY,\bSigma,\bGamma,m$ towards $\bB|\bY,\bSigma,\bGamma,\infty$ in distribution as stated in the next theorem. Therefore, if the posterior distribution of the loading matrix is proper under a flat slab prior and the Bayesian consistency is justified in this situation,  we have approximately the same consistency when employing a reasonably diffuse slab prior. 

\begin{theorem}
Consider model (\ref{FactorModel}) without the normality assumption for the factors. Let $\{\phi_m\}_{m=1,\cdots}$ be a sequence of prior densities maximized at 0 such that, $\forall \beta\in \mathbb{R}$, $\lim_{m \rightarrow\infty} \phi_m(\beta)\phi^{-1}_m(0) = 1$. Let $\pi(\bB|\bY,\bSigma,\bGamma,m)$ denote the conditional posterior density of $\bB$ under a SpSL prior for its elements, with the spike density $\psi$ and the slab density $\phi_m$, and let $\pi(\bB|\bY,\bSigma,\bGamma,\infty)$ be the 
one corresponding to the flat slab prior (this is appropriate since  the indicator matrix $\bGamma$ is conditioned on). 
If $\pi(\bB|\bY,\bSigma,\bGamma,\infty)$ is integrable, then 
$\bB|\bY,\bSigma,\bGamma,m$ converges to $\bB|\bY,\bSigma,\bGamma,\infty$  in distribution as $m \rightarrow \infty$. 
\end{theorem}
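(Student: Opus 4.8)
The plan is to reduce the assertion to pointwise convergence of the normalized conditional posterior densities and then invoke Scheffé's lemma. First I would make the conditional posterior of $\bB$ explicit. Since $\bGamma$ is conditioned on, the induced prior on $\bB$ is the product prior $p(\bB\mid\bGamma,m)=\prod_{\gamma_{jk}=0}\psi(\beta_{jk})\prod_{\gamma_{jk}=1}\phi_m(\beta_{jk})$, and integrating the latent factors out against their prior $f(\bOmega)$ (no normality needed) gives
\[
\pi(\bB\mid\bY,\bSigma,\bGamma,m)\ \propto\ L(\bB)\prod_{\gamma_{jk}=0}\psi(\beta_{jk})\prod_{\gamma_{jk}=1}\phi_m(\beta_{jk}),\qquad L(\bB):=\int f(\bY\mid\bB,\bOmega,\bSigma)\,f(\bOmega)\,d\bOmega .
\]
The marginal likelihood $L$ depends on neither $m$ nor the prior on $\bB$, and satisfies $0<L(\bB)\le(2\pi)^{-nG/2}|\bSigma|^{-n/2}$. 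Writing $h_\infty(\bB):=L(\bB)\prod_{\gamma_{jk}=0}\psi(\beta_{jk})$ for the unnormalized flat-slab posterior, the hypothesis is exactly that $h_\infty$ is integrable, so its normalizing constant $Z_\infty:=\int h_\infty$ lies in $(0,\infty)$.

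Next I would divide numerator and denominator by $\phi_m(0)^{\,||\bGamma||^2_F}$ (note $||\bGamma||^2_F=sK$ is the number of slab entries and $\phi_m(0)=\max_\beta\phi_m(\beta)>0$). Setting $g_m(\beta):=\phi_m(\beta)/\phi_m(0)$ and $h_m(\bB):=h_\infty(\bB)\prod_{\gamma_{jk}=1}g_m(\beta_{jk})$, we have $\pi(\bB\mid\bY,\bSigma,\bGamma,m)=h_m(\bB)/Z_m$ with $Z_m:=\int h_m$. The assumption $\lim_m\phi_m(\beta)\phi_m^{-1}(0)=1$ says $g_m\to1$ pointwise, while $\phi_m$ being maximized at $0$ gives $0\le g_m\le1$; hence $h_m\to h_\infty$ pointwise and $0\le h_m\le h_\infty$ with $h_\infty$ integrable. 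Dominated convergence then yields $Z_m\to Z_\infty\in(0,\infty)$, and consequently $\pi(\bB\mid\bY,\bSigma,\bGamma,m)=h_m/Z_m\to h_\infty/Z_\infty=\pi(\bB\mid\bY,\bSigma,\bGamma,\infty)$ for Lebesgue-a.e.\ $\bB\in\mathbb{R}^{G\times K}$.

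Finally, since both sides are genuine probability densities on $\mathbb{R}^{G\times K}$ and the convergence is pointwise a.e., Scheffé's lemma promotes it to convergence in $L^1$, i.e.\ in total variation, which in particular implies convergence in distribution of $\bB\mid\bY,\bSigma,\bGamma,m$ to $\bB\mid\bY,\bSigma,\bGamma,\infty$. The only delicate point — and the main obstacle — is the domination step: it rests squarely on $\phi_m$ being maximized at $0$ (so that $g_m\le1$ and $h_\infty$ serves as a fixed integrable envelope) together with the hypothesized integrability of $h_\infty$ (so that the limiting normalizer $Z_\infty$ is strictly positive and $h_m/Z_m$ is well behaved in the limit); dropping the ``maximized at $0$'' assumption would force one to supply some other uniform-in-$m$ integrable envelope for $\phi_m/\phi_m(0)$. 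The remaining steps are routine bookkeeping.
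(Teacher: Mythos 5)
Your proposal is correct and follows essentially the same route as the paper's proof: normalize the slab by $\phi_m(0)$, use the fact that $\phi_m$ is maximized at $0$ to dominate the unnormalized density by the integrable flat-slab version, and apply dominated convergence to both the numerator and the normalizing constant. The only cosmetic difference is that you pass through pointwise density convergence and Scheff\'e's lemma to get total variation convergence, whereas the paper applies dominated convergence setwise directly; both yield the claimed convergence in distribution.
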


\section{Model modifications and posterior consistency}
To concentrate on the magnitude inflation and direction consistency problems,
we study behaviors of the posterior distribution of the Bayesian factor model assuming that the diagonal idiosyncratic covariance 
matrix $\bSigma$ and the true number of factors (for the basic factor model) or the true feature allocation matrix $\bGamma$ (for the sparse factor model) are known. 
In contrast to the solution provided by \cite{ghosh2009default}, which imposes dependency among the magnitudes of loading matrix elements within the same column through prior setup,
we restrict ourselves to a special class of SpSL priors for loading matrix elements, which have 
a  point mass at zero as the spike and a flat (limit of a sequence of increasingly diffused distributions)  slab part. This is a natural choice for being non-informative and is always appropriate when considering the conditional posterior distributions given $\bGamma$.
We focus on studying the connection between posterior consistency and the factor assumption, and demonstrate why $\sqrt{n}$-orthonormal factor model is a natural choice under high dimensions.




\textbf{Notations:} 
 Let $H_n$ denote the Haar measure (i.e., uniform distribution) on the space of $n\times n$ orthogonal matrices and let $m_n$ be the uniform measure on the Stiefel manifold $St(K,n)$. 
 Let  $\bM_{i \cdot}$ and $\bM_{\cdot j}$ denote the $i$-th row and the $j$-th column of  matrix $\bM$, respectively, as column vectors, and let $\bM_{i,j}$ denote the element at $i$-th row and $j$-th column of $\bM$. $\bM_{i_1:i_2}$ denotes the sub-matrix formed by row $i_1$-th to $i_2$ and  $\bM_{i_1:i_2,j_1:j_2}$ denote the sub-matrix formed by rows $i_1$-th to $i_2$ and columns $j_1$ to $j_2$. Notation $\bM^{\bot}$ represents an orthogonal complement (not unique) of $\bM$ when $\bM$ is not a square matrix, $\mathcal{P}_{(\cdot)}$ represents the projection mapping towards the row vector space of a matrix and $\bP_{(\cdot)}$ is the projection matrix of the mapping. Let
 $\lambda_{max}(\cdot)$ and $\lambda_{min}(\cdot)$ denote the largest and smallest singular values of a matrix, and let $\lambda_{k}(\cdot)$ denote the $k$-th largest singular values. The $L_2$ norm is denoted by $\norm{\cdot}$, the Frobenius norm is denoted by $\norm{\cdot}_F$, and the outer product is 
  ``$\otimes$".

\subsection{The basic Bayesian factor model}
We show the posterior consistency of the loading matrix by first studying the posterior consistency of the factor matrix $\bOmega$ (defined in section 2.2). 
It is easy to see that, with a flat prior on every element of $\bB$, the  posterior distribution of $\bB$ and $\bOmega$ can be written as:
\begin{equation}\label{posterior_B}
\bB_{j\centerdot }|\bY,\bOmega, \bSigma \stackrel{ind}{\sim}
\mathcal{N}((\bOmega\bOmega^T)^{-1}\bOmega \bY_{j\centerdot},\sigma_j^2(\bOmega\bOmega^T)^{-1})
\end{equation}
\begin{equation}\label{posterior_Omega}
\pi(d\bOmega|\bY,\bSigma)\propto \ |\bOmega\bOmega^T|^{-G/2}\ \exp\left(\sum_{j=1}^G \frac{1}{2\sigma_j^2}\bY_{j\centerdot}^T\bOmega^T(\bOmega\bOmega^T)^{-1}\bOmega \bY_{j\centerdot}\right)p_{\bOmega}(d\bOmega)
\end{equation}
where $p_{\bOmega}$ denotes the prior distribution of $\bOmega$ and ``$\stackrel{ind}{\sim}$" means that the $\bB_{j\cdot}$'s  are mutually independent. 

For this section, we no longer restrict the factors in $\bOmega$ to follow the standard Normal distribution, only requiring its distribution $p_{\bOmega}$  to satisfy the following two conditions: (a) $cov(\bomega_i)=\mathbf{I}_K$, so as to keep the marginal covariance structure of $\bY$ unchanged;
(b) right rotational-invariant (i.e.,  $\bOmega$ and $\bOmega \bR$ follow the same distribution $\forall$ $n\times n$ orthogonal matrix $\bR$). 
Two non-Gaussian examples are: (i) each row of $\bOmega$ follows independently a uniform distribution on the $\sqrt{n}$-radius sphere; (ii) $\bOmega/\sqrt{n}$ is uniform on the Stiefel manifold $St(K,n)$, i.e., $\bOmega/\sqrt{n}$ is the first $K$ rows of a Haar-distributed $n \times n$ orthogonal random matrix. A straightforward characterization of condition (b) can be made through the LQ decomposition (the transpose of the QR decomposition). Suppose the LQ decomposition of $\bOmega=\bK(\bOmega)\bV(\bOmega)$ is done by 
the Gram–Schmidt orthogonalization starting from the first row of $\bOmega$,
resulting in a $K\times K$ lower triangular matrix $\bK(\bOmega)$ and a $K \times n$ orthonormal matrix $\bV(\bOmega)$. Then,  requirement (b) enables us to generate $\bOmega$ from  $p_{\bOmega}$ by generating a pair of $\bK(\bOmega)$ and $\bV(\bOmega)$ from two independent distributions---a marginal distribution on $\bK(\bOmega)$ (denoted  as $p_{\bK}$) and a uniform distribution on the Stiefel manifold $St(K,n)$ for $\bV(\bOmega)$.

Using the LQ decomposition, we can rewrite expression (\ref{posterior_Omega})  as \begin{equation}\label{posterior_Omega_}
\begin{split}
\pi(d\bOmega|\bY,\bSigma)&\propto \Big(|\bK(\bOmega)\bK(\bOmega)^T|^{-G/2}p_{\bK}(d\bK(\bOmega))\Big)\\
&\times \Big(\text{exp}\Big(\sum_{j=1}^G\frac{1}{2\sigma_j^2}\|\mathcal{P}_{\bV(\bOmega)}(\bY_{j\centerdot})\|^2\Big)m(d \bV(\bOmega))\Big)
\end{split}
\end{equation}
since $|\bOmega\bOmega^T|=|\bK(\bOmega)\bK(\bOmega)^T|$, and $\bY_{j\centerdot}^T\bOmega^T(\bOmega\bOmega^T)^{-1}\bOmega \bY_{j\centerdot}$ is the square of the length of $\bY_{j\centerdot}$'s projection on the row space of $\bOmega$.  Therefore,  $\bK(\bOmega)$ and $\bV(\bOmega)$ are independent {\it a posteriori}, and
\begin{equation}\label{posterior_K}
\pi(d\bK(\bOmega)|\bY,\bSigma) \propto |\bK(\bOmega)\bK(\bOmega)^T|^{-G/2} p_{\bK}(d\bK(\bOmega))
\end{equation}
\begin{equation}\label{posterior_V}
\pi(d \bV(\bOmega)|\bY,\bSigma)\propto \text{exp}\Big(\sum_{j=1}^G\frac{1}{2\sigma_j^2}\|\mathcal{P}_{\bV(\bOmega)}(\bY_{j\centerdot})\|^2\Big)m(d \bV(\bOmega)).
\end{equation}

Equation (\ref{posterior_K}) implies that $\bK(\bOmega)$ may have an improper posterior distribution because the likelihood term $|\bK(\bOmega)\bK(\bOmega)^T|^{-G/2}$  creates ``attractors'' when the determinant of $\bK(\bOmega)\bK(\bOmega)^T$ is close to  0. Therefore, with large enough $G$, the right-hand side of 
 (\ref{posterior_K}) explodes to infinity fast enough  around the attractors and becomes non-integrable, thus leading to an improper posterior distribution for $\bK(\bOmega)$. In contrast, since $ \text{exp}\left(\sum_{j=1}^G\frac{1}{2\sigma_j^2}\|\mathcal{P}_{\bV(\bOmega)}(\bY_{j\centerdot})\|^2\right)$ is upper bounded by  $\text{exp}\left(\sum_{j=1}^G\frac{1}{2\sigma_j^2}\|\bY_{j\centerdot}\|^2\right)$, the posterior distribution (\ref{posterior_V}) for $\bV(\bOmega)$ is always proper, based on which we can further derive posterior consistency of the row vector space of $\bOmega$. 

\subsubsection{Consistency of the row vector space of the factor matrix}
 The consistency of row vector space of $\bOmega$ is intuitive from (\ref{posterior_V}) for the noiseless case (i.e., $\bY=\bB_0 \bOmega_0$), since the exponential term in (\ref{posterior_V}) is uniquely maximized when the row vector spaces of $\bOmega$ and $\bOmega_0$ coincide. As in an annealing algorithm, the exponential term enforces the growing contraction towards the maximum point (where row spaces of $\bOmega$ and $\bOmega_0$ coincide) as $G$ increases. 
 On the other hand, the prior measure in a neighborhood of the row vector space of $\bOmega_0$ (defined as $p_{\bOmega}(\{\bOmega:||\bV(\bOmega_0)^{\bot}\bV(\bOmega)^T||_F<\epsilon\})$)  gets more diffused as $n$ grows. Therefore, in an asymptotic regime with $G,n \rightarrow \infty$, and under some mild conditions on the growing rate of $G$ and $n$ to ensure that the diffusion is slower than the contraction, the consistency of the  row vector space of $\bOmega$ follows immediately as summarized below. Detailed proofs of the lemma and theorem can be found in Appendix C.3 and C.4. 
 
 \begin{lemma}
 Let $\bB_{0,G}$ be a $G \times K$ matrix, $\bOmega_{0,n}$ be a $K \times n$ matrix, and $\bSigma_G$ be a known $G\times G$ diagonal matrix. Suppose  noiseless data generated as $\bY=\bB_{0,G} \bOmega_{0,n}$ are given. We, however,  model each column of $\bY$  as mutually independent and $\bY_{\cdot i} \sim \mathcal{N}_G(\bB \bOmega_{\cdot i},\bSigma_G)$,  $i=1,\cdots,n$. With a flat prior on each of $\bB$'s elements and a right-rotational invariant prior on $\bOmega$, we have the following inequality for the posterior distribution of $\bOmega$:
 \[
 \begin{split}
 &P(||\bV(\bOmega_{0,n})^{\bot}\bV(\bOmega)^T||_F>\epsilon|\bY, \bSigma_G)\\
 \leq  \Big(1+m_n(\{\bV: ||&\bV_0 \bV^T||_F < \frac{\epsilon}{L}\})
 \times \exp(\frac{3}{8}\epsilon^2 \lambda_{min}(\bSigma_G^{-1/2}\bB_{0,G}\bK(\bOmega_{0,n}))^2)\Big)^{-1}
 \end{split}
 \]
 where $L=2\lambda_{max}(\bSigma_G^{-1/2}\bB_{0,G}\bK(\bOmega_{0,n}))/\lambda_{min}(\bSigma_G^{-1/2}\bB_{0,G}\bK(\bOmega_{0,n}))$ and $\bV_0$ is any fixed $K\times n$ orthonormal matrix.
 \end{lemma}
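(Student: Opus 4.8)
The plan is to reduce everything to the marginal conditional posterior of $\bV(\bOmega)$ given in~(\ref{posterior_V}). Because the factorization~(\ref{posterior_Omega_}) makes $\bK(\bOmega)$ and $\bV(\bOmega)$ independent \emph{a posteriori} and the event $\{\norm{\bV(\bOmega_{0,n})^{\bot}\bV(\bOmega)^{T}}_F>\epsilon\}$ only involves $\bV(\bOmega)$, its posterior probability is the ratio of $\int\exp\!\big(\sum_{j}\tfrac{1}{2\sigma_j^2}\norm{\mathcal{P}_{\bV(\bOmega)}(\bY_{j\centerdot})}^2\big)\,m_n(d\bV)$ over the event to the same integral over all of $St(K,n)$. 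The first step is to rewrite the exponent for the noiseless data. Writing $\bOmega_{0,n}=\bK(\bOmega_{0,n})\bV(\bOmega_{0,n})$ and setting $\bM:=\bSigma_G^{-1/2}\bB_{0,G}\bK(\bOmega_{0,n})$, $\bV_0:=\bV(\bOmega_{0,n})$, $\bV:=\bV(\bOmega)$, a short trace computation using $\bY=\bB_{0,G}\bOmega_{0,n}$ gives $\sum_{j=1}^G\tfrac{1}{2\sigma_j^2}\norm{\mathcal{P}_{\bV(\bOmega)}(\bY_{j\centerdot})}^2=\tfrac12\norm{\bM\bV_0\bV^{T}}_F^2$, so the posterior of $\bV$ is proportional to $\exp\!\big(\tfrac12\norm{\bM\bV_0\bV^{T}}_F^2\big)\,m_n(d\bV)$.

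The crucial algebraic observation is the orthogonal splitting
\[
\norm{\bM\bV_0\bV^{T}}_F^2=\norm{\bM\bV_0}_F^2-\norm{\bM\bV_0(\bV^{\bot})^{T}}_F^2=\norm{\bM}_F^2-\norm{\bM\bV_0(\bV^{\bot})^{T}}_F^2,
\]
where the first equality uses that $[\bV^{T}\ (\bV^{\bot})^{T}]$ is an $n\times n$ orthogonal matrix and the second uses $\bV_0\bV_0^{T}=\mathbf{I}_K$. Since $\bM$ has full column rank (otherwise $\lambda_{\min}(\bM)=0$ and the asserted bound is trivially $\le 1$), left-multiplication by $\bM$ scales Frobenius norm by a factor in $[\lambda_{\min}(\bM),\lambda_{\max}(\bM)]$, so $\lambda_{\min}(\bM)^2\norm{\bV_0(\bV^{\bot})^{T}}_F^2\le\norm{\bM\bV_0(\bV^{\bot})^{T}}_F^2\le\lambda_{\max}(\bM)^2\norm{\bV_0(\bV^{\bot})^{T}}_F^2$; and the principal-angle identity $\norm{\bV_0(\bV^{\bot})^{T}}_F=\norm{\bV_0^{\bot}\bV^{T}}_F$ (both equal $\sqrt{\sum_i\sin^2\theta_i}$ for the principal angles $\theta_i$ between the two row spaces, which needs $n\ge 2K$) lets me express everything through $t:=\norm{\bV_0^{\bot}\bV^{T}}_F$. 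Hence the exponent equals $\tfrac12\norm{\bM}_F^2$ minus a quantity lying in $[\tfrac12\lambda_{\min}(\bM)^2t^2,\ \tfrac12\lambda_{\max}(\bM)^2t^2]$, and the common constant $\tfrac12\norm{\bM}_F^2$ will cancel in the posterior ratio.

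It then remains to estimate the ratio. Write $P(\cdot)=N/(N+D')$ with $N$ the numerator integral over $\{t>\epsilon\}$ and $D'$ over $\{t\le\epsilon\}$; then $P(\cdot)=(1+D'/N)^{-1}$, so it suffices to lower-bound $D'/N$. On $\{t>\epsilon\}$ the exponent is at most $\tfrac12\norm{\bM}_F^2-\tfrac12\lambda_{\min}(\bM)^2\epsilon^2$, so $N\le\exp\!\big(\tfrac12\norm{\bM}_F^2-\tfrac12\lambda_{\min}(\bM)^2\epsilon^2\big)$ using $m_n(St(K,n))=1$; restricting $D'$ to the neighborhood $\{t<\delta\}$, where the exponent is at least $\tfrac12\norm{\bM}_F^2-\tfrac12\lambda_{\max}(\bM)^2\delta^2$, gives $D'\ge m_n(\{\norm{\bV_0^{\bot}\bV^{T}}_F<\delta\})\exp\!\big(\tfrac12\norm{\bM}_F^2-\tfrac12\lambda_{\max}(\bM)^2\delta^2\big)$. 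Thus $D'/N\ge m_n(\{\norm{\bV_0^{\bot}\bV^{T}}_F<\delta\})\exp\!\big(\tfrac12\lambda_{\min}(\bM)^2\epsilon^2-\tfrac12\lambda_{\max}(\bM)^2\delta^2\big)$, and choosing $\delta=\epsilon/L$ with $L=2\lambda_{\max}(\bM)/\lambda_{\min}(\bM)$ makes the bracket exactly $\tfrac38\lambda_{\min}(\bM)^2\epsilon^2$, which yields the claimed inequality (the measure factor is the uniform-measure mass of a radius-$(\epsilon/L)$ neighborhood of the true row space, which by right-invariance of $m_n$ does not depend on the chosen orthonormal frame $\bV_0$).

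The main obstacle is spotting the decomposition $\norm{\bM\bV_0\bV^{T}}_F^2=\norm{\bM}_F^2-\norm{\bM\bV_0(\bV^{\bot})^{T}}_F^2$ and pairing it with the principal-angle identity: together these force the large common term $\norm{\bM}_F^2$ to cancel. A more naive bound such as $\norm{\bM\bV_0\bV^{T}}_F^2\le\lambda_{\max}(\bM)^2\norm{\bV_0\bV^{T}}_F^2\le\lambda_{\max}(\bM)^2(K-\epsilon^2)$ on the event leaves an irreducible $\lambda_{\max}(\bM)^2K$ in the exponent that cannot be absorbed. Everything else—the trace identity of Step~1, the singular-value inequalities, and the choice of $\delta$ balancing prior-mass loss against the exponential gain—is routine.
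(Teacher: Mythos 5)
Your proof is correct and follows essentially the same route as the paper's: the odds-ratio form $P=(1+D/N)^{-1}$, cancellation of the common term via the complementary projection (your $\|\bM\bV_0\bV^T\|_F^2=\|\bM\|_F^2-\|\bM\bV_0(\bV^{\bot})^T\|_F^2$ is the matrix-level version of the paper's row-wise switch to $\mathcal{P}_{\bV(\bOmega)^{\bot}}$), the $\lambda_{\min}/\lambda_{\max}$ sandwich, and the choice $\delta=\epsilon/L$ producing the $\tfrac{3}{8}$ constant. Note also that your reading of the prior-mass factor as $m_n(\{\bV:\|\bV_0^{\bot}\bV^T\|_F<\epsilon/L\})$ (a neighborhood of the true row space) is the intended one; the statement's $\|\bV_0\bV^T\|_F$ without the $\bot$ appears to be a typo, as the surrounding discussion in Section 5.1.1 confirms.
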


Lemma 5.1 provides a probability bound between $\bV(\bOmega)$ sampled from the posterior distribution and $\bV(\bOmega_{0,n})$ when there is no noise in the observation $\bY$. Since $||\bV(\bOmega_{0,n})^{\bot}\bV(\bOmega)^T||^2_F$ equals to the sum of squared sine canonical angles between the row space of $\bOmega$ and $\bOmega_0$, lemma 5.1 implies the convergence of these canonical angles towards 0  as $n,G=s \rightarrow \infty$ (i.e. the Bayesian consistency of row vector space of $\bOmega$) when $-\log(m_n(\{\bV: ||\bV_0^{\bot} \bV^T||_F < \frac{\epsilon}{L}\}))=o(\epsilon^2 \lambda_{min}(\bSigma_G^{-1/2}\bB_{0,G}\bK(\bOmega_{0,n}))^2)$, which is 
the technical requirement that ensures the dilution is ``covered up'' by the contraction.
Base on this lemma, we generalize the consistency of row vector space of $\bOmega$ to the noisy observation case under the "Large p(s), Small n" paradigm.

\begin{definition}
Let $\bB_0$ be a countable array, or a bivariate function of the form $\bB_0(j,k)$, with $j=1,\cdots,\infty$ and $k=1,\cdots,K$. Intuitively, this is an $\infty \times K$ matrix. We say that $\bB_0$ is a regular infinite loading matrix if there are two universal constants $C_1, C_2>0$ such that, $\|(\bB_0)_{j\cdot}\|\leq C_1$ and $\lambda_{min}((\bB_0)_{1:j})/\sqrt{j} \geq C_2$ for $j=1,\cdots,\infty$.
\end{definition}

\begin{theorem}\label{thm5.2}
Suppose $\bB_0$ is a regular infinite loading matrix. Let $\bOmega_{0,n}$ be a $K \times n$ matrix with linear independent rows and and let $\bSigma=diag(\sigma_1^2,\cdots)$ be a known infinite diagonal matrix  in which $\sigma_j$, $\forall j$, is bounded below and above by constants $c_3>0$ and $c_4<\infty$, respectively. Let $\bY$ 
be an $\infty \times n$ matrix, whose $j$-th row is generated from $\mathcal{N}_n((\bB_0)_{j\cdot}\bOmega_{0,n},\sigma_j^2 \mathbf{I}_n)$, independently. For every fixed $G$, consider modeling the $i$-th column of $\bY_{1:G}$ by $\mathcal{N}_G(\bB\bOmega_{\cdot i},\bSigma_G)$ for $i=1,\dots,n$ with $\bSigma_G=diag(\sigma_1^2,\cdots,\sigma_G^2)$. With  a flat prior on each of $\bB$'s elements and a proper right-rotational invariant  prior on $\bOmega$, 
we have, for a random draw $\bOmega$ from its
 posterior distribution,   almost surely (with respect to the randomness in $\bY$) that 
 \[ ||\bV(\bOmega_{0,n})^{\bot}\bV(\bOmega)^T||_F \mid \bY_{1:G},\bSigma_G  \rightarrow 0 \mbox{ in probability as } \ G\rightarrow \infty.\]
\end{theorem}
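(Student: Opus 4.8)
The plan is to upgrade Lemma 5.1, which handles the noiseless case at fixed $G$, to the noisy infinite-dimensional setting by a truncation-plus-concentration argument. First I would split the observed infinite data as $\bY_{1:G} = \bB_{0,G}\bOmega_{0,n} + \bDelta_{1:G}$, where $\bDelta$ is the noise array with rows $\bDelta_{j\cdot}\sim\mathcal{N}_n(\mathbf{0},\sigma_j^2\mathbf{I}_n)$. The key observation from (\ref{posterior_V}) is that the posterior on $\bV(\bOmega)$ depends on $\bY_{1:G}$ only through the ``energies'' $\sum_{j=1}^G \sigma_j^{-2}\|\mathcal{P}_{\bV(\bOmega)}(\bY_{j\cdot})\|^2$. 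I would therefore rerun the argument behind Lemma 5.1 with $\bY_{1:G}$ in place of $\bB_{0,G}\bOmega_{0,n}$, obtaining a bound of the same shape but with $\lambda_{\min}$ and $\lambda_{\max}$ of $\bSigma_G^{-1/2}\bY_{1:G}\bV(\bOmega_{0,n})^\perp$-type quantities, or more precisely with the restricted-to-row-space singular values of $\bSigma_G^{-1/2}\bY_{1:G}$. The task then reduces to showing, almost surely in $\bY$, that (i) these empirical singular values behave like those of $\bSigma_G^{-1/2}\bB_{0,G}\bK(\bOmega_{0,n})$ up to lower-order corrections, and in particular $\lambda_{\min}(\bSigma_G^{-1/2}\bY_{1:G}\,\text{restricted})^2 \gtrsim G$ eventually, while (ii) the prior-dilution term $-\log m_n(\{\bV:\|\bV_0^\perp\bV^T\|_F<\epsilon/L\})$ grows slower than $G$.

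For step (i), I would use the regularity of $\bB_0$: $\lambda_{\min}((\bB_0)_{1:G})/\sqrt{G}\geq C_2$ and $\|(\bB_0)_{j\cdot}\|\leq C_1$, together with $c_3\leq\sigma_j\leq c_4$, to show $\lambda_{\min}(\bSigma_G^{-1/2}\bB_{0,G}\bK(\bOmega_{0,n}))^2 \geq (C_2^2/c_4^2)\lambda_{\min}(\bK(\bOmega_{0,n}))^2\, G$, which is linear in $G$ (note $\bK(\bOmega_{0,n})$ is fixed and invertible since $\bOmega_{0,n}$ has independent rows). The noise contribution $\bSigma_G^{-1/2}\bDelta_{1:G}$ is a $G\times n$ matrix with i.i.d.\ standard Gaussian entries after rescaling; its operator norm is $O_P(\sqrt{G}+\sqrt{n})=O_P(\sqrt{G})$ since $n$ is fixed in this step, so by Weyl's inequality the perturbation to each singular value is $O_P(\sqrt{G})$ with a constant controllable by a Borel–Cantelli / Gaussian concentration argument to make it almost sure along the subsequence of integers $G$. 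The subtlety is that I need the noise perturbation to be \emph{strictly smaller} in constant than the signal lower bound — this is where I must be careful: the cross term $\bB_{0,G}\bOmega_{0,n}$ has Frobenius-type growth $\sqrt{G}$ with a fixed constant, and $\|\bSigma_G^{-1/2}\bDelta_{1:G}\|_{op}/\sqrt{G}$ concentrates around a constant like $(1/c_3)(1+\sqrt{n/G})\to 1/c_3$, so I actually need the signal's restricted smallest singular value growth rate to dominate, i.e.\ I may need to absorb a factor and only claim the bound holds for the relevant projected/restricted quantity rather than the full matrix. The honest route is: the quantity that enters the Lemma 5.1-type bound is $\lambda_{\min}$ of $\bSigma_G^{-1/2}\bY_{1:G}$ restricted to the \emph{true} row space direction, and since the noise is isotropic it contributes $O_P(\sqrt{n})=O(1)$ in that fixed $K$-dimensional projected slice rather than $O(\sqrt{G})$ — that is the clean way to get the $\Theta(G)$ lower bound and it is the crux of the argument.

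For step (ii), I would invoke a standard volumetric estimate for the uniform (invariant) measure $m_n$ on the Stiefel manifold: the measure of an $\epsilon$-cap $\{\bV:\|\bV_0^\perp\bV^T\|_F<\delta\}$ is at least $(c\delta/\sqrt{K})^{Kn}$-ish, hence $-\log m_n(\cdot) = O(Kn\log(1/\delta)) = O(n)$ up to logarithmic and $K$-dependent constants. Combining, the exponent in the Lemma 5.1 bound is $\tfrac{3}{8}\epsilon^2\cdot\Theta(G)$ while the log-prior-dilution is $O(n\log(1/\epsilon))=O(n)$; since $n$ is fixed and $G\to\infty$, the ratio $-\log m_n(\cdot)/(\epsilon^2\lambda_{\min}^2)\to 0$, so the probability bound $(1+m_n(\cdot)\exp(\Theta(\epsilon^2 G)))^{-1}\to 0$, giving convergence in probability (over the posterior draw) to $0$ for each fixed $\epsilon$, and the ``almost surely in $\bY$'' qualifier comes from the Borel–Cantelli control of the Gaussian concentration events in step (i). The main obstacle I anticipate is making the noisy version of Lemma 5.1 rigorous: the noiseless proof presumably maximizes the exponential term at the true row space, but with noise the maximizer of $\sum_j\sigma_j^{-2}\|\mathcal{P}_{\bV}(\bY_{j\cdot})\|^2$ is perturbed, so I need a perturbation bound showing the posterior still concentrates near the true row space, which requires quantifying how the noise tilts the ``effective signal matrix'' $\bSigma_G^{-1/2}\bY_{1:G}$ and showing its top-$K$ right singular subspace stays within $o(1)$ of $\text{rowspace}(\bOmega_{0,n})$ — essentially a Davis–Kahan argument with the $\Theta(\sqrt{G})$ signal gap versus $O(\sqrt{G})$ noise, which again forces the careful constant-chasing described above.
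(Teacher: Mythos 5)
Your overall strategy---perturbing the Lemma~5.1 computation by quantifying how the noise modifies the exponent $\sum_{j}\sigma_j^{-2}\|\mathcal{P}_{\bV(\bOmega)}(\bY_{j\cdot})\|^2$---is the right one and is close in spirit to the paper's proof, and your step (ii) is harmless (for fixed $n$ the prior mass of the cap is simply a positive constant bounded away from zero, so no volumetric estimate is even needed). The gap is in step (i), precisely at the point you yourself flag as the crux. Your proposed resolution is that the noise, ``restricted to the true row space direction,'' contributes only $O_P(\sqrt{n})=O(1)$. This is false: the restricted noise matrix $\bSigma_G^{-1/2}\bDelta_{1:G}\bV(\bOmega_{0,n})^T$ is a $G\times K$ matrix with i.i.d.\ standard Gaussian entries, so its operator norm is of order $\sqrt{G}+\sqrt{K}\asymp\sqrt{G}$; the projection acts on the $n$-side, not the $G$-side, and does not tame the growth in $G$. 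Consequently the Weyl/Davis--Kahan comparison you set up pits a $\Theta(\sqrt{G})$ signal singular value against a $\Theta(\sqrt{G})$ noise perturbation, and whether the signal constant (roughly $(C_2/c_4)\lambda_{\min}(\bK(\bOmega_{0,n}))$) dominates the noise constant is not guaranteed by the hypotheses of the theorem. The ``careful constant-chasing'' you anticipate therefore cannot be closed along the route you describe.

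The way out---and what the paper actually does---is to avoid comparing singular values of $\bSigma_G^{-1/2}\bY_{1:G}$ altogether and instead expand the quadratic form in the exponent. Writing $\bY_{1:G}=\bB_{0,G}\bOmega_{0,n}+\bDelta_{1:G}$, the term $\sum_j\sigma_j^{-2}\|\mathcal{P}_{\bV(\bOmega)}(\bY_{j\cdot})\|^2$ splits into the noiseless signal part, whose dependence on $\bV(\bOmega)$ separates points at scale $\Theta(G)\,\epsilon^2$ exactly as in Lemma~5.1 by the regularity of $\bB_0$; a noise-squared part whose mean is essentially constant in $\bV(\bOmega)$ and hence cancels in the posterior normalization; and cross/fluctuation terms. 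The paper controls the latter by a strong uniform law of large numbers, $\sup_{\bOmega}\bigl|G^{-1}\sum_j \sigma_j^{-2}(\|\mathcal{P}_{\bV(\bOmega)}(\bY_{j\cdot})\|^2-\mathbb{E}\|\mathcal{P}_{\bV(\bOmega)}(\bY_{j\cdot})\|^2)\bigr|\to 0$ a.s., proved via Kolmogorov's SLLN at finitely many points of a cover of the compact Stiefel manifold together with a Lipschitz bound in $\|\bV(\bOmega_1)^{\perp}\bV(\bOmega)^T\|_F$. This makes the noise contribution to the exponent $o(G)$ uniformly in $\bV(\bOmega)$, which is then absorbed into the $\Theta(G)\epsilon^2$ separation by choosing the LLN tolerance proportional to $\tilde\epsilon^2(\lambda_{\min}(\bSigma_G^{-1/2}\bB_{0,G}\bK(\bOmega_{0,n}))/\sqrt{G})^2$. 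Your Borel--Cantelli idea could substitute for Kolmogorov's SLLN, but you would still need uniformity over $\bV(\bOmega)$ (a net argument, or equivalently a reduction to the $n\times n$ Gram matrix $\bY_{1:G}^T\bSigma_G^{-1}\bY_{1:G}$, whose finitely many entries can be controlled individually); as written, your concentration statements apply only to fixed subspaces.
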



\subsubsection{Posterior distribution of the loading matrix}
From (\ref{posterior_V}), it is clear that data only provide information on the row vector space of $\bV(\bOmega)$, the posterior distribution of $\bV(\bOmega)$ conditioned on its row vector space is uniform among all the $K \times n$ orthonormal matrices within the row space. Utilizing the posterior consistency of the row space provided by Theorem~\ref{thm5.2}, we can approximate 
an $\bV(\bOmega)$ drawn from its posterior by another random variable of the form $\bO \bV(\bOmega_{0,n})$, where $\bO$ is a $K \times K$ uniform (Haar distributed) random orthogonal matrix (see Appendix C.5 for details).
  
Let $\bB_{0,G}$ denotes the matrix formed by the first $G$ rows of $\bB_0$. By plugging $\bV(\bOmega)=\bO \bV(\bOmega_{0,n})$ into the matrix form of (\ref{posterior_B}), which can be written as
\[\bB\mid \bY, \bOmega, \bSigma \sim \mathcal{N}_{K \times G}(\bY\bOmega^T (\bOmega\bOmega^T)^{-1} ,  (\bOmega\bOmega^T)^{-1}\otimes\bSigma),\]
we obtain a decomposition for the posterior samples of $\bB\bK(\bOmega)/\sqrt{n}$ as:
\begin{eqnarray}
\frac{1}{\sqrt{n}}\bB\bK(\bOmega)\mid \bY,\bSigma &\sim & \bB_{0,G}(\bK(\bOmega_{0,n})/\sqrt{n})\bO^T+((\bY-\bB_{0,G}\bOmega_{0,n})/\sqrt{n}) \bV(\bOmega_{0,n})^T \bO^T
\nonumber\\
& & +\mathcal{N}_{G \times K}(\mathbf{0},\frac{1}{n} \mathbf{I}_K \otimes \bSigma ).\label{posterior_BK}
\end{eqnarray}
For a considerable large $n$ and normal true factor matrix $\bOmega_{0,n}$, $\bK(\bOmega_{0,n})/\sqrt{n}$, as the Cholesky factor of  $\bOmega_{0,n}\bOmega_{0,n}^T/n$, approaches the identity matrix, so the first term of the right hand side of (\ref{posterior_BK}) approaches $\bB_{0,G}\bO^T$.
Meanwhile, the second term $((\bY-\bB_{0,G}\bOmega_{0,n})/\sqrt{n}) \bV(\bOmega_{0,n})^T \bO^T$ is the row projection of the idiosyncratic noise matrix $(\bY-\bB_{0,G}\bOmega_{0,n})$ to a $K$ dimensional space, divided by $\sqrt{n}$, which converges in probability to 0 entry-wise as $n \rightarrow \infty$. The third term is a centered normal (independent with $\bO$) with variance shrinking to 0 as $n$ increases. This implies that under $G=s \gg n \rightarrow \infty$ regime, posterior samples of $\bB\bK(\bOmega)/\sqrt{n}$ can be asymptotically expressed as the true loading matrix times an uniform random orthogonal matrix. 

{\bf Factor assumption and consistency.} Posterior distributions of $\bB$ and $\bK(\bOmega)$ are coupled. A ``deflation'' problem of $\bK(\bOmega)/\sqrt{n}$ occurs when the factors in $\bOmega$ are assumed to be normal and $n=O(G)$, in which case the posterior distribution of $\bK(\bOmega)/\sqrt{n}$ can be derived in closed form by the Bartlett decomposition as:
\begin{equation}
\begin{split}
\frac{1}{\sqrt{n}}(\bK(\bOmega))_{k,k}|\bY,\bSigma \sim \frac{1}{\sqrt{n}}\chi_{n-k+1-G}, \ k=1,\cdots,K,\\
\frac{1}{\sqrt{n}}(\bK(\bOmega))_{k',k}|\bY,\bSigma \sim \mathcal{N}(0,\frac{1}{n}),\ 1\leq k < k' \leq K,
\end{split}
\end{equation}
where $\chi_\nu$ denotes the Chi distribution with $\nu$ degrees of freedom.
Posterior samples of the loading matrix, therefore, have to be  inflated correspondingly. Ideally, we desire the convergence of the posterior distribution of $\bK(\bOmega)/\sqrt{n}$ towards a point mass at the identity matrix to guarantee the posterior consistency (up to rotations) of the loading matrix, and can indeed achieve this by imposing a stronger control over the singular values of $\bOmega$ through the assumption on $p_{\bOmega}$. Such remedy is not unique. A particular simple strategy is to require that all factors are orthogonal and have equal norm, 
which implies that  $\bOmega/\sqrt{n}$ is uniform in the Stiefel manifold $St(K,n)$. 
More discussions are deferred to Section 5.2.2.


\subsection{Sparse Bayesian factor model}
With a special feature allocation design, $\bV(\bOmega)$ is identifiable so that the consistency of the row space of the factor matrix can be generalized to the consistency of $\bV(\bOmega)$. We impose a {\it generalized lower triangular structure} \citep{fruehwirth2018sparse} on the feature allocation matrix $\bGamma$ to cope with the rotational invariance problem of the loading matrix.
We call $\bGamma$ a generalized lower triangular matrix if the row index of the top nonzero entry in the $k$-th column $l_k$ (define $l_0=1$, $l_{K+1}=G+1$) increases with $k$ and $\gamma_{jk}=1$ if and only if $j \geq l_k$.
Under the flat SpSL prior (use a mixture of point mass at zero and flat distribution as prior) on entries of $\bB$ in the Sparse Bayesian factor model introduced in section 2.1, we can derive the conditional distributions of $\bB$ and $\bOmega$: for $j=l_k,\cdots,l_{k+1}-1$,
\begin{equation}\label{posterior_B2}
\bB_{j, 1:k}|\bY,\bOmega, \bSigma,\bGamma \stackrel{ind}{\sim} \mathcal{N}((\bOmega_{1:k}{\bOmega_{1:k}}^T)^{-1}\bOmega_{1:k} \bY_{j\centerdot}\ ,\ \sigma_j^2(\bOmega_{1:k}{\bOmega_{1:k}}^T)^{-1}), 
\end{equation}
\begin{equation}\label{posterior_Omega3}
\pi(d\bOmega|\bY,\bSigma,\bGamma)\propto \prod_{k=1}^K |\bOmega_{1:k}{\bOmega_{1:k}}^T|^{-(l_{k+1}-l_k)/2} \exp\left(\sum_{k=1}^K \sum_{j=l_k}^{l_{k+1}-1}\frac{1}{2\sigma_j^2}\|\mathcal{P}_{\bOmega_{1:k}}(\bY_{j\centerdot})\|^2\right)p_{\bOmega}(d\bOmega),
\end{equation}
where $\bB_{j,1:k}=(\beta_{j1},\beta_{j2},\cdots,\beta_{jk})^T$ and $p_{\bOmega}$ denotes the distribution assumed on $\bOmega$ such that condition (a) and (b) holds.

Given the LQ decomposition $\bOmega=\bK(\bOmega)\bV(\bOmega)$ and \[\bOmega_{1:k}=\bK(\bOmega)_{1:k}\bV(\bOmega)=\bK(\bOmega)_{1:k,1:k}\bV(\bOmega)_{1:k},\]
since $\bK(\bOmega)$ is lower triangular,  $\bOmega_{1:k}{\bOmega_{1:k}}^T=\bK(\bOmega)_{1:k,1:k}{\bK(\bOmega)_{1:k,1:k}}^T$ is a function of $\bK(\bOmega)$. $\mathcal{P}_{\bOmega_{1:k}}(\bY_{j\centerdot})$ is the projection of $\bY_{j\centerdot}$ towards the row vector space of $\bOmega_{1:k}$, which is  a function of $\bV(\bOmega)$. The adoption of the generalized lower triangular structure on feature allocation matrix ensures a separation in likelihood of (\ref{posterior_Omega3}) so that the determinant part is connected to $\bOmega$ only through $\bK(\bOmega)$ and the exponential part only through $\bV(\bOmega)$. We thus can derive that $\bK(\bOmega)$ and $\bV(\bOmega)$ are independent {\it a posteriori} and that:
\begin{equation}\label{posterior_K2}
\pi(d\bK(\bOmega)|\bY,\bSigma,\bGamma) \propto \prod_{k=1}^K \bK(\bOmega)_{k,k}^{-(G-l_k+1)} p_K(d\bK(\bOmega))
\end{equation}
\begin{equation}\label{posterior_V2}
\pi(d \bV(\bOmega)|\bY,\bSigma,\bGamma)\propto\exp\left(\sum_{k=1}^K \sum_{j=l_k}^{l_{k+1}-1}\frac{1}{2\sigma_j^2}\|\mathcal{P}_{\bV(\bOmega)_{1:k}}(\bY_{j\centerdot})\|^2\right)
m(d \bV(\bOmega)).
\end{equation}
Expression (\ref{posterior_V2}) gives a proper posterior for $\bV(\bOmega)$, and for the noiseless case (i.e. $\bY=\bB_0 \bOmega_0$), the density is maximized when the row vector space of $\bV(\bOmega)_{1:k}$ and $\bV(\bOmega_0)_{1:k}$ coincide for $k=1,\cdots,K$, based on which we can generalize theorem 5.2 to the consistency (up to sign permutations) of $\bV(\bOmega)$.

\subsubsection{Consistency of \texorpdfstring{$\bV(\bOmega)$}{TEXT}}
\begin{definition}
Let $\bB_0$ be an $\infty \times K$ matrix with nonzero rows and let $\bGamma_0$ be a binary matrix of the same shape. We call $\bGamma_0$  a generalized lower triangular feature allocation matrix of $\bB_0$ if it satisfies 
\begin{enumerate}
    \item $\mathbb{I}_{(\bB_0)_{j,k}\neq 0} \leq (\bGamma_0)_{j,k}$ holds for $j=1,\cdots,\infty$, $k=1,\cdots,K$, where $\mathbb{I}$ is the indicator function;
    \item $(\bGamma_0)_{j,k_1}\leq (\bGamma_0)_{j,k_2}$ holds for $j=1,\cdots,\infty$, $K\geq k_1>k_2\geq 1$.
\end{enumerate}
  Furthermore, for every fixed dimension $G$, let $\psi_G$ denote the unique permutation of $(1,\cdots,G)$, so that
$\psi_G(j_1)< \psi_G(j_2)$ if and only if either (i)
$\left(\sum_{k} \bGamma_{j_1,k}\right) <\left(\sum_k \bGamma_{j_2,k}\right)$
or (ii) $\left(\sum_{k} \bGamma_{j_1,k}\right) =\left(\sum_k \bGamma_{j_2,k}\right)$
but $j_1<j_2$. 
\end{definition}

\begin{definition}
Let $\bB_0$ be a $\infty \times K$ matrix with nonzero rows and  let $\bGamma_0$ be a generalized lower triangular feature allocation matrix of $\bB_0$. The two  $G \times K$ matrices $\bB_{0,G}$ and $\bGamma_{0,G}$ are formed by permuting the first $G$ rows of $\bB_0$ and $\bGamma_0$ according to $\psi_G$ (the $j$-th row of $\bB_0$ is the $\psi_G(j)$-th row of $\bB_{0,G}$). Let $l_{0,k}$ be the row index of the top nonzero entry in the k-th column of the generalized lower triangular matrix $\bGamma_{0,G}$ (define $l_{0,0}=1$, $l_{0,K+1}=G+1$), and let $\bB_{0,G}^{(k)}$ be the submatrix of $\bB_{0,G}$ formed by rows indexed from $l_{0,k}$ to $l_{0,k+1}-1$ and columns indexed from 1 to k. We call $(\bB_0,\bGamma_0)$ a regular infinite loading pair if there are two universal constants $C_1, C_2>0$ such that, $\|(\bB_0)_{j\cdot}\|\leq C_1$ and $min_{k} \lambda_{min}(\bB_{0,j}^{(k)})/\sqrt{j} \geq C_2$ for $j=1,\cdots,\infty$.
\end{definition}

\begin{figure}[!htb]
\centering
\includegraphics[width=0.1\textwidth]{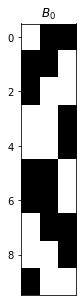}
\includegraphics[width=0.1\textwidth]{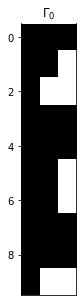}
\hspace{0.5cm}
\includegraphics[width=0.1\textwidth]{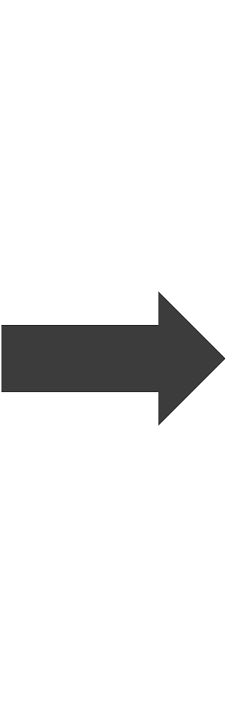}
\hspace{0.5cm}
\includegraphics[width=0.1\textwidth]{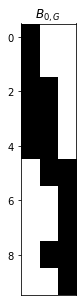}
\includegraphics[width=0.1\textwidth]{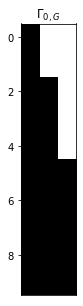}
\caption{An example of $\bB_0$, $\bGamma_0$, and $\bB_{0,G}$, $\bGamma_{0,G}$ after $\psi_G$ permutation.}
\end{figure}

\begin{theorem}
Let $(\bB_0,\bGamma_0)$ be a regular infinite loading pair with $\bGamma_0$ known, let $\bOmega_{0,n}$ be a $K \times n$ matrix with linearly independent rows, and let $\bSigma=diag(\sigma_1^2,\cdots)$ be a known infinite diagonal matrix such that $C_3 \leq \sigma_j^2 \leq C_4$ holds for $j=1,\cdots$, with constants $C_3, C_4>0$. The $j$-th row of $\infty \times n$ matrix $\bY$ is generated by $\mathcal{N}_n((\bB_0)_{j\cdot}\bOmega_{0,n},\sigma_j^2 \mathbf{I}_n)$. For every fixed $G$, let $\bY_{1:G}$ denote the matrix formed by permuting the first G rows of $\bY$ according to $\psi_G$ and consider modeling the $i$-th column of $\bY_{1:G}$ by $\mathcal{N}_G(\bB\bOmega_{\cdot i},\bSigma_G)$ for $i=1,\dots,n$ with $\bSigma_G=diag(\sigma_{\psi_G^{-1}(1)}^2,\cdots,\sigma_{\psi_G^{-1}(G)}^2)$. 
With a flat prior on each of $\bB$'s non-zero element according to the feature allocation matrix $\bGamma_{0,G}$ and a prior on $\bOmega$ that is invariant under right orthogonal transformations,  for a random draw $\bOmega$ from its posterior distribution, we have almost surely (with respect to the randomness in $\bY$) that \[||\bV(\bOmega_{0,n})_{1:k}^{\bot}\bV(\bOmega)_{1:k}^T||_F|\bY_{1:G}, \bSigma_G, \bGamma_{0,G}\rightarrow 0, \] 
for $k=1,\cdots,K$ as $G\rightarrow \infty.$

\end{theorem}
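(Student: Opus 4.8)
This statement is the sparse-model counterpart of Theorem~\ref{thm5.2}, and I would prove it by the same annealing-type likelihood comparison, now carried out level by level along the nested family of row spaces that the generalized lower triangular allocation produces. Write $\bV:=\bV(\bOmega)$ and $\bV_\star:=\bV(\bOmega_{0,n})$. Because $n$ is held fixed and the prior on $\bOmega$ is invariant under right orthogonal transformations, $\bV$ is \emph{a priori} uniform on the fixed compact Stiefel manifold $St(K,n)$, and by (\ref{posterior_K2})--(\ref{posterior_V2}) it is a posteriori independent of $\bK(\bOmega)$ with
\[
\pi(d\bV\mid\bY_{1:G},\bSigma_G,\bGamma_{0,G})\ \propto\ \exp\!\Big(\sum_{k=1}^{K}\sum_{j=l_{0,k}}^{l_{0,k+1}-1}\tfrac{1}{2\sigma_j^2}\big\|\mathcal P_{\bV_{1:k}}(\bY_{j\cdot})\big\|^{2}\Big)\,m_n(d\bV);
\]
in particular the possibly improper posterior of $\bK(\bOmega)$ plays no role. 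Decompose $\bY_{1:G}=\bB_{0,G}\bOmega_{0,n}+\bDelta$ with $\bDelta$ having independent rows $\bDelta_{j\cdot}\sim\mathcal N_n(\mathbf 0,\sigma_j^2\mathbf I_n)$; for $j$ in block $k$ (that is, $l_{0,k}\le j<l_{0,k+1}$) the allocation $\bGamma_{0,G}$ forces $(\bB_{0,G})_{j,l}=0$ for $l>k$, so the stacked block-$k$ signal rows equal $\bB_{0,G}^{(k)}\bK(\bOmega_{0,n})_{1:k,1:k}(\bV_\star)_{1:k}$ and lie in $V_0^{(k)}:=\mathrm{rowspace}\big((\bV_\star)_{1:k}\big)$.

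Set $\delta_k(\bV):=\big\|(\bV_\star)_{1:k}^{\bot}\bV_{1:k}^{T}\big\|_F$, the root sum of squared sine canonical angles between $V_0^{(k)}$ and $\mathrm{rowspace}(\bV_{1:k})$. Comparing the log-density at $\bV_\star$ with that at a generic $\bV$ and using $\|\mathcal P_W(x)\|^2=\|x\|^2-\|\mathcal P_{W^{\bot}}(x)\|^2$, the difference splits as
\[
\log\frac{\pi(\bV_\star\mid\cdot)}{\pi(\bV\mid\cdot)}\ =\ \sum_{k}\sum_{j\in\text{block }k}\tfrac{1}{2\sigma_j^2}\big\|\mathcal P_{\bV_{1:k}^{\bot}}\big((\bB_{0,G}\bOmega_{0,n})_{j\cdot}\big)\big\|^{2}\ +\ (\text{noise}).
\]
For the first (nonnegative) term, stacking the block-$k$ rows gives $\sum_{j\in\text{block }k}\|\mathcal P_{\bV_{1:k}^{\bot}}(\cdot)\|^2=\|A_k(\bV_\star)_{1:k}\bP_{\bV_{1:k}^{\bot}}\|_F^{2}\ge\lambda_{\min}(A_k)^2\delta_k(\bV)^2$ with $A_k:=\bB_{0,G}^{(k)}\bK(\bOmega_{0,n})_{1:k,1:k}$, where I use the canonical-angle identity $\|(\bV_\star)_{1:k}\bP_{\bV_{1:k}^{\bot}}\|_F=\delta_k(\bV)$, valid because both subspaces have dimension $k$; then $\lambda_{\min}(A_k)\ge\lambda_{\min}(\bB_{0,G}^{(k)})\lambda_{\min}(\bK(\bOmega_{0,n})_{1:k,1:k})\ge C_2\sqrt G\cdot\lambda_{\min}(\bK(\bOmega_{0,n})_{1:k,1:k})$ by regularity of the loading pair (the $\lambda_{\min}(\bK(\bOmega_{0,n})_{1:k,1:k})$ are fixed positive numbers since $\bOmega_{0,n}$ has linearly independent rows), and using $\sigma_j^2\le C_4$ this term is $\ge c\,G\max_k\delta_k(\bV)^2$ for a fixed $c>0$. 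The noise term is a Gaussian cross term $\sum_{k,j}\sigma_j^{-2}\langle\mathcal P_{\bV_{1:k}^{\bot}}\big((\bB_{0,G}\bOmega_{0,n})_{j\cdot}\big),\bDelta_{j\cdot}\rangle$ plus a ``chi-square difference'' term $\sum_{k,j}\tfrac{1}{2\sigma_j^2}\big(\|\mathcal P_{(\bV_\star)_{1:k}}(\bDelta_{j\cdot})\|^2-\|\mathcal P_{\bV_{1:k}}(\bDelta_{j\cdot})\|^2\big)$; for each fixed $\bV$ both are mean zero with fluctuations of order $\sqrt G$ and sub-exponential tails, so, $St(K,n)$ being a fixed compact manifold, an $\epsilon$-net/chaining bound combined with Borel--Cantelli over $G$ yields $\sup_{\bV\in St(K,n)}|\text{noise}|=o(G)$ almost surely in $\bY$.

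Hence, almost surely, for each $\epsilon>0$ and all large $G$, $\log[\pi(\bV_\star\mid\cdot)/\pi(\bV\mid\cdot)]\ge cG\epsilon^2-o(G)$ uniformly over $\{\bV:\max_k\delta_k(\bV)\ge\epsilon\}$. I would then finish with the usual posterior-mass bound: lower-bound the normalizing constant by the density integral over a fixed small neighbourhood $\mathcal U$ of $\bV_\star$ on which the likelihood drop relative to its value at $\bV_\star$ is $\ll G\epsilon^2$; since $n$ is fixed, $m_n(\mathcal U)$ is a fixed positive constant, so $P\big(\max_k\delta_k(\bV)\ge\epsilon\mid\bY_{1:G},\bSigma_G,\bGamma_{0,G}\big)\le m_n(\mathcal U)^{-1}e^{-cG\epsilon^2+o(G)}\to 0$. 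As $\epsilon>0$ is arbitrary and $\delta_k\le\max_k\delta_k$, this gives $\|(\bV_\star)_{1:k}^{\bot}\bV(\bOmega)_{1:k}^{T}\|_F\to 0$ in posterior probability for each $k=1,\dots,K$, almost surely in $\bY$. The main obstacle is the uniform-in-$\bV$ control of the noise term --- above all the chi-square-difference piece, which is genuinely only $O(\sqrt G)$ once the pointwise means cancel and must be shown $o(G)$ uniformly over $St(K,n)$ and eventually almost surely in $G$ --- together with the block bookkeeping (identifying the level attaining $\max_k\delta_k$ and checking that block $k$'s row count within the first $G$ indices grows fast enough that $\lambda_{\min}(\bB_{0,G}^{(k)})\asymp\sqrt G$), and with arranging that the prior mass of $\mathcal U$ and the likelihood drop on $\mathcal U$ are both of strictly lower order than the $e^{-cG\epsilon^2}$ contraction.
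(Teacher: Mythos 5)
Your proposal is correct and follows essentially the same route as the paper: factor the posterior so that only $\bV(\bOmega)$ matters, lower-bound the blockwise signal contraction by $\lambda_{\min}\big((\bSigma_G^{(k)})^{-1/2}\bB_{0,G}^{(k)}\bK(\bOmega_{0,n})_{1:k}\big)^2\asymp G$ using the regular-loading-pair condition, control the noise uniformly over the compact Stiefel manifold (the paper does this via a Lipschitz bound, a finite cover, and Kolmogorov's strong law, which is your $\epsilon$-net plus Borel--Cantelli step), and conclude by a ratio-of-posterior-masses bound in which $m_n$ of a fixed neighborhood of $\bV(\bOmega_{0,n})$ stays bounded away from zero because $n$ is fixed. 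The paper merely packages the noiseless contraction as a separate lemma and imports the uniform law of large numbers from the proof of Theorem 5.2, whereas you run the two steps in a single argument.
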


Theorem 5.3 is understood as the consistency (up to sign permutations) of $\bV(\bOmega)$ for fixed n and $G\asymp s \rightarrow \infty$, in the sense that $||\bV(\bOmega_{0,n})_{1:k}^{\bot}\bV(\bOmega)_{1:k}^T||_F$  converges to 0 for all $k$, which implies that the canonical angles between the row space of $\bV(\bOmega_{0,n})_{1:k}$ and that of $\bV(\bOmega)_{1:k}$ converge to 0 as $G\to \infty$. When these angles are all equal to 0, $\bV(\bOmega)$ differs from $\bV(\bOmega_{0,n})$ only by a sign for each row. Since the data provides no information on the signs, in the asymptotic regime with $G\asymp s \gg n \rightarrow \infty$, we can approximate  $\bV(\bOmega)$ drawn from its posterior distribution by a  random sign diagonal matrix $\bS$, i.e., a diagonal matrix with $i.i.d.$ random signs on the diagonal, times $\bV(\bOmega_{0,n})$. 

\subsubsection{Posterior sample consistency}\label{consistency}
Recall that from Section 5.1.2, for the basic Bayesian factor model with $G=s \gg n \rightarrow \infty$, $\bB\bK(\bOmega)/\sqrt{n}$ drawn from the posterior distribution can be asymptotically represented as the true loading matrix times a uniform random orthogonal matrix.  If the true feature allocation matrix is lower triangular, we have 
\begin{equation}\label{posterior_BK2}
\begin{split}
\bB^{(k)}\bK(\bOmega)_{1:k}/\sqrt{n}|\bY,\bOmega, \bSigma_G,\bGamma_{0,G} &\sim\bB_{0,G}^{(k)} (\bK(\bOmega_{0,n})_{1:k,1:k}/\sqrt{n}) \bV(\bOmega_{0,n})_{1:k}\bV(\bOmega)_{1:k}^T \\
&+
((\bY_{l_k:l_{k+1}-1}-\bB_{0,G}^{(k)}(\bOmega_{0,n})_{1:k})/\sqrt{n}) \bV(\bOmega)_{1:k}^T\\
&+\mathcal{N}_{(l_{k+1}-l_k)\times k}(\mathbf{0},\frac{1}{n}\mathbf{I}_k\otimes \bSigma_G^{(k)}),
\end{split}
\end{equation}
whose right hand side converges entry-wise in probability to $\bB_{0,G}^{(k)}\bS_{1:k,1:k}^T$ under the $G \asymp s \gg n \rightarrow \infty$ setting (by similar argument as in section 5.1.2). Note that $\bB^{(k)}\bK(\bOmega)_{1:k}=\bB_{l_k:l_{k+1}-1}\bK(\bOmega)$, we can therefore summarize the convergence of $\bB^{(k)}\bK(\bOmega)_{1:k}/\sqrt{n}$ to derive the convergence of posterior samples of $\bB \bK(\bOmega)/\sqrt{n}$ towards $\bB_{0,G} \bS^T$.

The posterior sample consistency (up to sign permutations) of the loading matrix is immediate once we have $\bK(\bOmega)/\sqrt{n}$, or equivalently $\bOmega \bOmega^T/n$, from its posterior distribution converging in probability to the identity matrix. The density in (\ref{posterior_K2}) indicates that the posterior distribution of $\bOmega \bOmega^T/n$ is contributed by two terms: the determinant  $\prod_{k=1}^K \bK(\bOmega)_{k,k}^{-(G-l_k+1)}$ and the model assumption represented by $p_{\bOmega}$. 
The determinant term creates singularities when $\bK(\bOmega)_{k,k}=0$ and the order of these ``poles'' $\sim s$.
When this term dominates, we observe the inflation phenomenon of posterior samples of the loading matrix. Meanwhile, the model assumption term can bound $\bK(\bOmega)$ away from these singularities by assigning little probability measure in their neighborhoods and also induces the convergence of $\bOmega \bOmega^T/n$ towards the identity matrix (through requirement (a) introduced in section 5.1). Consequently, the posterior behavior of $\bOmega \bOmega^T/n$ is influenced by both the increasing rate of $n,s$ and the choice of distribution $p_{\bOmega}$. Those $p_{\bOmega}$ that bounds away singularities with high probability and forces a fast convergence of $\bOmega \bOmega^T/n$ towards the identity matrix can allow a fast rate of $s$ going to infinity comparing to $n$, to guarantee the posterior consistency of the loading matrix. A simple and effective choice is to adopt the {\it $\sqrt{n}$-orthonormal factor model}. That is, we assume {\it a priori} that $\bOmega/\sqrt{n}$  is uniform in the Stiefel manifold $St(K,n)$. With this choice, we have $\bOmega \bOmega^T/n=I_K$ and that the posterior sample consistency of $\bB$ naturally holds even when $n$ has a rather slow growing rate compared with $s$. 

Our analysis regarding the relation between the factor assumption and the magnitude problem is specific to the independent spike and slab prior setup. But we believe that the magnitude problem, meaning that the column-wise magnitude of the loading matrix sampled from its posterior distribution is very sensitive to its prior distribution, exists for general priors in the ``Large s, Small n'' regime when the factors are only assumed to be normally distributed. When a more complicated prior is assigned on the loading matrix, the analysis becomes rather challenging and the magnitude problem may be expressed in other forms (as we will see in the simulations) rather than an `inflation' (inflation is typical for using non-informative priors on loading matrix). The intuition we gain from the analysis is that the factor assumption crucially impacts the strength of posterior contraction (through data) of the magnitude of the loading matrix towards its true value, and using $\sqrt{n}$-orthonormal factors achieves the strongest contraction thus allows more flexibility in prior assignment of the loading matrix while maintaining the posterior consistency.

\section{Numerical results}
\subsection{Modification of the Gibbs sampler}\label{modification}
In Section 5, we justify the adoption of the $\sqrt{n}$-orthonormal factor model in the ``Large s, Small n'' paradigm (i.e., the factor matrix $\bOmega$ scaled by $1/\sqrt{n}$ is uniform in the Stiefel manifold $St(K,n)$).
To construct a Gibbs sampler under this new 
factor model and the prior setup in Section~\ref{prior_setup} (denoted as SpSL-orthonormal factor model), we only need to revise the conditional sampling step of $\bOmega|\bY,\bB,\bSigma$ in the basic Gibbs sampler described in Section~\ref{standard-Gibbs}. 

Let $\bOmega_{k\cdot}$ denote the $k$-th row of the factor matrix and $\bOmega_{-k}$ denote the remaining  rows, all as column vectors. The conditional distribution $\bOmega_{k\cdot}|\bY,\bOmega_{-k},\bB,\bSigma$ is altered from a multivariate normal distribution to:
\begin{equation}\label{sample_Omega}
\pi(d\bOmega_{k\cdot}|\bY,\bOmega_{-k},\bB,\bSigma) \propto f(\bOmega_{k\cdot}; \ \bar{\bOmega}_{k\cdot},\bar{\sigma}^2 _k \mathbf{I}_n)\times p_{\bOmega_{-k}}(d\bOmega_{k\cdot})
\end{equation}
where $p_{\bOmega_{-k}}$ is the uniform measure on the centred $\sqrt{n}$-radius sphere in the orthogonal space of $\bOmega_{-k}$, and  $f(\bOmega_{k\cdot};\ \bar{\bOmega}_{k\cdot},\bar{\sigma}^2 _k \mathbf{I}_n)$ is the multivariate normal density function with mean $\bar{\bOmega}_{k\cdot}$ and covariance matrix $\bar{\sigma}^2_k \mathbf{I}_n$, with
$$\bar{\bOmega}_{k\cdot}=(\bB_{\cdot k}^T\bSigma^{-1}\bB_{\cdot k})^{-1}(\bY-\sum_{t\neq k} \bB_{\cdot t} \bOmega_{t \cdot}^T)^T\bSigma^{-1} \bB_{\cdot k},\  \  \ \bar{\sigma}^2_k =(\bB_{\cdot k}^T\bSigma^{-1}\bB_{\cdot k})^{-1}.$$

To sample from (\ref{sample_Omega}), we cut this $\sqrt{n}$-radius sphere by hyperplanes
that are orthogonal to vector $\bar{\bOmega}_{k\cdot}$ and denote this collection of intersections of the sphere and hyperplanes as $\{S_d\mid d \in (-\sqrt{n},\sqrt{n})\}$, where $d$ is the Euclidean distance between the origin and the hyperplane. Essentially, $\{S_d\}$ are ($n$-$k$)-dimensional spheres and every point in the same $S_d$ has the same multivariate normal density $f(\cdot;\bar{\bOmega}_{k\cdot},\bar{\sigma}^2 _k \mathbf{I}_n)$, so we can sample $\bOmega_{k\cdot}$ from (\ref{sample_Omega}) by first sampling $d$ from its marginal distribution and then uniformly sample  from sphere $S_{d}$ given the sampled $d$. Using the area formula of sphere, we can deduce the marginal distribution for $d$ as
\begin{equation}
\pi(d|\bY,\bOmega_{-k},\bB,\bSigma)\propto (n-d^2)^{(n-K-2)/2}\exp(\|\mathcal{P}_{\bOmega_{-k}^{\bot}}(\bar{\bOmega}_{k\cdot})\| d/\bar{\sigma}^2_k)
\end{equation}
and sample from this unimodal distribution using the Metropolis algorithm. The additional computational cost brought by the model revision only comes from the Metropolis algorithm and is almost negligible.  
\subsection{Comparison with alternative approaches}
We now revisit the synthetic example in Section~\ref{sec:synthetic}
to check the consistency 
of the posterior distribution of the loading matrix under the SpSL-orthonormal factor model and compare the MCMC performance with the sampler of two alternative approaches: a modified Ghosh-Dunson model (details provided in Appendix B) and the model from \cite{Bhattacharya2011Sparse} (applied with $\nu=3, a_{\sigma}=1,b_{\sigma}=0.3,a_1,a_2\sim \text{Gamma}(2,1)$). The factor dimensionality $K$ is fixed at 8 in all Gibbs samplers.

\begin{figure}[!htb]
\centering
\subfloat[The SpSL-orthonormal factor model]{\includegraphics[width=0.9\textwidth]{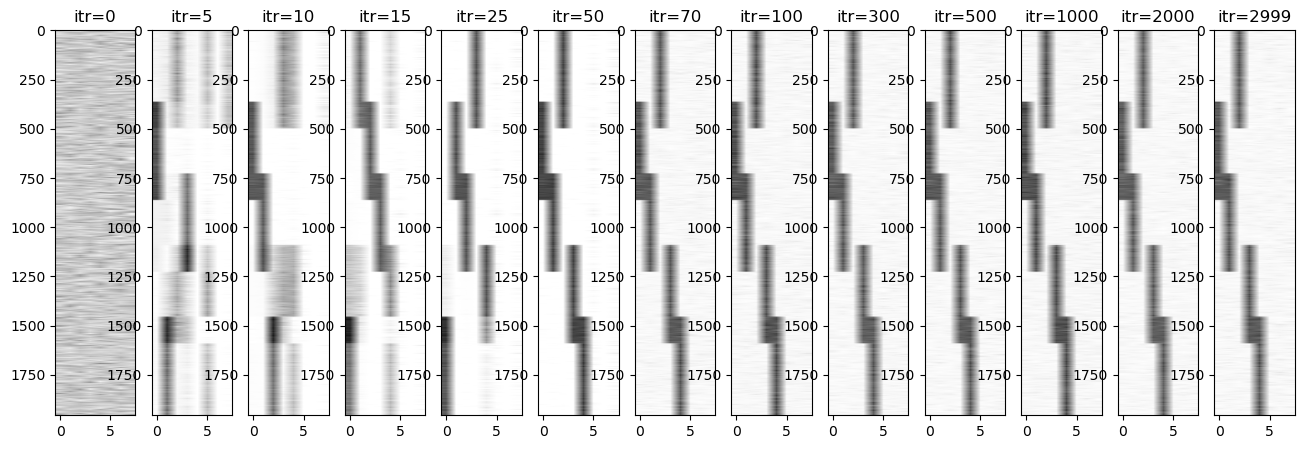}}

\subfloat[The modified Ghosh-Dunson model]{\includegraphics[width=0.9\textwidth]{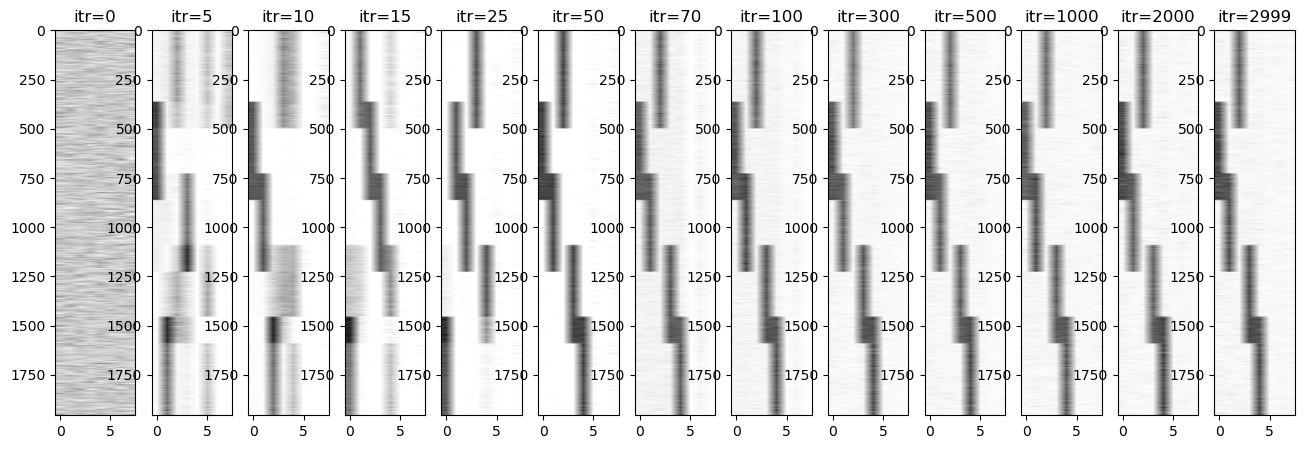}}

\subfloat[The model from \cite{Bhattacharya2011Sparse} ]{\includegraphics[width=0.9\textwidth]{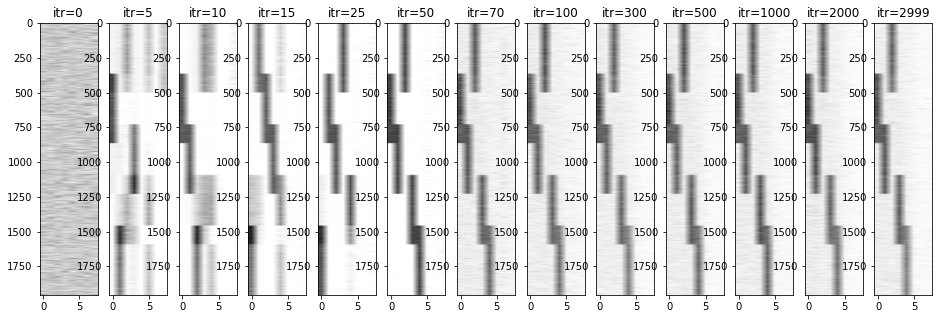}}
\caption{Heat-maps of $|\bB|$ in 3000 iterations of Gibbs sampler using specified models.}\label{example4}
\end{figure}

\begin{figure}[!htb]
     \centering
     \subfloat[The SpSL-orthonormal factor model]{\includegraphics[width=0.5\textwidth]{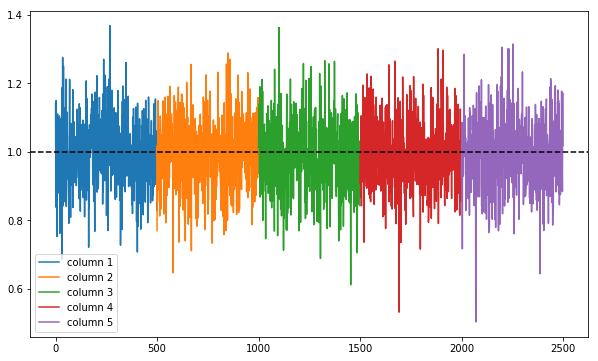}}
     \subfloat[The modified Ghosh-Dunson model]{\includegraphics[width=0.5\textwidth]{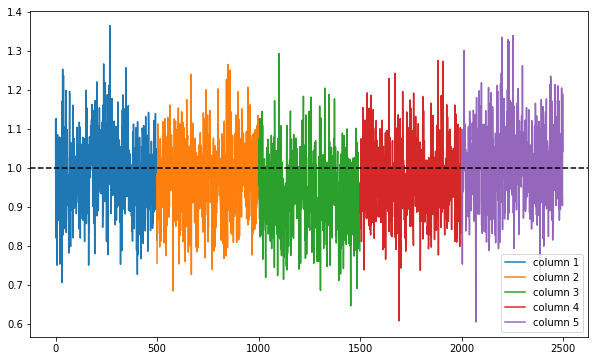}}
     
     \subfloat[The model from \cite{Bhattacharya2011Sparse}]{\includegraphics[width=0.5\textwidth]{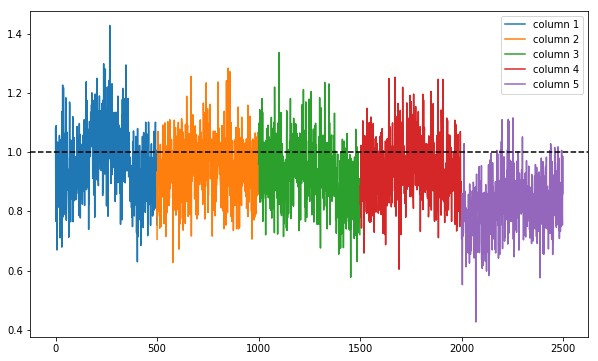}}
     \caption{\label{example5} Posterior means of the nonzero elements of loading matrix under three factor models. Nonzero elements are sorted first by the column index and then by the row index, both in ascending order, e.g. the first 500 entries colored in blue correspond to the posterior means of $\beta_{365,1},\cdots,\beta_{864,1}$.}
\end{figure}

Figure~\ref{example4} shows the heat map of $|\bB|$ in 3000 iterations. We perform the PXL-EM algorithm for the first 50 iterations and then Gibbs sampling in all three approaches, respectively, for the next 2950 iterations.
Figure~\ref{example5} shows the posterior means of the nonzero elements of loading matrix obtained by averaging over 2500 posterior samples after burn-in. For the SpSL-orthonormal factor model, the posterior means are nicely centered around the true value 1. For the other two approaches, there exhibit some `twists' in the column-wise magnitude of the posterior means. This is most obvious for the fifth (purple) column of panel (c) in Figure~\ref{example5}.

To understand the cause, we examine more closely the priors employed by the latter two approaches. These two approaches share the same idea of imposing dependency among the magnitudes of loading matrix elements within the same column via the decomposition  $\bB=\bQ\times\bD$ as we explained in the introduction. However, they differ in the scheme of learning factor dimensionality: the modified Ghosh-Dunson model adopts a SpSL prior on elements of $\bQ$, an Indian buffet process on $\bTheta$ and a diffuse prior on diagonals of $\bD$; whereas \cite{Bhattacharya2011Sparse} uses a continuous prior on elements of $\bQ$ and a shrinkage prior on $\bD$. The former model learns factor dimensionality though the shrinkage on the feature sparsity vector $\bTheta$ while the latter does so through the shrinkage on diagonals of $\bD$.
Under the ``Large s, Small n'' regime, using an informative prior on $\bD$ can be influential for the posterior of the column-wise magnitude and results in the `twists' in panel (c). As for the `twist' in panel (b), we think it is caused by the high auto-correlation among the samples generated by the Gibbs sampler for the modified Ghosh-Dunson model, so that the sample mean estimator still has a large Monte Carlo error using 2500 Gibbs samples. 

With a sufficient computation budget, Gibbs sampler for the modified Ghosh-Dunson model gives similar posterior results as our approach. This highlights another advantage of our approach---computational efficiency. When running the corresponding Gibbs sampler for 2500 rounds after burn-in, our approach with the SpSL-orthonormal factor model attained an average effective sample size (ESS) of 2758.8; whereas the ESS for the other two approaches are only 51.6 and 82.5, respectively, on average.
The computation times per iteration of Gibbs sampling for the three methods are 2.8, 2.2, and 1.8 seconds, respectively.  
Besides computational aspect, although both the SpSL-orthonormal factor model and the modified Ghosh-Dunson model give very similar numerical results after appropriately adjusting tuning parameters of the priors, our analysis rigorously justifies the consistency of the former model, whereas a similar theoretical study of the latter model is still beyond our reach.

\subsection{Robustness against prior specification}

Under the SpSL-orthonormal factor model, figure~\ref{example7} illustrates the posterior density of $\beta_{1,1}$ and $\beta_{1,3}$ (estimated by averaging over the conditional posterior densities) using slab priors with ranging variances. We tested with $\lambda_0=20,\ \lambda_1 \in \{0.001,0.01,0.1,0.5\}$ and the posterior distribution shows a great robustness against the choice of the slab prior. 

\begin{figure}[!htb]
     \centering
     \subfloat[$\beta_{1,1}$]{\includegraphics[width=0.41\textwidth]{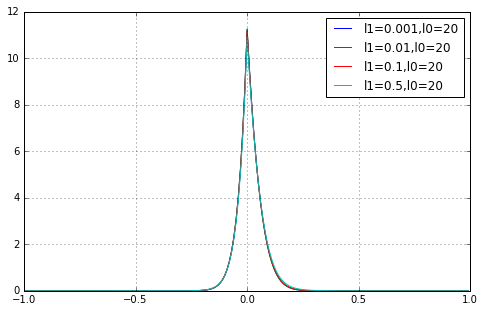}}
     \subfloat[$\beta_{1,3}$]{\includegraphics[width=0.41\textwidth]{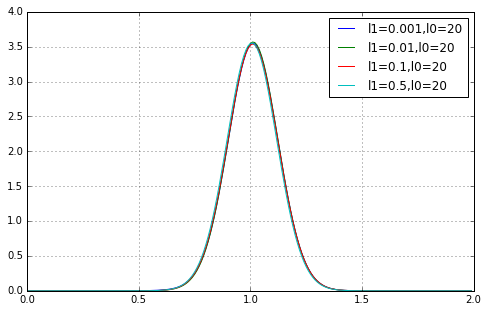}}
     \caption{\label{example7}Posterior densities of (a) $\beta_{1,1}$,  and (b) $\beta_{1,3}$, with $\lambda_1 \in \{0.001,0.01,0.1,0.5\}$, under SpSL-orthonormal factor model. The posterior densities are robust against the choice of slab priors.}
\end{figure}

At the end of Section~\ref{consistency}, we claim that restricting to $\sqrt{n}$-orthonormal factors grants more flexibility in prior assignment of the loading matrix while maintaining the posterior consistency. We  verify this claim by applying the prior setups from Ghosh-Dunson model and \cite{Bhattacharya2011Sparse} to the $\sqrt{n}$-orthonormal factor model. Note that we only need to revise the conditional sampling step $\bOmega|\bY,\bB,\bSigma$ in the Gibbs samplers as we did in Section~\ref{modification}.

\begin{figure}[!htb]
     \centering
     \subfloat[The modified Ghosh-Dunson model with normal factors]{\includegraphics[width=0.47\textwidth]{dbfm_post.png}}
     \hspace{0.5cm}
     \subfloat[The modified Ghosh-Dunson model with $\sqrt{n}$-orthonormal factors]{\includegraphics[width=0.47\textwidth]{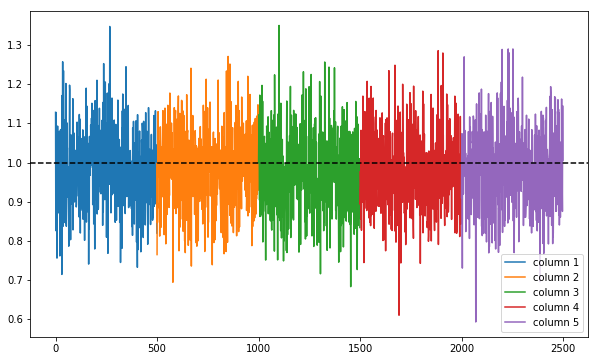}}
     
     \subfloat[The model of \cite{Bhattacharya2011Sparse} with normal factors]{\includegraphics[width=0.47\textwidth]{fm_post.png}}
     \hspace{0.5cm}
     \subfloat[The model of \cite{Bhattacharya2011Sparse} with $\sqrt{n}$-orthonormal factors]{\includegraphics[width=0.47\textwidth]{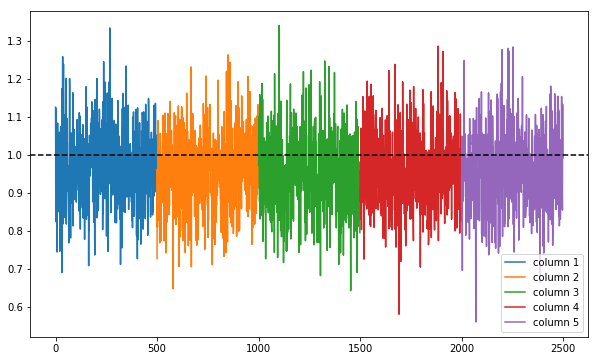}}
     \caption{\label{example8}Posterior mean of the nonzero elements of loading matrix using specified models. Column-wise magnitudes of the loading matrix from posterior are balanced after changing from normal factors (left panels) to $\sqrt{n}$-orthonormal factors (right panels).}
\end{figure}

Figure~\ref{example8} plots the posterior means of the nonzero elements of the loading matrix estimated by averaging over 2500 posterior samples. We observe that the `twists' in column-wise magnitudes disappear after switching to the $\sqrt{n}$-orthonormal factor model. 
Furthermore, the average ESS increases significantly from 51.6 and 82.5 to 2667.9 and 2296.6, respectively, for the two approaches since the source of high auto-correlations---strong tie between the magnitudes of the loading matrix and the factors, is removed by restricting the magnitude of factors to a specific value. Summary figures for credible intervals of the loading matrix elements of all implemented approaches are illustrated in Appendix D.2.

\section{Dynamic exploration with application}
Although the $\sqrt{n}$-orthonormal factor model can be coupled with general prior assignments on the loading matrix, we focus on the setup from \cite{rovckova2016fast} (i.e., the SpSL-orthonormal factor model), under which posterior consistency has a theoretical guarantee. 
When applying this framework to real data, 
the choice of the factor dimensionality $K$ as well as the penalty parameters $\lambda_0$ and $\lambda_1$ (parameters in the spike and the slab parts, respectively)  is crucial. 

The application of our Gibbs sampler  requires a successful implementation of the PXL-EM algorithm to search for a posterior mode that can serve to initialize the sampler. For the choice of $K$ when applying the sampler, we make two recommendations: (i) use the estimated number of factors from PXL-EM as a plug-in estimator for $K$; (ii) choose $K$ to be sufficiently large initially and discard the useless factors (whose corresponding $\{\gamma_{jk}\}_{j=1,\cdots,G}$ are all zero) in the sampling process, which is similar to the idea of choosing the number of factors adaptively from \cite{Bhattacharya2011Sparse}. More precisely, we discard  unless factors if there are any, and append a null factor whenever there is no useless factor remained. Though this adaptive approach also provides posterior samples for $K$, it is worth mentioning that the computational complexity of the Gibbs sampler scales linearly with the factor dimensionality $K$.

The penalty parameters determine the threshold for a loading matrix's element to follow either a spike or a slab prior. For the PXL-EM algorithm, Ro{\v{c}}kov{\'a} and George proposed a dynamic posterior exploration process to help searching for the MAP in a sequence of prior settings as well as determining the appropriate value for these penalty parameters. Initially, they fix $\lambda_1$ at a small value and gradually increase $\lambda_0$ until the solution path is stabilized. The solution given by the PXL-EM under the final value of $\lambda_0$ approximates the MAP estimate under a flat and point mass mixture prior on loading matrix elements and is proposed as the estimator for parameters. 
The same procedure can be applied to the full posterior inference based on the SpSL-orthonormal factor model.

We observed a similar stabilization of the posterior distributions of every nonzero loading element when performing dynamic exploration for the SpSL-orthonormal factor model, which is  illustrated in the application of our method  to the cerebrum microarray data from AGEMAP (Atlas of Gene Expression in Mouse Aging Project) database of \cite{zahn2007agemap}, which was analyzed by \cite{rovckova2016fast} using their PXL-EM algorithm. For every mice individual in this dataset  (5 males and 5 females, at four age periods), cerebrum microarray expression data from 8932 genes are recorded, observations $\by_i, i=1,\cdots,40$ for the factor model are taken to be the residuals of the expression values for  each of the 8932 genes regressed on age and gender with an intercept.

We ran the posterior sampler initialized at the MAP detected by the PXL-EM algorithm with $\lambda_1=0.001, \alpha=1/G$, and $\lambda_0$ gradually increasing in the sequence of 12,15,20,30,40. As the detected factor dimensionality by the PXL-EM algorithm is 1, we specify $K$ to be 1 in our framework. Figure~\ref{example9} demonstrates the evolution of the posterior density of $\beta_{2873,1}$ and $\beta_{1,1}$ as $\lambda_0$ changes.

\begin{figure}[!htb]
\centering
\subfloat[$\beta_{2873,1}$]{\includegraphics[width=0.45\textwidth]{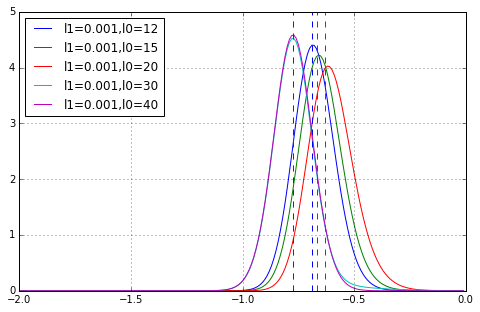}}
\subfloat[$\beta_{1,1}$]{\includegraphics[width=0.45\textwidth]{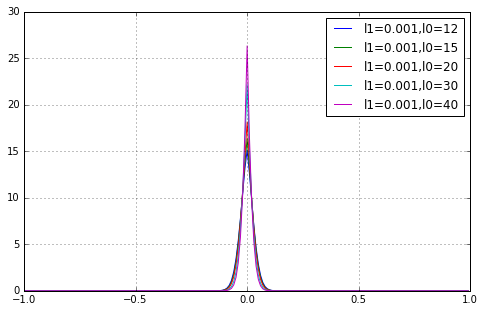}}
\caption{Posterior pdf of (a) $\beta_{2873,1}$ and (b) $\beta_{1,1}$ under  SpSL-orthonormal factor model with increasing $\lambda_0$ }\label{example9}
\end{figure}

The posterior distribution of $\beta_{1,1}$  centers at 0 and becomes more and more spiky as $\lambda_0$ increases. For the nonzero element $\beta_{2873,1}$, its posterior distributions resemble the normal distribution with a relative stable variance.
The posterior mean of $\beta_{2873,1}$ first moves towards zero and then away and stabilizes. This change of direction is caused by the alteration of its slab indicator $\gamma_{2873,1}$ from 0 to 1 in posterior samples, 
in which case the posterior distribution of $\beta_{j,k}$ is only influenced by the slab, but not the spike prior. Vertical dotted lines are the MAP estimates, which are close to the posterior means.
Having recognized that the stabilization of the MAP estimates and the posterior distributions occur almost simultaneously as $\lambda_0$ increases, in practice we can find the ideal pair of penalty parameters such that the posterior distribution is stabilized by looking for the stabilization of the MAP estimates instead of sampling from the posterior distribution with $\lambda_0$ on multiple levels. More summary and comparative figures of the posterior simulation are illustrated in Appendix D.1 with $\lambda_0=30$.


\begin{figure}[!htb]
\centering
\subfloat[The SpSL-orthonormal factor model]{\includegraphics[width=1\textwidth]{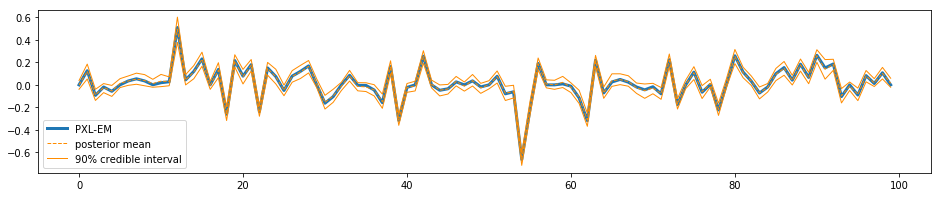}}

\subfloat[The modified Ghosh-Dunson model]{\includegraphics[width=1\textwidth]{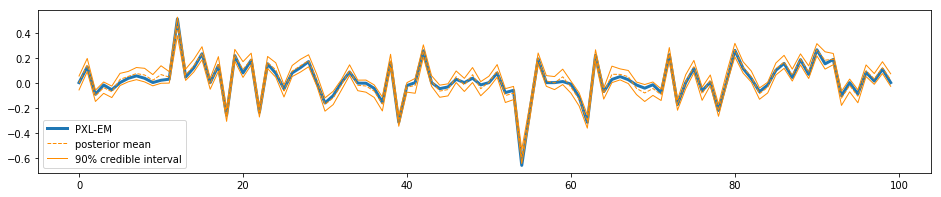}}
\caption{Posterior mean and credible interval of $\beta_{1,1},\cdots,\beta_{100,1}$  estimated from samples of specified model and the MAP estimate from PXL-EM algorithm}\label{example10}
\end{figure}

Figure~\ref{example10} provides a comparison between the posterior inference results from the SpSL-orthonormal factor model ($\lambda_0=30$) and the modified Ghosh-Dunson model ($\lambda=0.001$, $\lambda_0=200$)
which shows that the two models give very similar posterior  credible intervals (computed using 1000 posterior samples after burn-in) for the loading matrix, and both posterior means are also very close to the MAP estimator from PXL-EM algorithm. 
Additionally, the Gibbs sampler for the SpSL-orthonormal factor model results in a much larger ESS compared to that for the Ghosh-Dunson model (e.g.,  the ESS for $\beta_{55,1}$ are 905.0 and 42.7 for the two methods, respectively). We omit 
scientific interpretations of the inference results
since our goal is only to verify that our procedure gives similar results as those in \cite{rovckova2016fast}  based on points estimates of the normal factor model, and to show how to conduct the full Bayesian analysis properly and efficiently for this dataset.

In summary, we can start our  Bayesian inference  for the SpSL-orthonormal factor model by first choosing a small $\lambda_1$ and a sequence of increasing $\lambda_0$, denoted as $\{\lambda_0^{(t)}\}_{t=1,\cdots}$. We then run the PXL-EM algorithm sequentially with $\lambda_1$ and $\lambda_0^{(t)}$ for $t=1,\cdots$, with parameters  initialized at the MAP estimate found in the previous round.  
The process is terminated when the difference between the new MAP estimate and the one from the previous round is below a chosen threshold. Afterward, we run our Gibbs sampler under the SpSL-orthonormal factor model using the final pair of penalty parameters with $\bB,\bSigma,\bTheta$ and $K$ initialized at the MAP estimate and $\bOmega,\bGamma$ initialized with random draws from their  domains.

\section{Discussion}
A primary intention of our work is to provide an efficient posterior sampler for the Bayesian factor model in high dimensions and show its consistency. \cite{rovckova2016fast}'s sparse Bayesian factor model framework serves as a 
promising starting point, for both its explicit encoding of the sparsity 
and its providing of a fast posterior mode finding algorithm.
By analyzing the magnitude inflation problem of the posterior samples of the loading matrix under the prior setup of \cite{rovckova2016fast}, we  propose the $\sqrt{n}$-orthonormal factor model as a practical remedy, which not only processes posterior consistency and robustness against prior settings, but also dramatically improves the computational efficiency. Our work naturally bridges the gap between the point estimation based on posterior modes  and the full Bayesian analysis under the SpSL factor modeling framework.
Besides our proposed solution, i.e., enforcing a common scale and orthogonality among the factors, \cite{bernardo2003bayesian} and \cite{ghosh2009default} provided another perspective, which is  to reduce the dimensionality of diffuse parameters in the prior  to ensure that they do not overwhelm  the data. Their approach allows the factors to have different variances, 
but restricts elements of  the loading matrix to follow standard Gaussian {\it a priori}. In this article, we provide a further modification of their model by imposing a SpSL prior on the loading matrix's elements, which allows a greater flexibility in handling sparsity in high dimensions (details in Appendix B).

Using the prior from \cite{rovckova2016fast}, we are able to show theoretically that the adoption of a strict $\sqrt{n}$-orthonormal factor assumption can ensure posterior consistency.
But this type of rigorous analysis for other models, including the Ghosh-Dunson model and its modification, still evades our vigorous attempts. Furthermore, in some follow up work, informative priors were assigned to the diffuse parameters in Ghosh-Dunson model, and it is unclear how these priors influence the posterior magnitude of the loading matrix generally. 
Interests for future exploration may be focused on the design of dependent priors for easy posterior sampling as well as the justification of posterior consistency when using such priors. 
The $\sqrt{n}$-orthonormal factor model itself is also interesting, since the posterior consistency under this model is empirically more robust against prior specification of the loading matrix in the high dimensional setting. It would be interesting to see a mathematical formulation of this empirical result in future works.

\section*{Acknowledgement} 
This research is supported in part by the  National Science Foundation of USA  Grants DMS-1613035, DMS-1712714 and DMS-1903139. 

\nocite{Bhattacharya2011Sparse,Liu2000Generalised,Liu1999Parameter,rovckova2016fast,fruehwirth2018sparse,fruehwirth2018sparse,geman1984stochastic,bernardo2003bayesian,zahn2007agemap,ghosh2009default,natarajan1998gibbs,kass1996selection,efron1973discussion,tanner1987calculation,gelfand1990sampling,liu2008monte,pati2014posterior,xie2018bayesian,kaiser1958varimax,carvalho2008high}
\bibliographystyle{chicago}
\bibliography{reference.bib}

\newpage
\begin{appendix}

\centerline{\huge\bf Appendix}

\section{Scaling group moves}
To see how the posterior distribution of the loading matrix is influenced by the SpSL prior, we need to observe the sample behavior at equilibrium with different priors. Due to the strong ties between the  loading matrix and the latent factors, samples are inflating slowly along the basic Gibbs sampling iterations, which demonstrates the slow mixing behavior of the Gibbs sampler.

A promising way to improve Markov Chain Monte Carlo (MCMC) convergence is to add a group move into the sampler.  \cite{Liu2000Generalised}  proposed ``generalized Gibbs sampling", which can be seen as a generalization of \cite{Liu1999Parameter}  for conditional sampling along the trajectories of any designed transformation group.  By taking advantage of the model structure and proposing a group trajectory that can cross various significant local modes, this group move can dramatically improve the MCMC convergence. The following theorem from \cite{Liu2000Generalised} characterizes how a group move should be conducted.

\begin{theorem}{\textbf{(Liu and Sabatti(2000))}}
Let $\pi$ be an arbitrary distribution on a space $\mathscr{Z}$, and suppose $t_{\alpha}(z): \mathscr{Z}\rightarrow \mathscr{Z}$ is a transformation parameterized by $\alpha \in \mathscr{A}$. Assume there is a group structure on both $\mathscr{A}$ and the transformation family, and a left-Haar measure $H$ on $\mathscr{A}$. If $z$ follows distribution $\pi$ and $\alpha$ is drawn from
\begin{equation}
\pi(\alpha|z) \propto \pi(t_{\alpha}(z))\left|\frac{\partial t_{\alpha}(z)}{\partial z}\right|H(d\alpha), \label{g-move}
\end{equation}
then $t_{\alpha}(z)$ follows distribution $\pi$.
\end{theorem}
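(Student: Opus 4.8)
The plan is to regard the proposed group move as producing a joint random pair $(z,\alpha)$, where $z\sim\pi$ and $\alpha$ is then drawn from the stated conditional, and to prove that the marginal law of $w:=t_\alpha(z)$ is again $\pi$; equivalently, that $\pi$ is stationary for the Markov transition ``sample $\alpha\mid z$, then move to $t_\alpha(z)$''. Writing $\pi$ also for the density of $\pi$ with respect to a reference (Lebesgue) measure $\mu$ on $\mathscr{Z}$, and setting $J(\alpha,z):=\left|\partial t_\alpha(z)/\partial z\right|$, the joint density of $(z,\alpha)$ with respect to $\mu(dz)\,H(d\alpha)$ is
$$\rho(z,\alpha)=\frac{\pi(z)\,\pi(t_\alpha(z))\,J(\alpha,z)}{c(z)},\qquad c(z):=\int_{\mathscr{A}}\pi(t_\beta(z))\,J(\beta,z)\,H(d\beta),$$
so the only genuinely awkward feature is that the normalizing constant $c(z)$ depends on $z$. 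First I would record the two structural identities that drive the argument: from the group law $t_{\alpha\beta}=t_\alpha\circ t_\beta$ and the chain rule one gets the Jacobian cocycle identity $J(\alpha\beta,z)=J(\alpha,t_\beta(z))\,J(\beta,z)$, hence in particular $J(\alpha^{-1},w)\,J(\alpha,t_{\alpha^{-1}}(w))=1$; and $H$ being a left-Haar measure means it is invariant under left translations $\beta\mapsto\eta\beta$, while under right translations and under inversion it only picks up the modular function $\Delta$.

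Next I would change variables from $(z,\alpha)$ to $(w,\alpha)$ with $w=t_\alpha(z)$ (so $z=t_{\alpha^{-1}}(w)$). At fixed $\alpha$ the Jacobian of $z\mapsto w$ is $J(\alpha,z)$, which cancels the explicit $J(\alpha,z)$ appearing in $\rho$, while $\pi(t_\alpha(z))=\pi(w)$; hence the joint density of $(w,\alpha)$ reduces to $\tilde\rho(w,\alpha)=\pi(w)\,\pi(t_{\alpha^{-1}}(w))\big/c(t_{\alpha^{-1}}(w))$. It then remains to integrate out $\alpha$ and show that $I(w):=\int_{\mathscr{A}}\pi(t_{\alpha^{-1}}(w))\big/c(t_{\alpha^{-1}}(w))\,H(d\alpha)=1$, which would leave the marginal density equal to $\pi(w)$, as desired.

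To evaluate $I(w)$ I would first compute $c(t_{\alpha^{-1}}(w))$ by expanding its defining integral, substituting $\gamma=\beta\alpha^{-1}$, and using the cocycle identity to rewrite $J(\beta,t_{\alpha^{-1}}(w))$ as $J(\gamma,w)/J(\alpha^{-1},w)$; the right-translation $\beta\mapsto\beta\alpha^{-1}$ then contributes a factor $\Delta(\alpha)$, giving $c(t_{\alpha^{-1}}(w))=\Delta(\alpha)\,c(w)\big/J(\alpha^{-1},w)$. Substituting this into $I(w)$ and then substituting $\beta=\alpha^{-1}$ in the remaining integral (inversion of a left-Haar measure contributing $\Delta(\beta)^{-1}$), all occurrences of $\Delta$ and of the Jacobian cancel, the integrand collapses back to $\pi(t_\beta(w))\,J(\beta,w)$, and its $H$-integral is exactly $c(w)$; hence $I(w)=1$ and $\tilde\rho(w)=\pi(w)$, completing the proof.

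The main obstacle is precisely this final cancellation: keeping the Jacobian cocycle and the right-translation/inversion transformation rules of the Haar measure bookkept consistently, so that the modular function disappears and one is not implicitly assuming unimodularity. A secondary set of issues is purely regularity/measure-theoretic: $\mathscr{Z}$ must carry enough smooth structure for the Jacobians to make sense, each $t_\alpha$ should be a diffeomorphism with inverse $t_{\alpha^{-1}}$, $H$ should be an honest left-Haar measure, and $c(z)$ should be finite $\pi$-almost everywhere with all integrands integrable so that Fubini and the change-of-variables formula apply. (An alternative route is to verify that the transition kernel is reversible with respect to $\pi$ by checking detailed balance between $z$ and $t_\alpha(z)$ using the pair $(\alpha,\alpha^{-1})$, but the direct marginalization above appears cleanest.)
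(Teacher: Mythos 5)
The paper itself does not prove this statement: it is quoted as Theorem A.1 from Liu and Sabatti (2000) and used as a black box to justify the scaling group moves of Appendix A, so there is no in-paper proof to compare against. Your argument is nonetheless a correct, self-contained derivation, and it is essentially the standard proof of the generalized-Gibbs invariance result. The three ingredients you isolate are exactly the ones that make it work: the Jacobian cocycle $J(\alpha\beta,z)=J(\alpha,t_\beta(z))J(\beta,z)$ (which presupposes the left-action convention $t_{\alpha\beta}=t_\alpha\circ t_\beta$; this is worth stating explicitly, since it is precisely what ties the choice of \emph{left}-Haar measure to the composition order), the right-translation rule $\int f(\beta h)\,H(d\beta)=\Delta(h)^{-1}\int f\,dH$, and the inversion rule $\int f(\beta^{-1})\,H(d\beta)=\int f(\beta)\Delta(\beta)^{-1}\,H(d\beta)$. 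The bookkeeping checks out: $c(t_{\alpha^{-1}}(w))=\Delta(\alpha)\,c(w)/J(\alpha^{-1},w)$ is correct, and after the substitution $\beta=\alpha^{-1}$ the factor $1/\Delta(\alpha)=\Delta(\beta)$ cancels against the $\Delta(\beta)^{-1}$ produced by inverting $H$, leaving $I(w)=c(w)/c(w)=1$. The calculation is also informative beyond the bare statement, since it shows that a right-Haar measure would leave an uncancelled modular factor, i.e., the result genuinely requires the left-Haar normalization when $\mathscr{A}$ is non-unimodular. The only caveats are the regularity ones you already list (each $t_\alpha$ a diffeomorphism with inverse $t_{\alpha^{-1}}$, $0<c(z)<\infty$ for $\pi$-a.e.\ $z$, Fubini applicable); in the paper's application the group is the abelian, hence unimodular, scale group acting on $(\beta_{1k},\dots,\beta_{Gk},\omega_{1k},\dots,\omega_{nk})$ with $H(d\alpha)\propto d\alpha/\alpha$, so all of these hold and the $\alpha_k^{G-n-1}$ factor in the paper's equation for $p(d\alpha_k)$ is exactly your $J(\alpha,z)\,H(d\alpha)$.
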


If $\pi$ in Theorem A.1. is the full posterior distribution, then $t_{\alpha}$ generated by the conditional distribution (\ref{g-move})  gives a transformation that preserves the target distribution $\pi$.  We can add this transformation after each round of Gibbs sampling to improve convergence. To design group moves that can move the loading matrix and factors jointly in the synthetic example, we
consider the following group of scale transformations for $k=1,\cdots,K$:
$$t_{\alpha_k}(\beta_{1k},\cdots,\beta_{Gk},\omega_{1k},\cdots,\omega_{nk}) = \left(\alpha_k\beta_{1k},\cdots,\alpha_k\beta_{Gk},\frac{1}{\alpha_k}\omega_{1k},\cdots,\frac{1}{\alpha_k}\omega_{nk} \right),$$
and draw $\alpha_k$ sequentially from:
$$p(d \alpha_k) \propto \prod_{j=1}^G ((1-\gamma_{jk})\psi(\alpha_k\beta_{jk}|\lambda_{0})+\gamma_{jk}\psi(\alpha_k\beta_{jk}|\lambda_1))\times \prod_{i=1}^n \exp(-\frac{\omega_{ik}^2}{2\alpha_k^2})\times\alpha_k^{G-n-1}\ d\alpha_k$$

We design these group moves to rescale each column since we observe a synchronous inflation within every column during Gibbs sampling and changes of magnitude are encumbered due to the strong connection between factors and loading. These scaling group moves are cheap to implement since the conditional distribution of  $\alpha_k$ is  a univariate and unimodal distribution. More delicate moves such as linear restructuring (corresponding to `rotate' the loading in PXL-EM) $t_A(B,\Omega): B,\Omega \rightarrow BA, A^{-1}\Omega$ can be  difficult to implement in practice.

\section{The modified Ghosh-Dunson model}
Since the magnitude inflation is associated with the overdose of independent slab priors on the loading matrix, an immediate counter measure would be to   control the number of slab priors used. \cite{ghosh2009default}  proposed to use an inverse gamma prior for the variance of the normal factors and impose the standard Gaussian prior  on elements of the loading matrix, which will be called th Ghosh-Dunson model.
Here we propose a modified Ghosh-Dunson model  by relocating the variance parameters of the factors to the loading matrix and imposing a  SpSL prior on its elements:
  \begin{equation}\label{RGD}
   \begin{split}
  \textbf{Model:}\ 
  &\by_i\mid \bomega_i,\bB,\bSigma\stackrel{i.i.d.}{\sim} \mathcal{N}_G(\bB\bomega_i,\bSigma),\ \bomega_i \stackrel{i.i.d.}{\sim} \mathcal{N}_K(\mathbf{0},\mathbf{I}_K)\\
  \textbf{Priors:}\ &\beta_{jk}=q_{jk}r_k,\ \ p(r_k|\lambda)= \psi(r_k|\lambda);\\
  &p(q_{jk}|\gamma_{jk},\lambda_{0},\lambda_1)=(1-\gamma_{jk})\psi(q_{jk}|\lambda_{0})+\gamma_{jk}\psi(q_{jk}|\lambda_1), \ \ \lambda_{0}\gg \lambda_1; \\
&\gamma_{jk}|\theta_{k} \sim  \text{Bernoulli}(\theta_{k}) \ \mbox{independently; } \\
\ &\theta_{k}=\prod^k_{l=1} \nu_l,\  \ \nu_l \stackrel{i.i.d.}{\sim}  \text{Beta}(\alpha,1);\\
&\sigma_j^2\stackrel{i.i.d.}{\sim}  \text{Inverse-Gamma}(\eta/2,\eta \varepsilon/2).
\end{split}
\end{equation}
where $\beta_{jk}$ denote the $(j,k)^{th}$ element of $\bB$ and $\psi(\cdot |\lambda)$ is the normal density with precision $\lambda$. We chose $\lambda_0$ large and $\lambda_1=1$.
 
In this framework, each loading element $\beta_{jk}$ is expressed as the product of a column-wise magnitude parameter $r_k$ and the `normalized' loading $q_{jk}$.
\cite{ghosh2009default}'s original model corresponds to assuming $\gamma_{jk}\equiv \theta_k\equiv 1$, i.e., a normal instead of mixture normal prior for the $q_{jk}$.
We impose a diffuse normal prior on the $r_k$'s and a  SpSL prior on $q_{jk}$. With this dependent prior specification, the number of the ``slab parameters" is greatly reduced (all elements in each column of $\bB$ share a common ``slab parameter'' $r_k$), while marginally the prior on each $\beta_{jk}$ is the same as that of the independent SpSL prior. This prior setup on the loading matrix is similar to the one in the hierarchical linear model in \cite{jia2007mapping} where $\bOmega$ is prescribed, and the prior setup on $\bB$ establishes connections between rows of the loading matrix to prevent the degeneration of the original model to multiple independent linear regressions. However, the hierarchical linear model is not subject to the inflation problem even if completely independent priors are imposed on the loading matrix since $\bOmega$ is already prescribed.

 Although the dependent slab prior specification is an effective way for resolving the posterior inflation problem, the justification of the posterior consistency is rather difficult under this framework. We simply provide some numerical results in Section 6 and 7 to compare  the posterior distribution based on the modified Ghosh-Dunson model (\ref{RGD}) with that   resulting from  our strategy of imposing the $\sqrt{n}$-orthonormal factor assumption. The simulations are performed with $\alpha=1/G,\eta=\epsilon=1,\lambda=0.001,\lambda_0=200, \lambda_1=1$ and $K=8$ (in Section 6) / $1$ (in Section 7) using a Gibbs sampler starting from the MAP identified by the PXL-EM algorithm.
 
\section{Mathematical Proofs}
\subsection{Proof of Theorem 4.1}
\begin{proof}
Let $\bbeta_1$ be the vector formed by the $\beta_{jk}$'s with their corresponding $\gamma_{jk}=1$ and let $\bbeta_0$ be the vector formed by $\beta_{jk}$'s with their corresponding $\gamma_{jk}=0$.
\begin{equation}
\begin{split}
\pi(\bB|\bY,\bSigma,\bGamma,m) 
\propto & f_m(\bbeta_1,\bbeta_0) \equiv \prod_{\{j,k:\gamma_{jk}=1\} }\phi_m(\beta_{jk}) \prod_{\{j,k:\gamma_{jk}=0\} }\psi(\beta_{jk})\\
\times &|\bB\bB^T+\bSigma|^{-{\frac{n}{2}}} \exp \left\{ -\frac{1}{2}\text{tr} \left[(\bB\bB^T+\bSigma)^{-1}(\sum_{i=1}^n \by_i \by_i^T)\right] \right\}
\end{split}
\end{equation}

Let $\lambda_1(M)\geq \cdots\geq\lambda_G(M)$ denote the eigenvalues of a matrix $M$ 
and let $\mu_1\geq\cdots\geq \mu_G$ be the eigenvalues of $\bB\bB^T+\bSigma$. According to Weyl's inequality, $$\lambda_j(\bB\bB^T)+\lambda_1(\bSigma)\geq\mu_j\geq \lambda_j(\bSigma),  \ \ j=1,\cdots,G,$$
we have 
\begin{equation}
\begin{split}
\sum_{j=1}^G \frac{\lambda_j(\bY\bY^T)}{\lambda_j(\bB\bB^T)+\lambda_1(\bSigma)} \leq\sum_{j=1}^G \frac{\lambda_j(\bY\bY^T)}{\mu_{j}}\leq \text{tr}\left[(\bB\bB^T+\bSigma)^{-1}(\sum_{i=1}^n \by_i \by_i^T)\right]\\
\leq \sum_{j=1}^G \frac{\lambda_j(\bY\bY^T)}{\mu_{G+1-j}} \leq \sum_{j=1}^G \frac{\lambda_j(\bY\bY^T)}{\lambda_{G+1-j}(\bSigma)}
\end{split}
\end{equation}
Note that $\lambda_j(\bB)=0 $ for $ j>K$, so we have:
\begin{equation}\label{23}
\sum_{j=K+1}^G \frac{\lambda_j(\bY\bY^T)}{\lambda_1(\bSigma)} \leq \text{tr}\left[(\bB\bB^T+\bSigma)^{-1}(\sum_{i=1}^n \by_i \by_i^T)\right] \leq \sum_{j=1}^G \frac{\lambda_j(\bY\bY^T)}{\lambda_{G+1-j}(\bSigma)}
\end{equation}
According to the Minkowski determinant theorem, $|\bB\bB^T+\bSigma|\geq |\bSigma|$. Furthermore, \[|\bB\bB^T+\bSigma|=\prod_{j=1}^G \mu_j \leq \prod_{j=1}^G (\lambda_j(\bB\bB^T)+\lambda_1(\bSigma)) \leq (\lambda_1(\bSigma))^{G-K} \prod_{j=1}^K(||\bB\bB^T||_F+\lambda_1(\bSigma)).\] 
Combining this with (\ref{23}), we have
\begin{equation}
|\bB\bB^T+\bSigma|^{-n/2}\exp\left\{-\frac{1}{2}\text{tr}\left[(\bB\bB^T+\bSigma)^{-1}(\sum_{i=1}^n \by_i \by_i^T)\right] \right\}\leq |\bSigma|^{-n/2} \text{exp}(-\frac{1}{2} \sum_{j=K+1}^G \frac{\lambda_j(\bY\bY^T)}{\lambda_1(\bSigma)})
\end{equation}
and
\begin{equation}
\begin{split}
&|\bB\bB^T+\bSigma|^{-n/2}\exp\left\{-\frac{1}{2}\text{tr}\left[(\bB\bB^T+\bSigma)^{-1}(\sum_{i=1}^n \by_i \by_i^T)\right] \right\}\\
\geq &(\lambda_1(\bSigma))^{-n(G-K)/2} \prod_{j=1}^K(||\bB||_F^2+\lambda_1(\bSigma))^{-n/2} \ \text{exp}(-\frac{1}{2} \sum_{j=1}^G \frac{\lambda_j(\bY\bY^T)}{\lambda_{G+1-j}(\bSigma)}).
\end{split}
\end{equation}
Therefore,
\begin{equation}
\begin{split}
&\int_{\bB \in S} f_m(\bbeta_1,\bbeta_0) \ d \bB\\
\leq &\int_{\bB \in S} d\bB \ |\bSigma|^{-n/2} \text{exp}(-\frac{1}{2} \sum_{j=K+1}^G \frac{\lambda_j(\bY\bY^T)}{\lambda_1(\bSigma)}) \ (\text{max}_{\beta}(\phi_m(\beta)))^{\#\{\gamma_{jk}=1\} }\ (\text{max}_{\beta}(\psi(\beta)))^{\#\{\gamma_{jk}=0\} }\\
=& C_1(\text{max}_{\beta}(\phi_m(\beta)))^{\#\{\gamma_{jk}=1\} }
\end{split}
\end{equation}
For a constant $R>0$,
\[
\begin{split}
&\int_{|\bbeta_0|\leq R} \int_{\bbeta_1\in S_m^{\#\{\gamma_{jk}=1\}}} f_m(\bbeta_1,\bbeta_0) \ d \bbeta_1 d\bbeta_0\\
\geq &\int_{|\bbeta_0|\leq R} \int_{\bbeta_1\in S_m^{\#\{\gamma_{jk}=1\}}} \prod_{j=1}^K(||\bB||_F^2+\lambda_1(\bSigma))^{-n/2} \  \ d \bbeta_1 d\bbeta_0\\
\times&(\lambda_1(\bSigma))^{-n(G-K)/2} \text{exp}(-\frac{1}{2} \sum_{j=1}^G \frac{\lambda_j(\bY\bY^T)}{\lambda_{G+1-j}(\bSigma)}) (C \ \text{max}_{\beta}(\phi_m(\beta)))^{\#\{\gamma_{jk}=1\} } (\text{min}_{\beta<R}(\psi(\beta)))^{\#\{\gamma_{jk}=0\} }
\end{split}
\]
\begin{equation}
\begin{split}
\geq &C_2 (\text{max}_{\beta}(\phi_m(\beta)))^{\#\{\gamma_{jk}=1\} } \int_{|\bbeta_0|\leq R} \int_{\bbeta_1\in S_m^{\#\{\gamma_{jk}=1\}}} \prod_{j=1}^K(||\bB||_F^2+\lambda_1(\bSigma))^{-n/2} \ d \bbeta_1 d\bbeta_0\\
\rightarrow & C_2 (\text{max}_{\beta}(\phi_m(\beta)))^{\#\{\gamma_{jk}=1\} } \int_{|\bbeta_0|\leq R} \int_{\bbeta_1\in \mathcal{R}^{\#\{\gamma_{jk}=1\}}} \prod_{j=1}^K(||\bB||_F^2+\lambda_1(\bSigma))^{-n/2} \ d \bbeta_1 d\bbeta_0
\end{split}
\end{equation}
as $m\rightarrow \infty$ following the monotone convergence theorem. We also know that
\begin{equation}
\begin{split}
&\int_{|\bbeta_0|\leq R} \int_{\bbeta_1\in \mathcal{R}^{\#\{\gamma_{jk}=1\}}} \prod_{j=1}^K(||\bB||_F^2+\lambda_1(\bSigma))^{-n/2} \ d \bbeta_1 d\bbeta_0\\
\geq &\int_{|\bbeta_0|\leq R} \int_{\bbeta_1\in \mathcal{R}^{\#\{\gamma_{jk}=1\}}} \prod_{j=1}^K(\|\bbeta_1\|^2+R^2+\lambda_1(\bSigma))^{-n/2} \ d \bbeta_1 d\bbeta_0\\
=& (\int_{|\bbeta_0|\leq R} d\bbeta_0) \int_{\bbeta_1\in \mathcal{R}^{\#\{\gamma_{jk}=1\}}} \prod_{j=1}^K(\|\bbeta_1\|^2+R^2+\lambda_1(\bSigma))^{-n/2} \|\bbeta_1\|^{\#\{\gamma_{jk}=1\}-1}\ d \|\bbeta_1\| d(\gamma(\bbeta_1)) 
\end{split}
\end{equation}
from the polar coordinate transformation, of which the last term goes to infinity since $\#\{\gamma_{jk}=1\}\geq n\times K$. Taken together, we have shown that 
\begin{equation}
\text{lim}_{m\rightarrow \infty} \frac{\int_{\bB \in S} f_m(\bbeta_1,\bbeta_0) \ d \bB}{\int_{|\bbeta_0|\leq R} \int_{\bbeta_1\in S_m^{\#\{\gamma_{jk}=1\}}} f_m(\bbeta_1,\bbeta_0) \ d \bbeta_1 d\bbeta_0}=0,
\end{equation}
which implies the theorem.
\end{proof}
\subsection{Proof of Theorem 4.2}
\begin{proof}
By marginalizing out $\bOmega$ from the full posterior distribution, we know that:
\begin{equation}
\pi(\bB|\bY,\bSigma,\bGamma,m) \propto \int f(\bY|\bB,\bOmega,\bSigma) f(\bOmega) d\bOmega \prod_{\{jk:\gamma_{jk}=1\} }\frac{\phi_m(\beta_{jk})}{\phi_m(0)} \prod_{\{jk:\gamma_{jk}=0\} }\psi(\beta_{jk})=\pi_m^u(\bB)
\end{equation}
\begin{equation}
\pi(\bB|\bY,\bSigma,\bGamma,\infty) \propto \int f(\bY|\bB,\bOmega,\bSigma) f(\bOmega) d\bOmega  \prod_{\{jk:\gamma_{jk}=0\} }\psi(\beta_{jk})=\pi_{\infty}^u(\bB)
\end{equation}
For any Borel set $S$, $\int_S \pi_m^u(\bB) d\bB \leq \int_S \pi_{\infty}^u(\bB) d\bB <\infty $, by the dominant convergence theorem we have:
\begin{equation}
\lim_{m\rightarrow \infty}\int_S \pi_m^u(\bB) d\bB = \int_S \pi_{\infty}^u(\bB) d\bB,
\end{equation}
\begin{equation}
\lim_{m\rightarrow \infty} \left. \int_S \pi_m^u(\bB) d\bB \right/\int_{\mathcal{R}^{G\times K}} \pi_m^u(\bB) d\bB = \left.\int_S \pi_{\infty}^u(\bB) d\bB \right/ \int_{\mathcal{R}^{G\times K}} \pi_{\infty}^u(\bB) d\bB.
\end{equation}
This means that $\bB|\bY,\bSigma,\bGamma,m$ converges to $\bB|\bY,\bSigma,\bGamma,\infty$  in distribution as $m \rightarrow \infty$.
\end{proof}

\subsection{Proof of Lemma 5.1}
 \begin{proof}
 For $\epsilon >0 \text{ and } L>0$,
\begin{equation}\label{34}
\begin{split}
P(||\bV(\bOmega_{0,n})^{\bot}\bV(\bOmega)^T||_F>\epsilon|\bY, \bSigma_G)
\leq 1/(1+\frac{P(||\bV(\bOmega_{0,n})^{\bot}\bV(\bOmega)^T||_F<\epsilon/L|\bY, \bSigma_G)}{P(||\bV(\bOmega_{0,n})^{\bot}\bV(\bOmega)^T||_F>\epsilon|\bY, \bSigma_G)}). 
\end{split}
\end{equation}
From
\begin{equation}
\pi(d \bV(\bOmega)|\bY,\bSigma)\propto \text{exp}\Big(\sum_{j=1}^G\frac{1}{2\sigma_j^2}\|\mathcal{P}_{\bV(\bOmega)}(\bY_{j\centerdot})\|^2\Big)m(d \bV(\bOmega)),
\end{equation}
we can compute
\begin{equation}\label{35}
\begin{split}
&P(||\bV(\bOmega_{0,n})^{\bot}\bV(\bOmega)^T||_F<\epsilon/L|\bY, \bSigma_G) \\
=&C \int_{\{\bV(\bOmega):||\bV(\bOmega_{0,n})^{\bot}\bV(\bOmega)^T||_F<\epsilon/L\}} \text{exp}\Big(-\sum_{j=1}^G\frac{1}{2\sigma_j^2}\|\mathcal{P}_{\bV(\bOmega)^{\bot}}(\bY_{j\centerdot})\|^2\Big)m(d \bV(\bOmega))\\
=&C \int_{\{\bV(\bOmega):||\bV(\bOmega_{0,n})^{\bot}\bV(\bOmega)^T||_F<\epsilon/L\}} \text{exp}\Big(-\frac{1}{2}||\bV(\bOmega)^{\bot}\bV(\bOmega_{0,n})^T \bK(\bOmega_{0,n})^T \bB_{0,G}^T\bSigma_G^{-1/2}||_F^2\Big)m(d \bV(\bOmega))\\
\geq & C \int_{\{\bV(\bOmega):||\bV(\bOmega_{0,n})^{\bot}\bV(\bOmega)^T||_F<\epsilon/L\}} \text{exp}\Big(-\frac{1}{2}||\bV(\bOmega)^{\bot}\bV(\bOmega_{0,n})^T||_F^2 \lambda_{max}(\bSigma_G^{-1/2}\bB_{0,G}\bK(\bOmega_{0,n}))^2\Big)m(d \bV(\bOmega))\\
\geq & C m_n(\{\bV: ||\bV_0 \bV^T||_F < \frac{\epsilon}{L}\})\times  \exp\Big(-\frac{1}{2}\frac{\epsilon^2}{L^2}\lambda_{max}(\bSigma_G^{-1/2}\bB_{0,G}\bK(\bOmega_{0,n}))^2\Big),
\end{split}
\end{equation}
where $\bV_0$ is a $K \times n$ orthonormal matrix. Similarly, we can derive
\begin{equation}\label{36}
\begin{split}
&P(||\bV(\bOmega_{0,n})^{\bot}\bV(\bOmega)^T||_F>\epsilon|\bY, \bSigma_G) \\
\leq & C m_n(\{\bV: ||\bV_0 \bV^T||_F >\epsilon\}) \exp\Big(-\frac{1}{2}\epsilon^2\lambda_{min}(\bSigma_G^{-1/2}\bB_{0,G}\bK(\bOmega_{0,n}))^2\Big).
\end{split}
\end{equation}
Inserting (\ref{35}) and (\ref{36}) to (\ref{34}), we complete the proof.
\end{proof}

\subsection{Proof of Theorem 5.2}

\begin{proof}
First, we show a strong uniform law of large number that:
\begin{equation}\label{37}
\lim_{G \rightarrow \infty}\sup_{\bOmega}\Big|\frac{1}{G}\sum_{j=1}^G\frac{1}{2\sigma_j^2}\|\mathcal{P}_{\bV(\bOmega)}(\bY_{j\centerdot})\|^2-\frac{1}{G}\sum_{j=1}^G\frac{1}{2\sigma_j^2}\mathbb{E}\|\mathcal{P}_{\bV(\bOmega)}(\bY_{j\centerdot})\|^2 \Big|=0 \ \ a.s.
\end{equation}
Define the inner part of the absolute value on left-hand side of (\ref{37}) as $D_G(\bOmega,\bY)$. We know for $\bOmega$ and $\bOmega_1$,
\begin{equation}
\begin{split}
|\|\mathcal{P}_{\bV(\bOmega)}(\bY_{j\centerdot})\|^2-\|\mathcal{P}_{\bV(\bOmega_1)}(\bY_{j\centerdot})\|^2|&=\bY_{j\cdot}^T(\bP_{\bV(\bOmega)}-\bP_{\bV(\bOmega_1)})\bY_{j\cdot}\\
&\leq 2\sqrt{K(n-K)} ||\bV(\bOmega_1)^{\bot}\bV(\bOmega)^T||_F \|\bY_{j\cdot}\|^2
\end{split}
\end{equation}
and
\begin{equation}
|\mathbb{E}\|\mathcal{P}_{\bV(\bOmega)}(\bY_{j\centerdot})\|^2-\mathbb{E}\|\mathcal{P}_{\bV(\bOmega_1)}(\bY_{j\centerdot})\|^2|\leq 2\sqrt{K(n-K)} ||\bV(\bOmega_1)^{\bot}\bV(\bOmega)^T||_F \|\bOmega_{0,n}^T(\bB_0)_{j\cdot}\|^2.
\end{equation}
Thus
\begin{equation}\label{40}
\begin{split}
|D_G(\bOmega,\bY)-D_G(\bOmega_1,\bY)|&\leq 2\sqrt{K(n-K)} ||\bV(\bOmega_1)^{\bot}\bV(\bOmega)^T||_F\\
&\times\Big(\frac{1}{G} \sum_{j=1}^G  \frac{1}{2\sigma_j^2}\|\bY_{j\cdot}\|^2 +\frac{1}{G} \sum_{j=1}^G  \frac{1}{2\sigma_j^2}\|\bOmega_{0,n}^T(\bB_0)_{j\cdot}\|^2\Big).
\end{split}
\end{equation}
In order to apply the Kolmogorov's strong law of large number, we check the variance of $\frac{1}{2\sigma_j^2}\|\mathcal{P}_{\bV(\bOmega_1)}(\bY_{j\centerdot})\|^2$ and $  \frac{1}{2\sigma_j^2}\|\bY_{j\cdot}\|^2 $:
\begin{equation}
\text{Var}(\frac{1}{2\sigma_j^2}\|\mathcal{P}_{\bV(\bOmega_1)}(\bY_{j\centerdot})\|^2)=\frac{1}{\sigma_j^2}\|\bV(\bOmega_1)\bOmega_{0,n}^T(\bB_0)_{j\cdot}\|^2+K/2
\end{equation}
\begin{equation}
\text{Var}(\frac{1}{2\sigma_j^2}\|\bY_{j\centerdot}\|^2)=\frac{1}{\sigma_j^2}\|\bOmega_{0,n}^T(\bB_0)_{j\cdot}\|^2+n/2.
\end{equation}
Both of them are uniformly upper bounded with respect to $j$. So by Kolmogorov's strong law, we  have for every fixed $\bOmega_1$, $D_G(\bOmega_1,\bY)$ is almost surely converging to 0 as $G\rightarrow \infty$ and
\begin{equation}
\frac{1}{G} \sum_{j=1}^G  \frac{1}{2\sigma_j^2}\|Y_{j\cdot}\|^2 - \frac{1}{G} \sum_{j=1}^G  \Big(\frac{1}{2\sigma_j^2}\|\bOmega_{0,n}^T(\bB_0)_{j\cdot}\|^2+n/2\Big)\rightarrow 0 \ \ a.s.
\end{equation}
For a fixed $\epsilon>0$, define a neighborhood $U_{\bV(\bOmega_1)}$ for every $\bV(\bOmega_1)$,
\begin{equation}
\begin{split}
U_{\bV(\bOmega_1)}=\{\bV: ||\bV(\bOmega_1)^{\bot}\bV^T||_F<\frac{\epsilon}{4\sqrt{K(n-K)}} \Big(||\bOmega_{0,n}||^2_F \max_j\Big|\Big|\frac{(\bB_0)_{j\cdot}}{\sigma_j}\Big|\Big|^2+n/2+\epsilon\Big)^{-1},\\
\bV \text{ is an orthonormal $K$-frames in } \mathbb{R}^n \}
\end{split}
\end{equation}
Let $\mathcal{V}$ denote the Stiefel manifold $St(K,n)$, then there exists $\bOmega_1, \bOmega_2, \cdots, \bOmega_m$ such that $\mathcal{V}=\bigcup_{t=1}^m U_{\bV(\bOmega_t)}$. For $t=1,\cdots,m$, $D_G(\bOmega_t,\bY)\rightarrow 0$ almost surely, let $\mathcal{Y}$ denotes the realizations of $\bY$ such that $D_G(\bOmega_t,\bY)\rightarrow 0$ for all $t$ and 
\[\frac{1}{G} \sum_{j=1}^G  \frac{1}{2\sigma_j^2}\|\bY_{j\cdot}\|^2 - \frac{1}{G} \sum_{j=1}^G  \Big(\frac{1}{2\sigma_j^2}\|\bOmega_{0,n}^T(\bB_0)_{j\cdot}\|^2+n/2\Big)\rightarrow 0.\]
By definition $P(\mathcal{Y})=1$, for a realization $\by$ in $\mathcal{Y}$ there exist $G_0, G_1,\cdots,G_m$ such that
\begin{equation}\label{45}
\frac{1}{G} \sum_{j=1}^G  \frac{1}{2\sigma_j^2}\|\by_{j\cdot}\|^2 - \frac{1}{G} \sum_{j=1}^G  \Big(\frac{1}{2\sigma_j^2}\|\bOmega_{0,n}^T(\bB_0)_{j\cdot}\|^2+n/2\Big)<\epsilon/2, \text{for}\ G>G_0
\end{equation}
\begin{equation}\label{46}
D_G(\bOmega_t,\by)<\epsilon/2, \text{for}\ G>G_t, t=1,\cdots,m
\end{equation}
When $G>\max_{t}\{G_t\}$, for any $\bOmega$, there exists $\bOmega_{t_0}$ such that $\bV(\bOmega)\in U_{\bV(\bOmega_{t_0})}$, by (\ref{40}), (\ref{45}) and (\ref{46}):
\begin{equation}
|D_G(\bOmega,\by)|\leq |D_G(\bOmega,\by)-D_G(\bOmega_{t_0},\by)|+|D_G(\bOmega_{t_0},\by)|\leq \epsilon
\end{equation}
From here we have proved (\ref{37}).

Combined with lemma 5.1, we know when $G>\max_{t}\{G_t\}$,
\begin{equation}
\begin{split}
&P(||\bV(\bOmega_{0,n})^{\bot}\bV(\bOmega)^T||_F<\tilde{\epsilon}/L|\by, \bSigma_G) \\
=&C \int_{\{\bV(\bOmega):||\bV(\bOmega_{0,n})^{\bot}\bV(\bOmega)^T||_F<\tilde{\epsilon}/L\}} \text{exp}\Big(\sum_{j=1}^G\frac{1}{2\sigma_j^2}\|\mathcal{P}_{\bV(\bOmega)}(\by_{j\centerdot})\|^2\Big)m(d \bV(\bOmega))\\
=&\tilde{C} \int_{\{\bV(\bOmega):||\bV(\bOmega_{0,n})^{\bot}\bV(\bOmega)^T||_F<\tilde{\epsilon}/L\}} \text{exp}\Big(G\ D_G(\bOmega,\by)\\
&-\frac{1}{2}||\bV(\bOmega)^{\bot}\bV(\bOmega_{0,n})^T\bK(\bOmega_{0,n})^T\bB_{0,G}^T\bSigma_G^{-1/2}||_F^2\Big)m(d \bV(\bOmega))\\
\geq & \tilde{C} m_n(\{\bV: ||\bV_0 \bV^T||_F < \frac{\tilde{\epsilon}}{L}\}) \exp\Big(-\frac{1}{2}\frac{\tilde{\epsilon}^2}{L^2}\lambda_{max}(\bSigma_G^{-1/2}\bB_{0,G}\bK(\bOmega_{0,n}))^2-G\epsilon\Big),
\end{split}
\end{equation}
and on the other hand,
\begin{equation}
\begin{split}
&P(||\bV(\bOmega_{0,n})^{\bot}\bV(\bOmega)^T||_F>\tilde{\epsilon}|\by, \bSigma_G) \\
\leq & \tilde{C} m_n(\{\bV: ||\bV_0 \bV^T||_F >\tilde{\epsilon}\}) \exp\Big(-\frac{1}{2}\tilde{\epsilon}^2\lambda_{min}(\bSigma_G^{-1/2}\bB_{0,G}\bK(\bOmega_{0,n}))^2+G\epsilon\Big).
\end{split}
\end{equation}
Therefore we have 
\begin{equation}\label{50}
\begin{split}
\frac{P(||\bV(\bOmega_{0,n})^{\bot}\bV(\bOmega)^T||_F<\tilde{\epsilon}/L|\by, \bSigma_G)}{P(||\bV(\bOmega_{0,n})^{\bot}\bV(\bOmega)^T||_F>\tilde{\epsilon}|\by, \bSigma_G)}\geq &m_n(\{\bV: ||\bV_0 \bV^T||_F < \frac{\tilde{\epsilon}}{L}\})\\
&\times \exp\Big(\frac{3}{8}\tilde{\epsilon}^2 \lambda_{min}(\bSigma_G^{-1/2}\bB_{0,G}\bK(\bOmega_{0,n}))^2-2G\epsilon\Big)
\end{split}
\end{equation}
Since $\lambda_{\min}(\bB_{0,G})/\sqrt{G}$ is lower bounded,  $\lambda_{min}(\bSigma_G^{-1/2}\bB_{0,G}\bK(\bOmega_{0,n}))/\sqrt{G}$ is also lower bounded. Select $\epsilon$ such that
$$\epsilon\leq \frac{1}{8} \tilde{\epsilon}^2\Big(\lambda_{min}(\bSigma_G^{-1/2}\bB_{0,G}\bK(\bOmega_{0,n}))/\sqrt{G}\Big)^2,$$
then the right hand side of (\ref{50}) is no smaller than
$$m_n(\{\bV: ||\bV_0 \bV^T||_F < \frac{\tilde{\epsilon}}{L}\})\times \exp\Big(\frac{1}{8}\tilde{\epsilon}^2 \lambda_{min}(\bSigma_G^{-1/2}\bB_{0,G}\bK(\bOmega_{0,n}))^2\Big).$$
which goes to infinity by the lower boundedness of $\lambda_{min}(\bB_{0,G})/\sqrt{G}$.  

Thus 
$||\bV(\bOmega_{0,n})^{\bot}\bV(\bOmega)^T||_F|\by, \bSigma\rightarrow 0$ in probability for every $\by$ in $\mathcal{Y}$ which leads to the conclusion.

\end{proof}
The spirit of this proof is essentially the same as that of the classical Bayesian consistency theorem,
but is involved with infinite-dimensional potential data. 
In theorem 5.2, we made the assumption that the $L_2$ norm of rows of $\bB_0$ are upper bounded due to the proof, which restricted ourselves to the case where all singular values of $\bB_{0,G}$ are increasing at the order of $\sqrt{G}$.
This condition can be satisfied when rows of $\bB_0$ are i.i.d from an underlying distribution $p_B$: 
\[\lambda_{k}(\bB_{0,G})/\sqrt{G}=\sqrt{\lambda_{k}(\bB_{0,G}^T\bB_{0,G}/G)}\rightarrow \sqrt{\lambda_{k}(E_{p_B}(\bB_{j\cdot}\bB_{j\cdot}^T))}\ ,\ G \rightarrow \infty \ \ a.s.\]

\subsection{Remark of Section 5.1.2}
From \cite{cai2018rate}, for every pair of $\bV(\bOmega_{0,n})$ and $\bV(\bOmega)$ there exists an orthogonal matrix $\bW$ such that $||\bV(\bOmega)-\bW \bV(\bOmega_{0,n})||_F \leq \sqrt{2}|| \sin(\angle(\bV(\bOmega_{0,n}),\bV(\bOmega)))||_F$ where $\angle(\bV(\bOmega_{0,n}),\bV(\bOmega))$ denotes the diagonal matrix formed by canonical angles between row spaces of $\bOmega_{0,n}$ and $\bOmega$. For fixed $n$ and $G=s \rightarrow \infty$, using the shrinkage of canonical angles between row spaces from Theorem 5.2, there exists a orthogonal random matrix $\bW$ such that $||\bV(\bOmega)-\bW \bV(\bOmega_{0,n})||_F|\bY,\bSigma \rightarrow 0$ in probability as $G \to \infty$. The posterior distribution of $\bV(\bOmega)$ conditioned on the row vector space of $\bOmega$ is actually an uniform distribution on all the orthonormal basis within since the density in (9) involves $\bV(\bOmega)$ only through the row vector space. Therefore $\bV(\bOmega)|\bY,\bSigma \sim \bO_1 \bV(\bOmega)|\bY,\bSigma\sim \bO_1 (\bW\bV(\bOmega_{0,n})+(\bV(\bOmega)-\bW\bV(\bOmega_{0,n})))|\bY,\bSigma$ for an independent uniform random orthogonal matrix $\bO_1$. Since $||\bO_1(\bV(\bOmega)-\bW\bV(\bOmega_{0,n}))||_F|\bY,\bSigma \rightarrow 0$, the posterior sample of $\bV(\bOmega)$ can be asymptotically express as $\bO \bV(\bOmega_{0,n})$ where $\bO=\bO_1\bW$ is an independent uniform random orthogonal matrix, i.e., $\bV(\bOmega)$ differs $\bO \bV(\bOmega_{0,n})$ by a matrix that has Frobenius norm converging to 0 under the asymptotic regime of Theorem 5.2.

\subsection{Proof of Theorem 5.3}
Theorem 5.3 is an immediate result of the following lemma and  Theorem 5.2.

\begin{lemma}
Let $(\bB_0,\bGamma_0)$ be a regular infinite loading pair with $\bGamma_0$ known, $\bOmega_0$ be a $K \times \infty$ matrix and $\bSigma=diag(\sigma_1^2,\cdots)$ be a known infinite diagonal matrix.  Define $\bSigma_G=diag(\sigma^2_{\pi^{-1}(1)},\cdots,\sigma^2_{\pi^{-1}(G)})$ and $\bSigma_{G}^{(k)}=diag(\sigma^2_{\pi^{-1}(l_{0,k})},\cdots,\sigma^2_{\pi^{-1}(l_{0,k+1}-1)})$. $\bOmega_{0,n}$ denotes the matrix formed by the first $n$ columns of $\bOmega$. Suppose there exists an $\epsilon>0$ such that the following holds for the increasing pair $(n,G)=\{(n_t,G_t)\}_{t=1,\cdots}$.

1. $\min_{k'} \lambda_{min}((\bSigma_G^{(k')})^{-1/2}\bB_{0,G}^{(k')}\bK(\bOmega_{0,n})_{1:k'})\rightarrow \infty$ as $t\rightarrow \infty$.

2. 
Let $\bV_0$ be any fixed $K \times n$ orthonormal matrix,
\[
\begin{split}
-\text{log}(m_n(\bigcap_{k=1}^K\Big\{\bV:& ||(\bV_0)_{1:k}^{\bot} \bV_{1:k}^T||_F< \frac{\epsilon \min_{k'} \lambda_{min}((\bSigma_G^{(k')})^{-1/2}\bB_{0,G}^{(k')}\bK(\bOmega_{0,n})_{1:k'})}{\lambda_{max}((\bSigma_G^{(k)})^{-1/2}\bB_{0,G}^{(k)}\bK(\bOmega_{0,n})_{1:k})}\Big\}))\\
=&o(\epsilon^2 \min_{k'} \lambda_{min}((\bSigma_G^{(k')})^{-1/2}\bB_{0,G}^{(k')}\bK(\bOmega_{0,n})_{1:k'})^2)\ \text{as}\ t\rightarrow \infty.
\end{split}
\]

Let $\bY=\bB_{0,G} \bOmega_{0,n}$ and model $\bY_{\cdot i}$ with $\mathcal{N}_G(\bB\bOmega_{\cdot i},\bSigma_G)$ for $i=1,\cdots,n$. Impose a point mass and flat mixture prior on entries of $\bB$ according to the feature allocation matrix $\bGamma_{0,G}$ and assume a distribution on $\bOmega$ that is invariant under right orthogonal transformations, then for a random draw $\bOmega$ from its posterior distribution, $$P(\bigcup_{k=1}^K \{\bV: ||\bV(\bOmega_{0,n})_{1:k}^{\bot}\bV(\bOmega)_{1:k}^T||_F>\sqrt{K+1}\epsilon\}|\bY, \bSigma_G,\bGamma_{0,G})\rightarrow 0$$ as $t \rightarrow \infty$.
\end{lemma}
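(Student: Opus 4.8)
The plan is to run the ratio argument behind Lemma~5.1 simultaneously over the $K$ nested row spaces $\bV(\bOmega)_{1:1}\subseteq\cdots\subseteq\bV(\bOmega)_{1:K}$, for the fixed $\epsilon$ of the hypotheses. Abbreviate $M_k:=(\bSigma_G^{(k)})^{-1/2}\bB_{0,G}^{(k)}\bK(\bOmega_{0,n})_{1:k,1:k}$, put $A_k=\{\|\bV(\bOmega_{0,n})_{1:k}^{\bot}\bV(\bOmega)_{1:k}^T\|_F>\sqrt{K+1}\,\epsilon\}$ (so the event in the statement is $\bigcup_{k=1}^K A_k$), and let $F=\bigcap_{k=1}^K\{\|\bV(\bOmega_{0,n})_{1:k}^{\bot}\bV(\bOmega)_{1:k}^T\|_F<\delta_k\}$ with $\delta_k:=\epsilon\,\min_{k'}\lambda_{min}(M_{k'})/\lambda_{max}(M_k)$; this $F$ is exactly the ``near-truth'' set appearing in Condition~2 with $\bV_0=\bV(\bOmega_{0,n})$. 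Since $\min_{k'}\lambda_{min}(M_{k'})\le\lambda_{min}(M_k)\le\lambda_{max}(M_k)$ one has $\delta_k\le\epsilon<\sqrt{K+1}\,\epsilon$, so $F$ is disjoint from every $A_k$; the elementary bound used in the proof of Lemma~5.1 then yields $P(\bigcup_kA_k\mid\bY,\bSigma_G,\bGamma_{0,G})\le\big(1+P(F\mid\cdot)/P(\bigcup_kA_k\mid\cdot)\big)^{-1}$, and everything reduces to showing this ratio diverges along $(n_t,G_t)$.

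The algebraic ingredient I would establish first is that the generalized lower triangular structure of $\bGamma_{0,G}$ decouples the exponent of (\ref{posterior_V2}) block by block. For a row index $j$ in the $k$-th block ($l_{0,k}\le j\le l_{0,k+1}-1$), the loading row $(\bB_{0,G})_{j\cdot}$ is supported on columns $1,\dots,k$, so the noiseless datum $\bY_{j\cdot}$ lies in the row space of $\bOmega_{0,n,1:k}=\bK(\bOmega_{0,n})_{1:k,1:k}\bV(\bOmega_{0,n})_{1:k}$; hence $\|\mathcal{P}_{\bV(\bOmega)_{1:k}}(\bY_{j\cdot})\|^2=\|\bY_{j\cdot}\|^2-\|\mathcal{P}_{\bV(\bOmega)_{1:k}^{\bot}}(\bY_{j\cdot})\|^2$, and summing the residual over the block reproduces the Lemma~5.1 identity
\[
\sum_{j=l_{0,k}}^{l_{0,k+1}-1}\frac{1}{2\sigma_j^2}\,\|\mathcal{P}_{\bV(\bOmega)_{1:k}^{\bot}}(\bY_{j\cdot})\|^2 = \tfrac12\big\|\bV(\bOmega)_{1:k}^{\bot}\bV(\bOmega_{0,n})_{1:k}^T M_k^T\big\|_F^2,
\]
which lies between $\tfrac12\lambda_{min}(M_k)^2$ and $\tfrac12\lambda_{max}(M_k)^2$ times $\|\bV(\bOmega_{0,n})_{1:k}^{\bot}\bV(\bOmega)_{1:k}^T\|_F^2$ (using the symmetry $\|\bV(\bOmega)_{1:k}^{\bot}\bV(\bOmega_{0,n})_{1:k}^T\|_F=\|\bV(\bOmega_{0,n})_{1:k}^{\bot}\bV(\bOmega)_{1:k}^T\|_F$). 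Plugging this into (\ref{posterior_V2}) writes the posterior of $\bV(\bOmega)$ as $\propto\exp(C_0-R(\bOmega))\,m(d\bV(\bOmega))$, where $C_0=\sum_{k}\sum_{j=l_{0,k}}^{l_{0,k+1}-1}\tfrac{1}{2\sigma_j^2}\|\bY_{j\cdot}\|^2$ is free of $\bOmega$ and $R(\bOmega)=\sum_{k=1}^KR_k(\bOmega)\ge0$, each $R_k$ a function of $\bV(\bOmega)_{1:k}$ alone and controlled as above.

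I would then estimate the two probabilities directly, the factor $e^{C_0}$ and the normalizing constant cancelling in the ratio. Set $\Lambda:=\min_{k'}\lambda_{min}(M_{k'})^2$. On $F$, the choice of $\delta_k$ gives $\delta_k^2\lambda_{max}(M_k)^2=\epsilon^2\Lambda$, hence $R(\bOmega)<\tfrac{K}{2}\epsilon^2\Lambda$ there and $P(F\mid\cdot)\gtrsim m_n(F)\,e^{-K\epsilon^2\Lambda/2}$; on $A_k$, dropping all but the $k$-th summand of $R$ gives $R(\bOmega)\ge R_k(\bOmega)>\tfrac{K+1}{2}\epsilon^2\Lambda$, hence $P(\bigcup_kA_k\mid\cdot)\le\sum_kP(A_k\mid\cdot)\lesssim K\,e^{-(K+1)\epsilon^2\Lambda/2}$. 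Dividing, the ratio is $\ge K^{-1}m_n(F)\,e^{\epsilon^2\Lambda/2}=K^{-1}\exp\big(\tfrac12\epsilon^2\Lambda-(-\log m_n(F))\big)$, which tends to infinity because Condition~2 says $-\log m_n(F)=o(\epsilon^2\Lambda)$ and Condition~1 says $\Lambda\to\infty$ along $(n_t,G_t)$. This gives $P(\bigcup_kA_k\mid\bY,\bSigma_G,\bGamma_{0,G})\to0$, as claimed.

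The step I expect to be the main obstacle is the middle one — checking rigorously that the generalized lower triangular structure decouples the likelihood into the block sums $R_k$ depending only on $\bV(\bOmega)_{1:k}$, and then bookkeeping the constants (the ``$K$ versus $K+1$'' margin in the two exponents and the precise radii $\delta_k$) so that a strictly positive multiple of $\epsilon^2\Lambda$ survives in the exponent of the final ratio. Once that is in place, the remainder is a union bound over $k$ together with the two hypotheses, exactly paralleling the single-subspace argument of Lemma~5.1.
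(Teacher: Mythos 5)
Your argument is correct and follows essentially the same route as the paper's proof: your set $F$ is exactly the paper's $S_1$, your block identity for $\sum_j\frac{1}{2\sigma_j^2}\|\mathcal{P}_{\bV(\bOmega)_{1:k}^{\bot}}(\bY_{j\centerdot})\|^2$ is the one used in the paper's displays, and the two exponential bounds with the $\tfrac{K}{2}$ versus $\tfrac{K+1}{2}$ margin reproduce the paper's comparison of $P(S_1\mid\cdot)$ and $P(S_2\mid\cdot)$. The only cosmetic difference is that you apply a union bound over the events $A_k$ (costing a harmless factor of $K$) where the paper bounds the measure of a single slightly larger set $S_2$ by one.
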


\begin{proof}
We know that for $f(n,G)=\epsilon \min_{k'} \lambda_{min}((\bSigma_G^{(k')})^{-1/2}\bB_{0,G}^{(k')}\bK(\bOmega_{0,n})_{1:k'})$:

1'. $f(n,G)$ goes to infinity.

2'. Let $\bV_0$ be a fixed $K \times n$ orthonormal matrix, 
$$-\text{log}(m_n(\bigcap_{k=1}^K\Big\{\bV: ||(\bV_0)_{1:k}^{\bot} \bV_{1:k}^T||_F< \frac{f(n,G)}{\lambda_{max}((\bSigma_G^{(k)})^{-1/2}\bB_{0,G}^{(k)}\bK(\bOmega_{0,n})_{1:k})}\Big\}))=o(f(n,G)^2).$$
Define two disjoint set $S_1$ and $S_2$ as following $$S_1=\bigcap_{k=1}^K\Big\{\bV: ||\bV(\bOmega_{0,n})_{1:k}^{\bot} \bV_{1:k}^T||_F< \frac{f(n,G)}{\lambda_{max}((\bSigma_G^{(k)})^{-1/2}\bB_{0,G}^{(k)}\bK(\bOmega_{0,n})_{1:k})}\Big\}$$  
$$S_2= \bigcup_{k=1}^K\Big\{\bV:||\bV(\bOmega_{0,n})_{1:k}^{\bot}\bV_{1:k}^T||_F>\frac{\sqrt{K+1}f(n,G)}{\lambda_{min}((\bSigma_G^{(k)})^{-1/2}\bB_{0,G}^{(k)}\bK(\bOmega_{0,n})_{1:k})}\Big\}$$
Similar as (\ref{35}), we can compute:
\begin{equation}\label{51}
\begin{split}
&P(\bV(\bOmega) \in S_1|\bY,\bSigma_G,\bGamma_{0,G})\\
=& C \int_{S_1} \exp\Big(-\frac{1}{2}\sum_{k=1}^K ||\bV(\bOmega)_{1:k}^{\bot}\bV(\bOmega_{0,n})_{1:k}^T\bK(\bOmega_{0,n})_{1:k}^T(\bB_{0,G}^{(k)})^T(\bSigma_G^{(k)})^{-1/2}||_F^2\Big)m(d \bV(\bOmega))\\
\geq& C \cdot m_n(S_1) \exp(-\frac{K}{2}f(n,G)^2)
\end{split}
\end{equation}
\begin{equation}\label{52}
\begin{split}
&P(\bV(\bOmega) \in S_2|\bY,\bSigma_G,\bGamma_{0,G})\\
=& C \int_{S_2} \exp\Big(-\frac{1}{2}\sum_{k=1}^K ||\bV(\bOmega)_{1:k}^{\bot}\bV(\bOmega_{0,n})_{1:k}^T\bK(\bOmega_{0,n})_{1:k}^T(\bB_{0,G}^{(k)})^T(\bSigma_G^{(k)})^{-1/2}||_F^2\Big)m(d \bV(\bOmega))\\
\leq& C \cdot m_n(S_2) \exp(-\frac{K+1}{2}f(n,G)^2)
\end{split}
\end{equation}Combine (\ref{51}) and (\ref{52}), we have:
\begin{equation}
\frac{P(\bV(\bOmega) \in S_1|\bY,\bSigma_G,\bGamma_{0,G})}{P(\bV(\bOmega) \in S_2|\bY,\bSigma_G,\bGamma_{0,G})} \geq m_n(S_1) \exp(\frac{1}{2}f(n,G)^2)
\end{equation}
From condition 2', the right hand side goes to infinity for the increasing pair $(n,G)=\{(n_t,G_t)\}_{t=1,\cdots}$ as $t \to \infty$, thus
$$P(\bV(\bOmega) \in S_2|\bY,\bSigma_G,\bGamma_{0,G})\rightarrow 0.$$
Therefore, $$P(\bigcup_{k=1}^K \{ \bV: ||\bV(\bOmega_{0,n})_{1:k}^{\bot}\bV(\bOmega)_{1:k}^T||_F>\sqrt{K+1}\epsilon\}|\bY, \bSigma_G,\bGamma_{0,G})\rightarrow 0 \text{ as } t \to \infty.$$
\end{proof}

\section{Additional figures}
\subsection{The AGEMAP dataset}

\begin{figure}[!htb]
     \centering
     \subfloat[The SpSL-orthonormal factor model]{\includegraphics[width=0.74\textwidth]{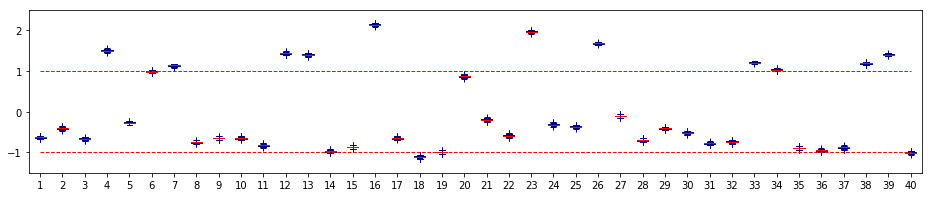}}

     \subfloat[The modified Ghosh-Dunson model]{\includegraphics[width=0.74\textwidth]{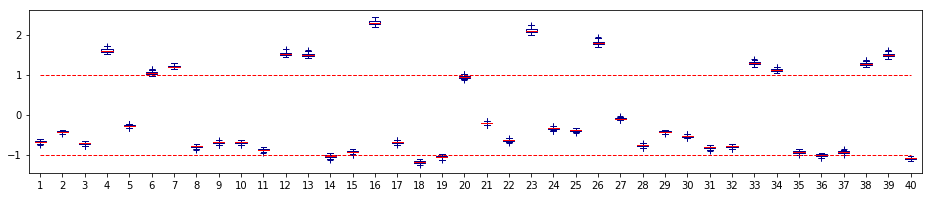}}
     \caption{Boxplots of posterior samples of the latent factors under specified models.}
\end{figure}

\begin{figure}[!htb]
     \centering
     \subfloat[The SpSL-orthonormal factor model]{\includegraphics[width=0.74\textwidth]{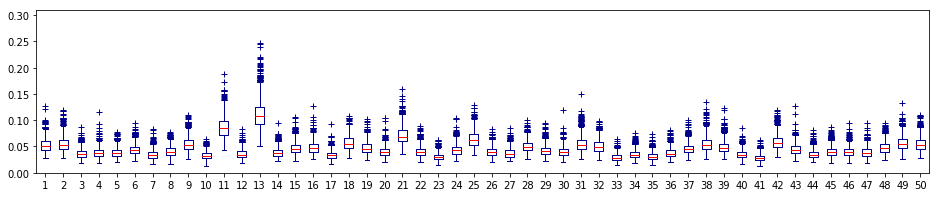}}

     \subfloat[The modified Ghosh-Dunson model]{\includegraphics[width=0.74\textwidth]{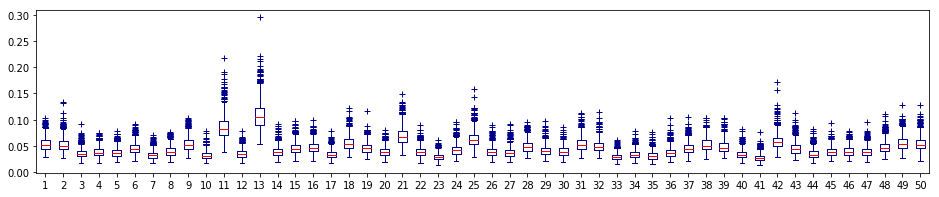}}
     \caption{Boxplots of posterior samples of the first 50 entries of idiosyncratic variances under specified models.}
\end{figure}

\begin{figure}[!htb]
     \centering
     \subfloat[The SpSL-orthonormal factor model]{\includegraphics[width=0.74\textwidth]{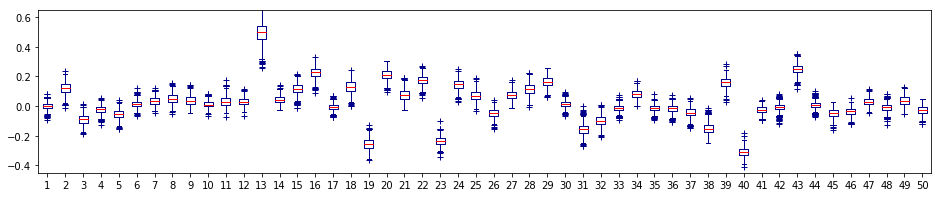}}

     \subfloat[The modified Ghosh-Dunson model]{\includegraphics[width=0.74\textwidth]{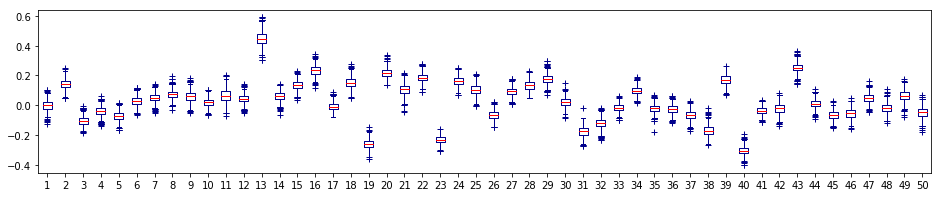}}
     \caption{Boxplots of posterior samples of the first 50 entries of the loading vector under specified models.}
\end{figure}


\newpage
\subsection{The synthetic example}
\begin{figure}[!htb]
\centering
\includegraphics[width=0.85\textwidth]{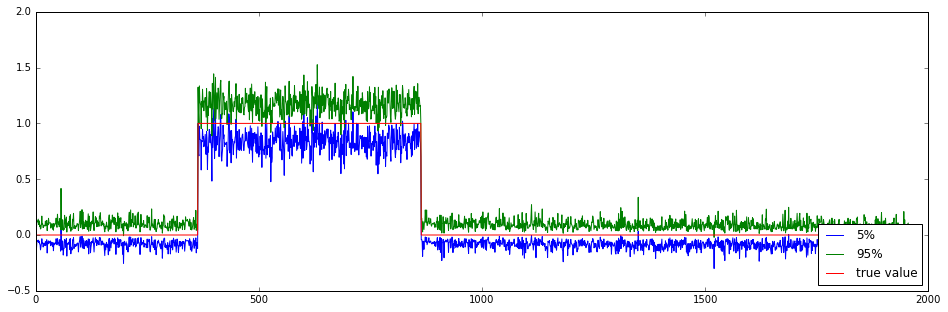}  
\includegraphics[width=0.85\textwidth]{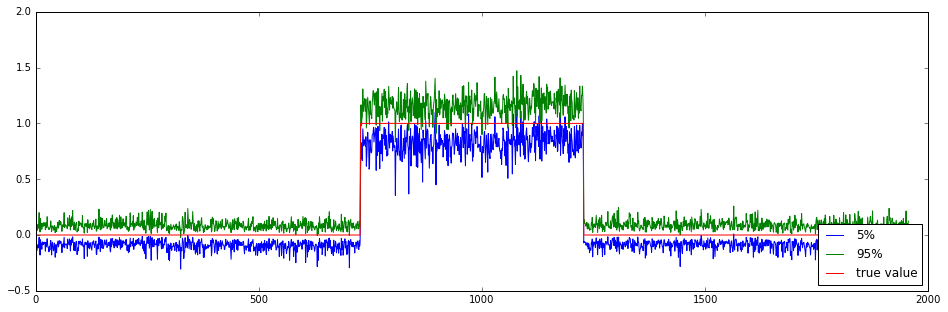}  
\includegraphics[width=0.85\textwidth]{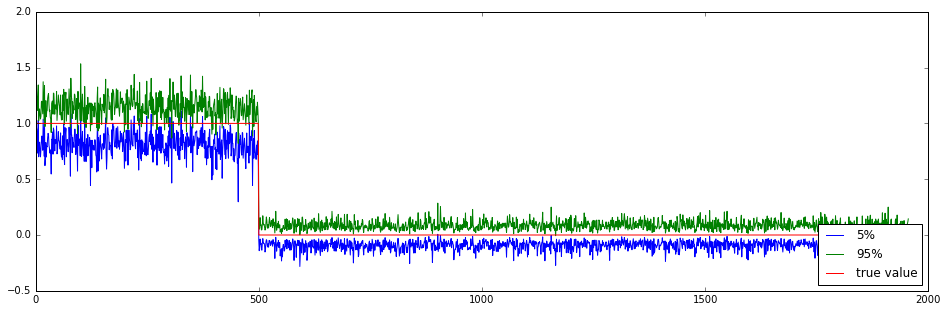}  
\includegraphics[width=0.85\textwidth]{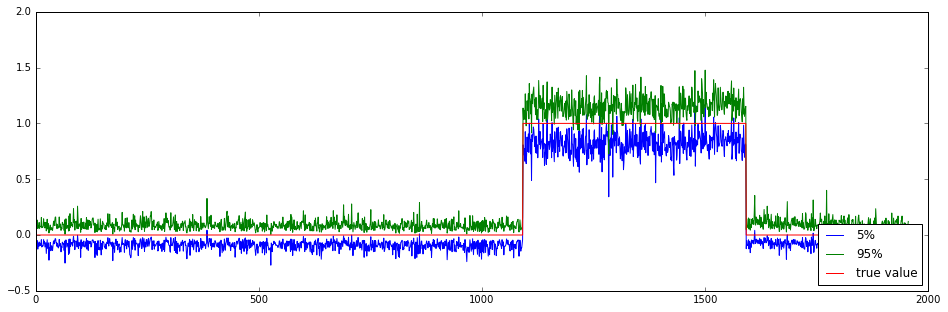}  
\includegraphics[width=0.85\textwidth]{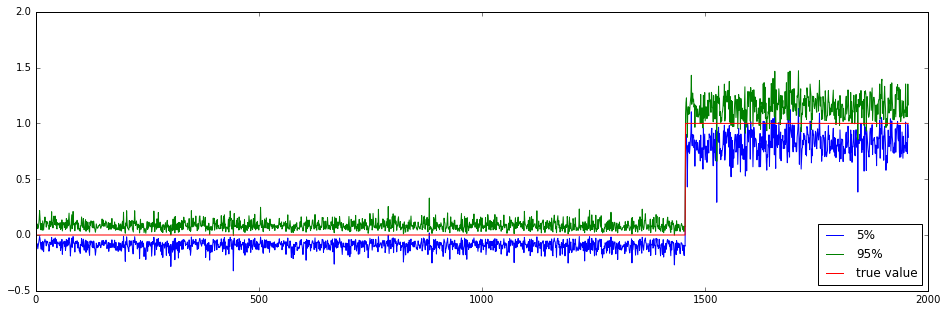}  
\caption{The $90\%$ credible intervals for elements in the first five columns of the loading matrix under $\sqrt{n}$-orthonormal factor model for the synthetic example, $\lambda_0=20,\ \lambda_1=0.001.$}
\end{figure}
\clearpage
\begin{figure}[!htb]
\centering
\includegraphics[width=0.85\textwidth]{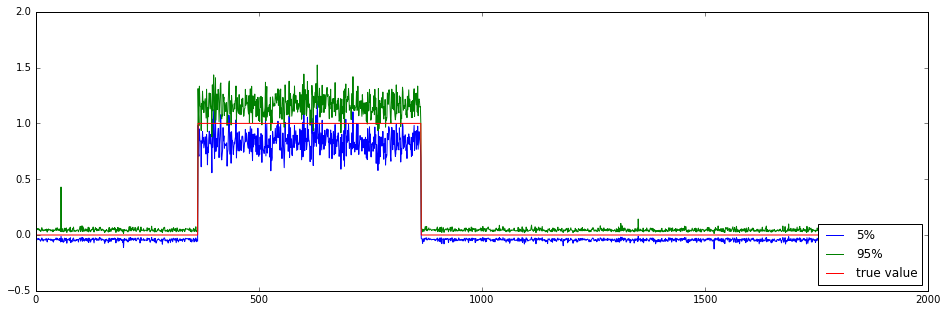} 
\includegraphics[width=0.85\textwidth]{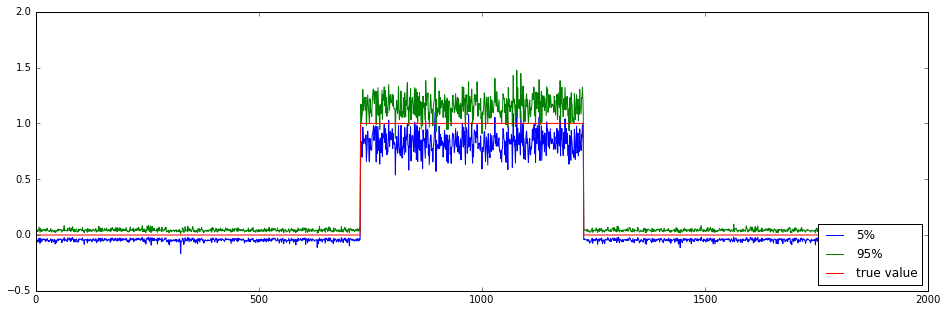}  
\includegraphics[width=0.85\textwidth]{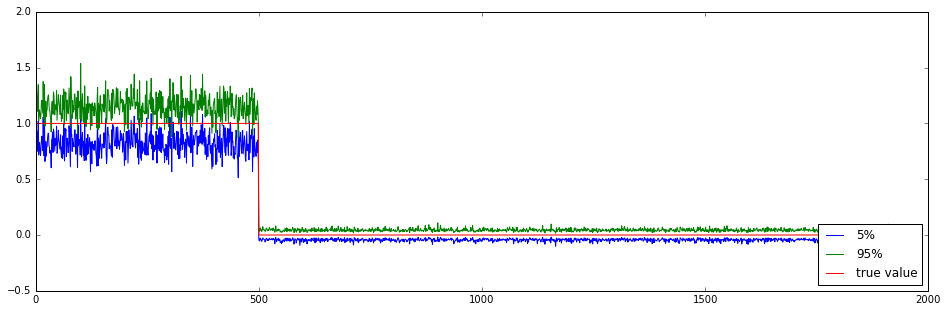}  
\includegraphics[width=0.85\textwidth]{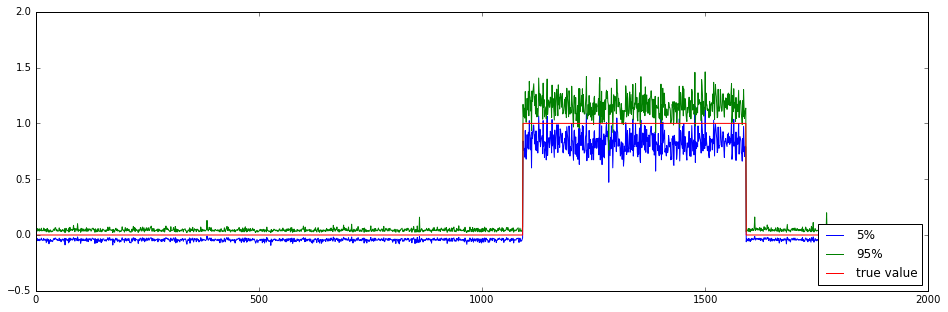}  
\includegraphics[width=0.85\textwidth]{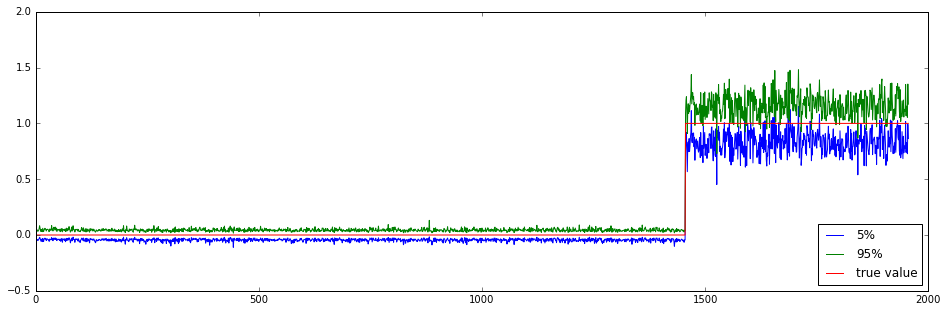}  
\caption{The $90\%$ credible intervals for elements in the first five columns of the loading matrix under the $\sqrt{n}$-orthonormal factor model for the synthetic example, $\lambda_0=50,\ \lambda_1=0.001.$}
\end{figure}
\clearpage
\begin{figure}[!htb]
\centering
\includegraphics[width=0.85\textwidth]{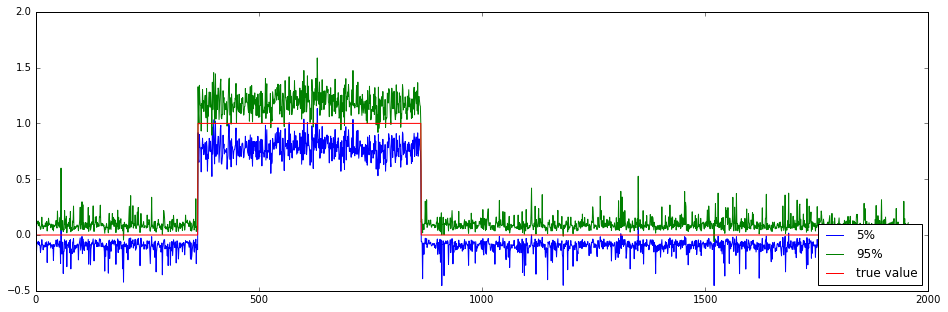}  
\includegraphics[width=0.85\textwidth]{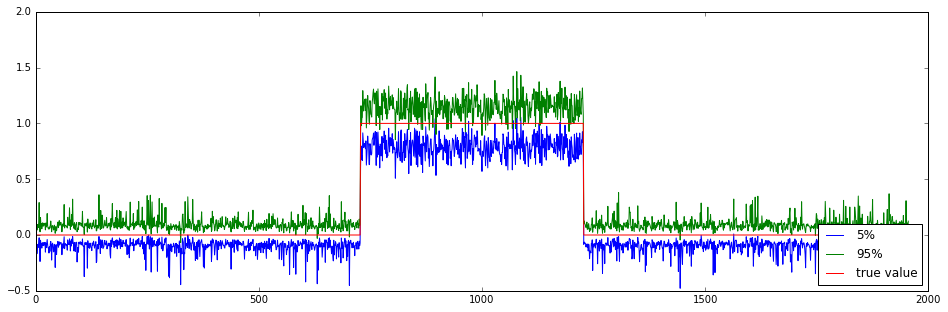}  
\includegraphics[width=0.85\textwidth]{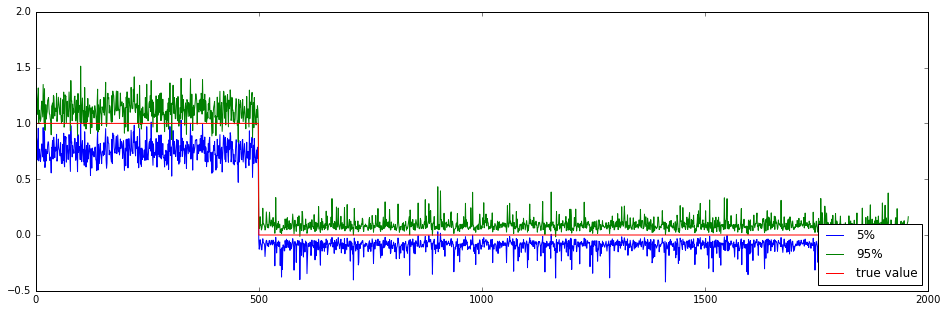}  
\includegraphics[width=0.85\textwidth]{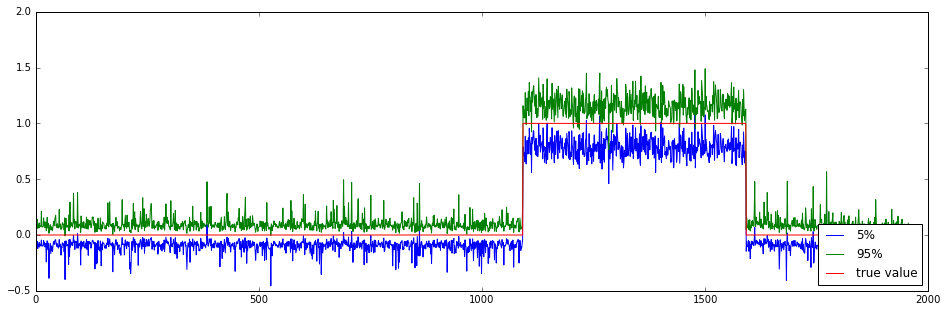}  
\includegraphics[width=0.85\textwidth]{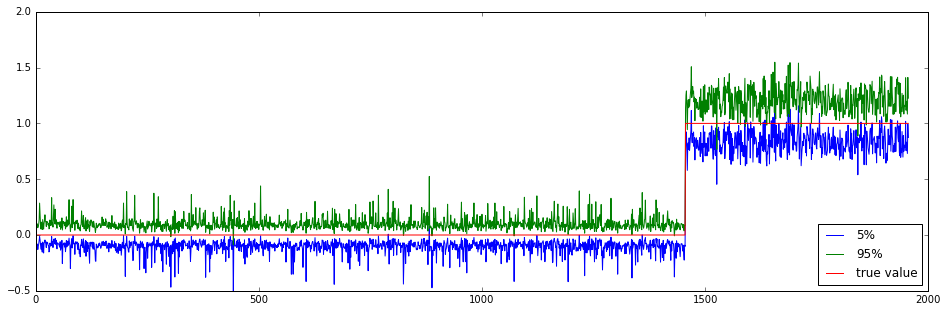}  
\caption{The $90\%$ credible intervals for elements in the first five columns of the loading matrix using modified Ghosh-Dunson model for the synthetic example.}
\end{figure}
\clearpage
\begin{figure}[!htb]
\centering
\includegraphics[width=0.85\textwidth]{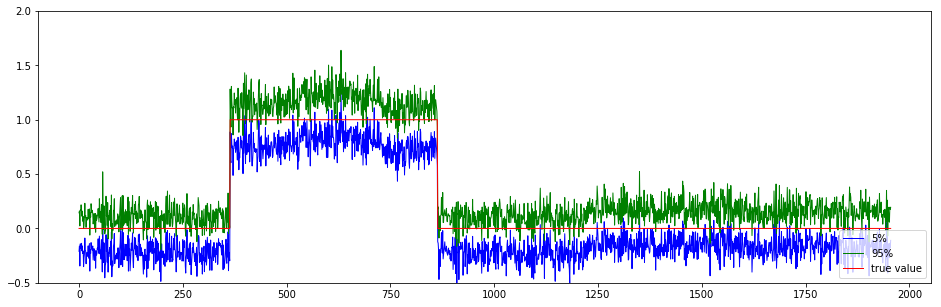}  
\includegraphics[width=0.85\textwidth]{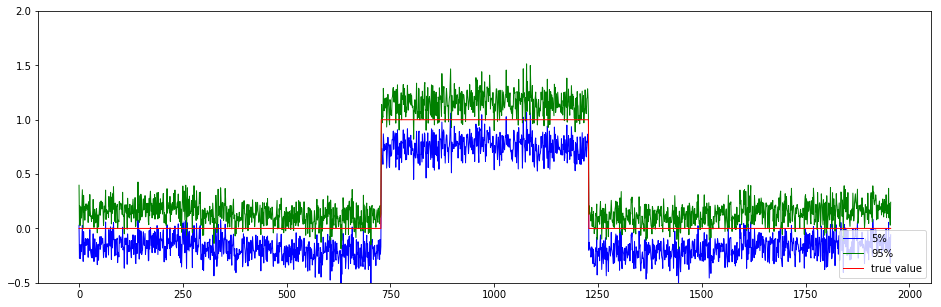}  
\includegraphics[width=0.85\textwidth]{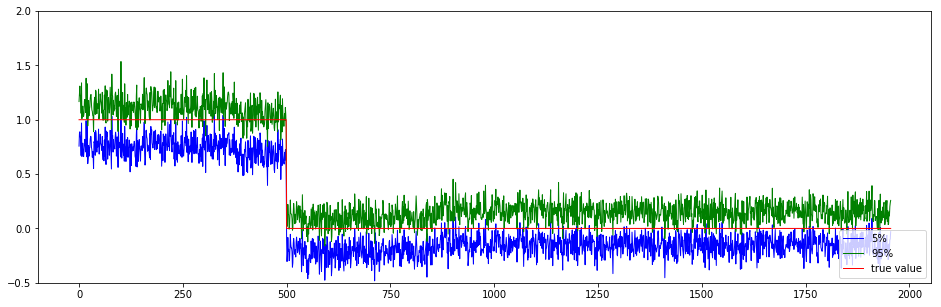}  
\includegraphics[width=0.85\textwidth]{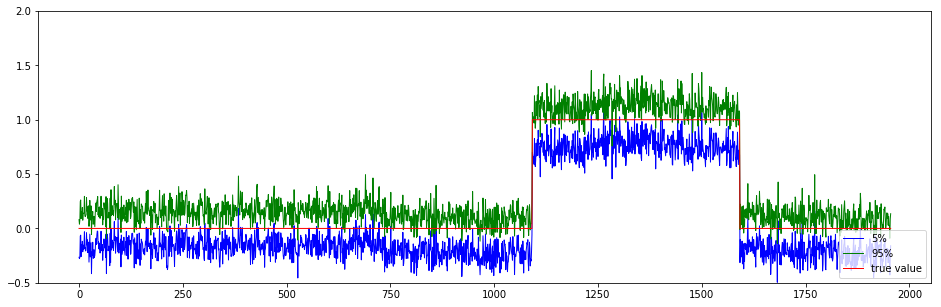}  
\includegraphics[width=0.85\textwidth]{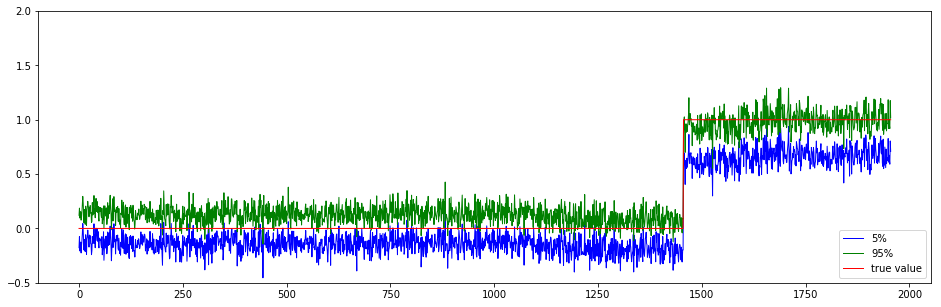}  
\caption{The $90\%$ credible interval for elements in the first five columns of the loading matrix under the  model from \cite{Bhattacharya2011Sparse} for the synthetic example.}
\end{figure}

\clearpage
\begin{figure}[!htb]
\centering
\includegraphics[width=0.85\textwidth]{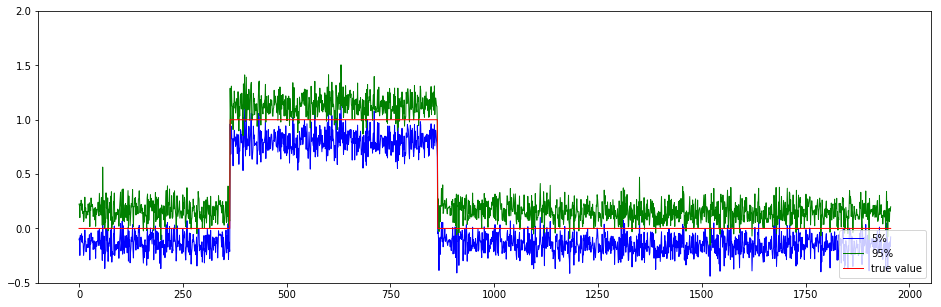}  
\includegraphics[width=0.85\textwidth]{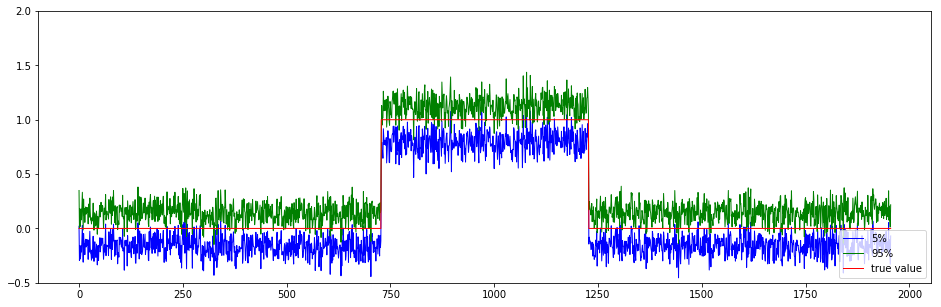}  
\includegraphics[width=0.85\textwidth]{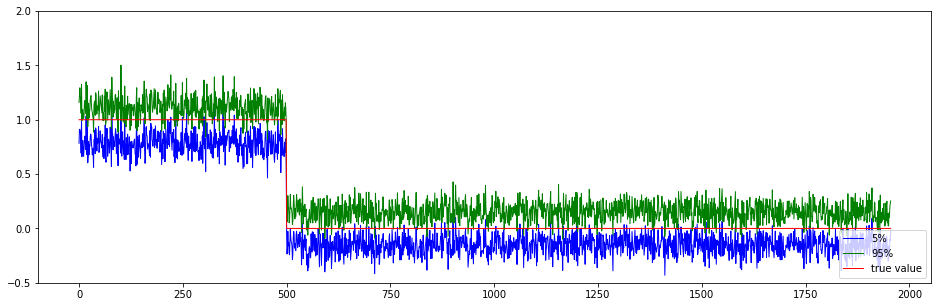}  
\includegraphics[width=0.85\textwidth]{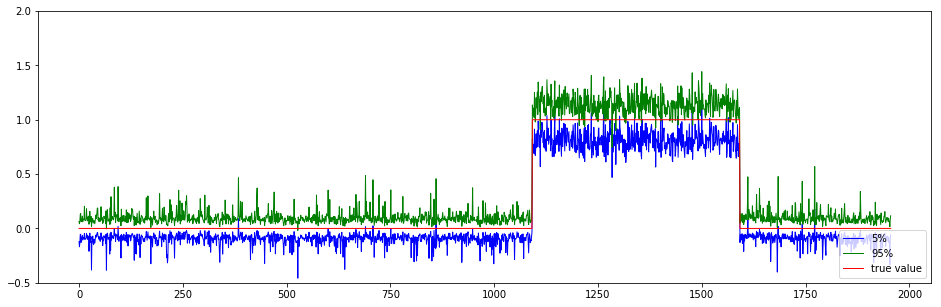}  
\includegraphics[width=0.85\textwidth]{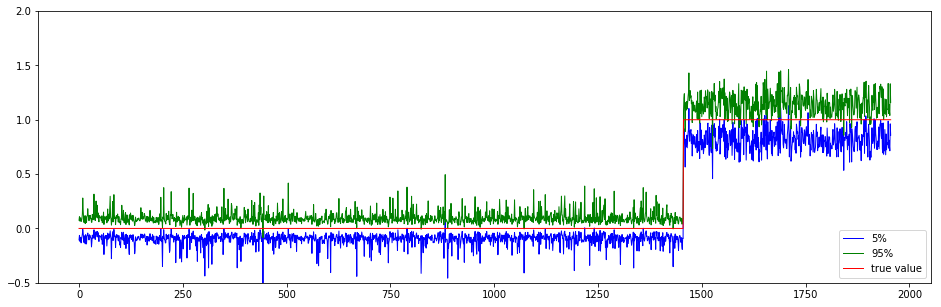}  
\caption{The $90\%$ credible interval for elements in the first five columns of the loading matrix under the modified Ghosh-Dunson model with $\sqrt{n}$-orthonormal factors.}
\end{figure}

\clearpage
\begin{figure}[!htb]
\centering
\includegraphics[width=0.85\textwidth]{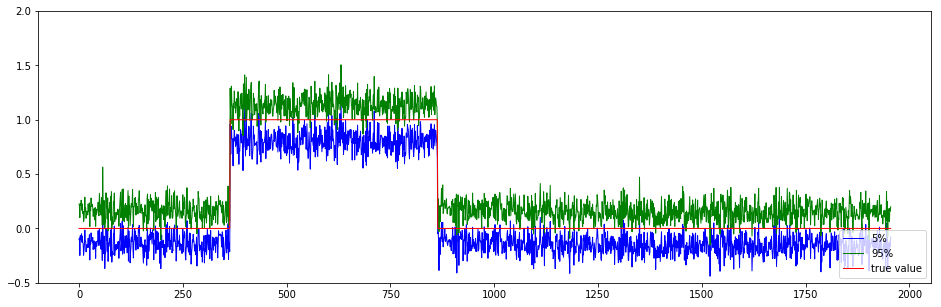}  
\includegraphics[width=0.85\textwidth]{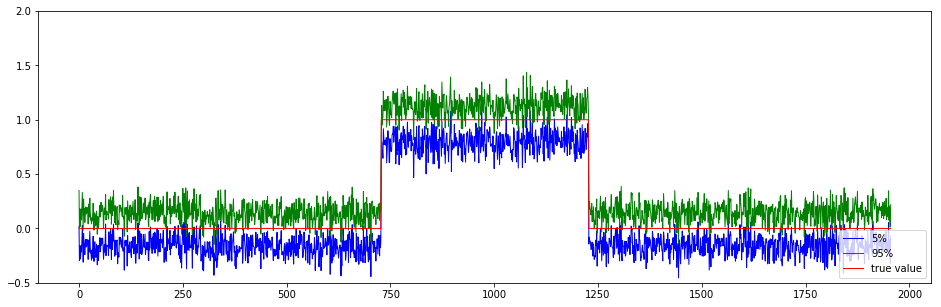}  
\includegraphics[width=0.85\textwidth]{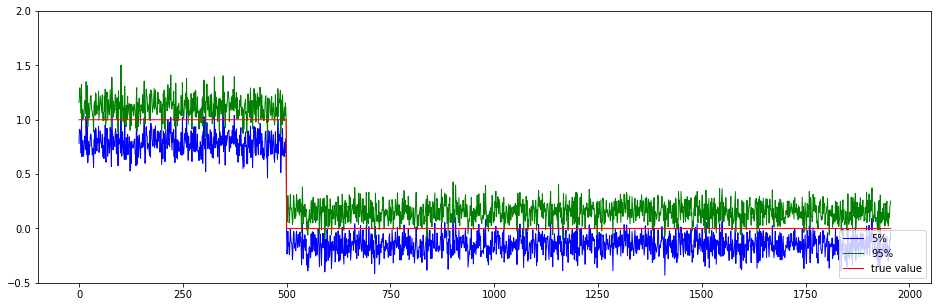}  
\includegraphics[width=0.85\textwidth]{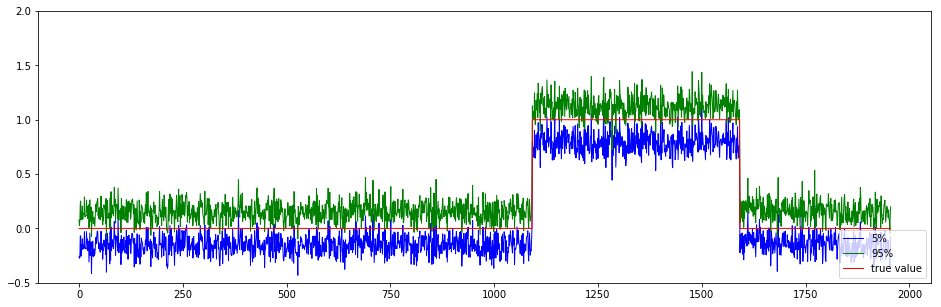}  
\includegraphics[width=0.85\textwidth]{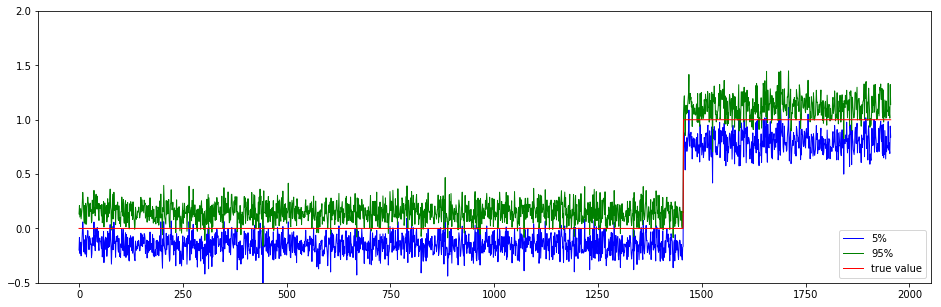}  
\caption{$90\%$ credible interval for elements in first five columns of loading matrix using model from \cite{Bhattacharya2011Sparse} with $\sqrt{n}$-orthonormal factors.}
\end{figure}
\end{appendix}
\end{document}